\newif\iflongVersion
\declaretheoremstyle[
]{mystyle}
\numberwithin{equation}{section}
\declaretheorem[name=Lemma,title=Lemma,sharenumber=lemma,style=mystyle]{nlem}
\bfseries\color{green!40!black},
\itshape\color{purple!40!black},
\newcommand*{\SavedLstInline}{}
\LetLtxMacro\SavedLstInline\lstinline
\DeclareRobustCommand*{\lstinline}{%
  \ifmmode
    \let\SavedBGroup\bgroup
    \def\bgroup{%
      \let\bgroup\SavedBGroup
      \hbox\bgroup
    }%
  \fi
  \SavedLstInline
}
\newcommand{\dnote}[1]{}
\newcommand{\lnote}[1]{}
\newcommand{\mnote}[1]{}
\newcommand{\dnote}[1]{\textcolor{blue}{\textcolor{blue}{Dom: #1}}}
\newcommand{\lnote}[1]{\textcolor{olive}{\textcolor{olive}{Li-yao: #1}}}
\newcommand{\mnote}[1]{\textcolor{red}{\textcolor{red}{Meng: #1}}}
\newcommand{\ie}{i.e.}
\newcommand{\eg}{e.g.}
\let\olddefinition\definition
\renewcommand{\definition}{\olddefinition\normalfont}
\let\oldcorollary\corollary
\renewcommand{\corollary}{\oldcorollary\normalfont}
\let\oldlemma\lemma
\renewcommand{\lemma}{\oldlemma\normalfont}
\let\oldproposition\proposition
\renewcommand{\proposition}{\oldproposition\normalfont}
\let\oldtheorem\theorem
\renewcommand{\theorem}{\oldtheorem\normalfont}
\let\oldprop\prop
\renewcommand{\prop}{\oldprop\normalfont}
\let\oldthm\thm
\renewcommand{\thm}{\oldthm\normalfont}
\let\oldlem\lem
\renewcommand{\lem}{\oldlem\normalfont}
\renewcommand{\eqref}[1]{(\ref{#1})}
\title{Composing bidirectional programs monadically\iflongVersion*\fi}
\author{Li-yao Xia\inst{1} \and Dominic Orchard\inst{2} \and Meng Wang\inst{3}}
\institute{University of Pennsylvania \and University of Kent \and
  University of Bristol}
\begin{document}
\maketitle

\iflongVersion
% Reduce spacing
\vspace{-1.25em}
\fi

\begin{abstract}
  Software frequently converts data from one representation to another
and vice versa. Na\"{i}vely specifying both conversion directions separately
is error prone and introduces conceptual duplication. Instead, \emph{bidirectional
programming} techniques allow programs to be written which can be interpreted in
both directions. However, these techniques often employ unfamiliar programming idioms
via restricted, specialised combinator libraries.
Instead, we introduce a framework for composing bidirectional programs monadically,
enabling bidirectional programming with familiar abstractions in functional languages such as Haskell.
We demonstrate the generality of our approach applied to
parsers/printers, lenses, and generators/predicates. We show how
to leverage compositionality and equational reasoning for the
verification of \emph{round-tripping properties} for such monadic bidirectional
programs.

\end{abstract}

\iflongVersion
  % Add footnote
  \vspace{-2.5em}
  \renewcommand{\thefootnote}{*}
  \footnotetext{Published in the proceedings of ESOP 2019. This version
    includes the appendices.}
  \renewcommand{\thefootnote}{\arabic{footnote}}
  \addtocounter{footnote}{-1}
\fi

\section{Introduction}
\label{sec:introduction}
A \emph{bidirectional transformation} (BX) is a pair of mutually
related mappings between source and target data objects.  A well-known
example solves the \emph{view-update
  problem}~\citep{BaSp81} from relational database design. A
\emph{view} is a derived database table, computed from concrete
\emph{source} tables by a query. The problem is to map an update of
the view back to a corresponding update on the source tables. This is
captured by a bidirectional transformation.  The bidirectional pattern
is found in a broad range of applications, including
parsing~\citep{ReOs10,wang2013flippr}, refactoring~\citep{Schuster2016},
code generation~\citep{Pombrio2014, Mayer2018}, and model
transformation~\citep{Stevens2008} and XML transformation~\citep{Pacheco2014}.

When programming a bidirectional transformation, one can separately
construct the forwards and backwards functions.
However, this approach duplicates effort, is prone to error, and
causes subsequent maintenance issues.  These problems can be avoided
by using specialised programming languages that generate
both directions from a single
definition~\citep{FGMPS07,Voigtlander09bff,MaHNHT07}, a discipline
known as \emph{bidirectional programming}.

The most well-known language family for BX programming is \emph{lenses}~\citep{FGMPS07}. A lens
captures transformations between sources $S$ and views $V$ via a pair
of functions \verb|get| $: S \to V$ and \verb|put| $: V \to  S \to  S$. The
\verb|get| function extracts a view from a
source and \verb|put| takes an updated view and a
source as inputs to produce an updated source. The asymmetrical
nature of \verb|get| and \verb|put| makes it possible for \verb|put| to recover some of
the source data that is not present in the view. In other words,
\verb|get| does not have to be injective to have a corresponding~\verb|put|.

Bidirectional transformations typically respect
\emph{round-tripping} laws, capturing the extent to which the
transformations preserve information between the two data
representations. For example, \emph{well-behaved lenses}~\citep{bohannon2006relational,FGMPS07}
should satisfy:
\begin{equation*}
\texttt{put} \;  (\texttt{get} \; s)\; s  = s \quad\quad\quad\quad
\texttt{get} \; (\texttt{put} \; v \; s) = v
\end{equation*}
Lens languages are typically designed to enforce these
properties.
This focus on unconditional correctness inevitably leads to reduced
practicality in programming:
lens combinators
are often stylised and disconnected from established programming
idioms. In this paper, we instead focus on expressing
bidirectional programs directly, using
 monads as an interface for sequential composition.

Monads are a
 popular pattern~\citep{wadler1995monads} (especially in Haskell)
% sitting on top of Haskell's
%functor/applicative/monad framework,
which
combinator libraries in other
domains routinely exploit.
Introducing monadic composition to BX programming
significantly expands the expressiveness of BX languages and opens up
a route for programmers to explore the connection between BX programming and mainstream uni-directional programming.
Moreover, it appears that many applications of
bidirectional transformations (e.g., parsers and printers~\citep{wang2013flippr}) do not
share the lens \textit{get}/\textit{put} pattern, and as a result have
not been sufficiently explored. However, monadic
composition is known to be an effective way to construct at least one direction
of such transformations (e.g., parsers).

\paragraph{Contributions}
In this paper, we deliberately avoid the well-tried
approach of specialised lens languages, instead exploring a novel point in the BX design space based
on monadic programming, naturally reusing host language
constructs. We revisit lenses, and two more bidirectional patterns,
demonstrating how they can be subject to monadic programming.
By being uncompromising about
the monad interface, we expose the essential ideas behind our
framework whilst maximising its utility.
The trade off with our approach is that we can no longer enforce
correctness in the same way as conventional lenses:
our interface does not rule out all non-round-tripping BXs.
We tackle this issue by proposing a new compositional
reasoning framework that is flexible enough to characterise
a variety of round-tripping properties, and simplifies the necessary reasoning.

Specifically, we make the following contributions:
%
%\marginpar{\dnote{I don't think we can quite say there is a common
%    foundation to unite all, unless we are refering to promonad?}
%  \lnote{Maybe not a ``foundation'', but there still seems to be something in common
%  to all three beyond being promonads.}}
%
\begin{itemize}[leftmargin=1.1em]
  \item We describe a method to enable \emph{monadic
    composition} for bidirectional programs (Section~\ref{sec:monadic-profunctors}).
    Our approach is based on a construction which generates a
    \emph{monadic profunctor}, parameterised by
    two application-specific monads which are used
    to generate the \emph{forward} and \emph{backward} directions.

  \item To demonstrate the flexibility of our approach, we apply the above method to
    three different problem domains: parsers/printers
    (Section~\ref{sec:monadic-profunctors} and \ref{sec:compositionality}),
    lenses (Section~\ref{sec:lenses}), and generators/predicates for
    structured data (Section~\ref{sec:generators}).
    While the first two are well-explored areas in the bidirectional
    programming literature, the third one is a completely new
    application domain.

  \item We present a scalable reasoning framework, capturing notions
    of \emph{compositionality} for bidirectional properties
    (Section~\ref{sec:compositionality}).  We define classes of
    round-tripping properties inherent to bidirectionalism, which can
    be verified by following simple criteria. These principles are
    demonstrated with our three examples.
    We include some proofs
    for illustration in the paper. The supplementary material~\cite{lib} contains machine-checked Coq proofs for the main theorems. %All of these properties for
    %our example have been verified in the Coq proof assistant. These
    %mechanised proofs provide a schema for future researchers to reused.
   \iflongVersion
   \else
   An extended version of this manuscript~\cite{EXTENDED} includes additional
   definitions, proofs, and comparisons
   in its appendices.
   \fi

  \item We have implemented these ideas as Haskell libraries~\citep{lib}, %
%    \footnote{URL redacted for double-blind reviewing}
%\footnote{\url{https://github.com/Lysxia/profunctor-monad}},
    with two wrappers around \textsf{attoparsec} for parsers and printers,
    and \textsf{QuickCheck} for generators and predicates, showing the
    viability of our approach for real programs.
%    A similarly complete lens implementation is in development%
%    \footnote{The simple version of lenses presented here can also be found under that link.}.
\end{itemize}

\noindent
We use Haskell for concrete examples, but the programming
patterns can be easily expressed in many functional
languages. We use the Haskell notation of assigning type signatures to expressions
via an infix double colon ``\lstinline{::}''.

%%% Local Variables:
%%% mode: latex
%%% TeX-master: t
%%% End:

\subsection{Further Examples of BX}
\label{sec:background}
We introduced lenses briefly above. We
now introduce the other two examples used in this paper: \emph{parsers/printers}
and \emph{generators/predicates}.
%\vspace{-0.7em}

\paragraph{Parsing and printing}
Programming language tools (such as interpreters,
compilers, and refactoring tools) typically require two intimately linked
components: \emph{parsers} and \emph{printers}, respectively
mapping from source code to ASTs and back. A simple implementation of these
two functions can be given with types:
\begin{edoc}
      parser :: String -> AST           printer :: AST -> String
\end{edoc}
%
%\mnote{I am not sure about the laws below. Why
%is parsing the left inverse?}
%\lnote{We print a value, then we parse it. Getting the same value back guarantees
%that the printer produces a syntactically valid string, and that this string
%actually represents the value being printed.}
%\mnote{But one can print different ASTs to the same string. A typical example is "1+2+3".}
Parsers and printers are rarely actual inverses to each other, but instead
typically exhibit a variant of round-tripping such as:
\begin{equation*}
\lstinline{parser} \,\circ\, \lstinline{printer} \,\circ\, \lstinline{parser} \,\equiv\,
\lstinline{parser} \qquad
\lstinline{printer} \,\circ\, \lstinline{parser} \,\circ\, \lstinline{printer} \,\equiv\, \lstinline{printer}
\end{equation*}
The left equation describes the common situation that
parsing discards information about source code, such as
whitespace, so that printing the resulting AST does not recover the
original source. However, printing retains enough information
such that parsing the printed output yields an AST which is equivalent to
the AST from parsing the original source. The right equation
describes the dual: printing may map different ASTs to the
same string. For example, printed code $1+2+3$ might be produced
by left- and right-associated syntax trees.

For particular AST subsets, printing and parsing may actually be left- or right- inverses to
each other. %, but the above two laws capture the wider pattern of interaction
%seen between printers and parsers.
Other characterisations are also
possible, e.g., with equivalence classes of ASTs (accounting for
reassociations).
%or
%of source code strings (that account
%for whitespace and syntactic sugar), providing inverses.
%
Alternatively, parsers and printers may satisfy properties
about the interaction of partially-parsed inputs with the printer
and parser, \eg{}, if \lstinline{parser :: String -> (AST, String)}:
\begin{code}
(let (x, s') = parser s in parser ((printer x) ++ s'))  equiv  parser s
\end{code}
Thus, parsing and printing follows a pattern
of inverse-like functions which does not fit
the lens paradigm. The pattern resembles lenses between a source
(source code) and view (ASTs), but with a compositional notion for
the source and partial ``gets'' which consume some of the source,
leaving a remainder. \\[-0.75em]

Writing parsers and printers by hand is often tedious due to the redundancy
implied by their inverse-like relation.
Thus, various approaches have been
proposed for reducing the effort of writing parsers/printers by generating
both from a common definition~\cite{ReOs10,wang2013flippr, Matsuda2018}.

\paragraph{Generating and checking}
Property-based testing (\eg{}, QuickCheck)~\citep{Claessen2000}
expresses program properties as executable predicates. For instance,
the following property checks that an insertion function \lstinline{insert}, given a
sorted list --- as checked by the predicate
\lstinline{isSorted :: [Int] -> Bool} --- produces another sorted list.
The combinator \lstinline{==>} represents implication for properties.
\begin{code}
propInsert :: Int -> [Int] -> Property
propInsert val list = isSorted list ==> isSorted (insert val list)
\end{code}
To test this property, a testing framework generates random inputs for
\lstinline{val} and
\lstinline{list}. The implementation of \lstinline{==>} applied here
first checks whether \lstinline{list} is sorted, and if it is,
checks that \lstinline{insert val list} is sorted as well. This process is repeated
with further random inputs until either a counterexample is found or a predetermined number of test cases
pass.

However, this na\"{i}ve method is inefficient: many properties
such as \lstinline{propInsert} have preconditions which are satisfied by an
extremely small fraction of inputs. In this case, the ratio of sorted lists
among lists of length $n$ is inversely proportional to $n!$, so most generated
inputs will be discarded for not satisfying the \lstinline{isSorted} precondition.
Such tests give no information about the validity of the predicate being tested
and thus are prohibitively inefficient.

When too many inputs are being discarded, the user must instead supply the
framework with \emph{custom generators} of values satisfying the precondition:
\lstinline{genSorted :: Gen [Int]}.

One can expect two complementary properties of such a generator.
A generator is \emph{sound} with respect to the predicate \lstinline{isSorted}
if it generates only values satisfying \lstinline{isSorted}; soundness means that no
tests are discarded, hence the tested property is better exercised.
A generator is \emph{complete} with respect to \lstinline{isSorted} if it can
generate all satisfying values; completeness ensures the correctness of testing
a property with \lstinline{isSorted} as a precondition, in the sense that if there
is a counterexample, it will be eventually generated. In this setting of testing,
completeness, which affects the potential adequacy of testing, is arguably more important than soundness, which affects only efficiency.

%This establishes a close correspondence between generator and predicate.
It is clear that generators and predicates are
closely related, forming a pattern similar to that of
bidirectional transformations. Given that good
generators are usually difficult to construct,
the ability to extract both from a common specification
with bidirectional programming is a very attractive
alternative.

\paragraph{Roadmap}
We begin by outlining a concrete example
of our monadic approach via parsers and printers (Section~\ref{sec:biparsers}), before explaining the general approach of using \emph{monadic profunctors} to structure bidirectional programs (Section~\ref{sec:monadic-profunctors}).
Section~\ref{sec:compositionality} then presents a
compositional reasoning framework for monadic bidirectional programs, with
varying degrees of strength adapted to different round-tripping properties.
We then replay the developments of the earlier sections to define lenses
as well as generators and predicates in Sections~\ref{sec:lenses}
and~\ref{sec:generators}.

%%% Local Variables:
%%% mode: latex
%%% TeX-master: t
%%% End:

\section{Monadic bidirectional programming}
\label{sec:biparsers}
\iffalse
\begin{code}
{-# LANGUAGE InstanceSigs #-}
module Biparsers where

import Prelude hiding (print)
import Data.Char (isDigit, isSpace)
import Data.List (span, (++), length, head, tail)
import Profunctors hiding (comap, P, print, parse, Biparser, char)

\end{code}
\fi
%

A bidirectional parser, or \emph{biparser}, combines both a parsing
direction and printing direction. Our first novelty here
is to express biparsers monadically.

In code samples, we use the Haskell pun of naming
variables after their types, e.g., a variable of some abstract type
\lstinline{v} will also be called \lstinline{v}. Similarly, for some
type constructor \lstinline{m}, a variable of type \lstinline{m v} will
be called \lstinline{mv}. A function \lstinline{u -> m v} (a
Kleisli arrow for a monad \lstinline{m}) will be called
\lstinline{kv}.

\paragraph{Monadic parsers}

The following data type provides the standard way to describe parsers of values
of type \lstinline{v} which may consume only part of
the input string:
\begin{edoc}
data Parser v = Parser { parse :: String -> (v, String) }
\end{edoc}
It is well-known that such parsers are
monadic~\citep{wadler1995monads}, %Hutton1998},
i.e., they have a notion of monadic sequential
composition embodied by the interface:
\begin{edoc}
instance Monad Parser where
  (>>=) :: Parser v -> (v -> Parser w) -> Parser w
  return :: v -> Parser v
\end{edoc}
The sequential composition operator \lstinline{(>>=)}, called \emph{bind},
describes the scheme of constructing a parser by sequentially
composing two sub-parsers where the second depends on the output of
the first; a parser of \lstinline{w} values is made up of a parser of \lstinline{v}
and a parser of \lstinline{w} that depends on the previously
parsed \lstinline{v}. Indeed, this is the implementation
given to the monadic interface:
\begin{edoc}
  pv >>= kw = Parser (\s -> let (v, s') = parse pv s in parse (kw v) s')
  return v  = Parser (\s -> (v, s))
\end{edoc}
Bind first runs the parser \lstinline{pv} on an input
string \lstinline{s}, resulting in a value \lstinline{v} which is used to create the
parser \lstinline{kw v}, which is in turn run on the remaining input
\lstinline{s'} to produce parsed values of type \lstinline{w}.
The \lstinline{return} operation creates a trivial parser for
any value \lstinline{v} which does not consume any input but simply
produces \lstinline{v}.

In practice, parsers composed with \lstinline{(>>=)} often have a relationship
between the output types of the two operands: usually that the former ``contains''
the latter in some sense.
For example, we might parse an expression and compose this
with a parser for statements, where statements contain
expressions. This relationship will be useful later when we consider
printers.

As a shorthand, we can discard the remaining unparsed string of a parser using projection, giving
a helper function \lstinline{parser :: Parser v -> (String -> v)}.

\paragraph{Monadic printers}

Our goal is to augment parsers with their inverse printer, such that
we have a monadic type \lstinline{Biparser} which provides two
complementary (bi-directional) transformations:
\begin{edoc}
  parser   :: Biparser v -> (String -> v)
  printer  :: Biparser v -> (v -> String)
\end{edoc}
However, this type of printer \lstinline{v -> String} (shown also in Section~\ref{sec:background}) cannot
form a monad because it is \emph{contravariant} in its type
parameter \lstinline{v}. Concretely, we cannot implement the
 bind (\lstinline{>>=}) operator for values with
types of this form:
\begin{edoc}
-- Failed attempt
bind :: (v -> String) -> (v -> (w -> String)) -> (w -> String)
bind pv kw = \w -> let v = (??) in pv v ++ kw v w
\end{edoc}
We are stuck trying to fill the hole (\lstinline{??}) as there is no
way to get a value of type \lstinline{v} to pass as an argument
to \lstinline{pv} (first printer) and \lstinline{kw} (second
printer which depends on a \lstinline{v}). Subsequently,
we cannot construct a monadic biparser by simply taking a product of the
parser monad and \lstinline{v -> String} and leveraging the result
that the product of two monads is a monad.

But what if the type variables of \lstinline{bind} were related by containment, such that
\lstinline{v} is contained within \lstinline{w} and thus we have a projection
\lstinline{w -> v}? We could use this projection to fill the hole in the failed
attempt above, defining a bind-like operator:
\begin{code}
bind' :: (w -> v) -> (v -> String) -> (v -> (w -> String)) -> (w -> String)
bind' from pv kw = \w -> let v = from w in pv v ++ kw v w
\end{code}
This is closer to the monadic form, where \lstinline{from :: w -> v}
resolves the difficulty of contravariance by ``contextualizing'' the printers.
Thus, the first printer is no longer just ``a
printer of \lstinline{v}'', but ``a printer of \lstinline{v}
extracted \lstinline{from} \lstinline{w}''.
In the context of constructing a bidirectional
parser, having such a function to hand is not an unrealistic expectation: recall that
when we compose two parsers, typically the values of the first parser
for \lstinline{v} are contained within the values returned by the second
parser for \lstinline{w}, thus a notion of projection can be defined
and used here to recover a \lstinline{v} in order to build the corresponding printer compositionally.

Of course, this
is still not a monad. However, it suggests a way to
generate a monadic form by putting the printer and the contextualizing projection together,
\lstinline{(w -> v, v -> String)} and fusing them into \lstinline{(w -> (v,  String))}.
This has the advantage of removing the contravariant occurence of
\lstinline{v}, yielding a data type:
\begin{edoc}
data Printer w v = Printer { print :: w -> (v, String) }
\end{edoc}
If we fix the first parameter type \lstinline{w}, then the type
\lstinline{Printer w} of printers for \lstinline{w} values is indeed
monadic, combining a \emph{reader monad} (for some global
read-only parameter of type \lstinline{w}) and a \emph{writer monad} (for
strings), with implementation:
%
\iffalse
\begin{edoc}
instance Functor (Printer w) where
  fmap f x = x >>= (return . f)
\end{edoc}
\fi
\begin{edoc}
instance Monad (Printer w) where
  return :: v -> Printer w v
  return = \v -> Printer (\_ -> (v, ""))

  (>>=) :: Printer w v -> (v -> Printer w t) -> Printer w t
  pv >>= kt = Printer (\w -> let (v, s) = print pv w
                                 (t, s') = print (kt v) w in (t, s ++ s'))
\end{edoc}
The printer \lstinline{return v} ignores its input and prints nothing.
For bind, an input \lstinline{w} is shared by both
printers and the resulting strings are concatenated.

We can adapt the contextualisation of a printer by the following
operation which amounts to pre-composition, witnessing the
fact that \lstinline{Printer} is a contravariant functor in its first parameter:
\begin{edoc}
  comap :: (w -> w') -> Printer w' v -> Printer w v
  comap from (Printer f) = Printer (f . from)
\end{edoc}

\subsection{Monadic biparsers}
\label{subsec:biparsers}

So far so good: we now have a monadic notion of printers. However, our
goal is to combine parsers and
printers in a single type. Since we have two monads, we use the standard result
that a product of monads is a monad, defining \emph{biparsers}:
\begin{code}
data Biparser u v = Biparser { parse :: String -> (v, String)
                             , print :: u      -> (v, String) }
\end{code}
By pairing parsers and printers we have to unify their
covariant parameters. When both the type parameters of \lstinline{Biparser} are the same it
is easy to interpret this type: a biparser \lstinline{Biparser v v}
is a parser from strings to \lstinline{v} values and printer from
\lstinline{v} values to strings. We refer to biparsers of this
type as \emph{aligned} biparsers.
 %Indeed, we can project these
%two components from a \lstinline{Biparser v v} value.
What about when the type parameters differ? A biparser of
type \lstinline{Biparser u v} provides a parser from strings
to \lstinline{v} values and a printer from \lstinline{u}
values to strings, but where the printers can compute
\lstinline{v} values from \lstinline{u} values, i.e.,
\lstinline{u} is some common broader representation which contains
relevant \lstinline{v}-typed subcomponents. A biparser
\lstinline{Biparser u v} can be thought of as printing a certain
subtree \lstinline{v} from the broader representation of a syntax tree
\lstinline{u}.

The corresponding monad for \lstinline{Biparser} is the product of the
previous two monad definitions for \lstinline{Parser} and
\lstinline{Printer}, allowing both to be composed sequentially at the
same time.  To avoid duplication we elide the definition here which is
shown in full in
\iflongVersion
Appendix~\ref{app:further-code}.
\else
Appendix A of the extended version~\cite{EXTENDED}.
\fi
\iffalse
\begin{code}
instance Functor (Biparser u) where
  fmap f x = x >>= (return . f)

instance Monad (Biparser u) where
  return :: v -> Biparser u v
  return v = Biparser (\s -> (v, s)) (\ _ -> (v, ""))

  (>>=) :: Biparser u v -> (v -> Biparser u w) -> Biparser u w
  pu >>= kw = Biparser parse' print' where
    parse'  s = let (v, s') = parse pu s in parse (kw v) s'
    print'  u = let (v, s) = print pu u
                    (w, s') = print (kw v) u in (w, s ++ s')
\end{code}
\fi

We can also lift the previous notion of \lstinline{comap}
from printers to biparsers, which gives us a way to contextualize a printer:
\begin{code}
comap :: (u -> u') -> Biparser u' v -> Biparser u v
comap f (Biparser parse print) = Biparser parse (print . f)

upon :: Biparser u' v -> (u -> u') -> Biparser u v
upon = flip comap
\end{code}
%\end{code}
%
%\marginpar{\mnote{I wonder whether this is necessary.}\dnote{I had
%    good feedback on this addition from Nada; it was felt it made the examples
%    easier to read.}}
%\begin{code}
%
In the rest of this section, we use the
alias ``\lstinline{upon}'' for \lstinline{comap} with flipped parameters
where we read \lstinline{p `upon` subpart} as applying the
printer of \lstinline{p :: Biparser u' v} on a subpart of an
input of type \lstinline{u}
calculated by \lstinline{subpart :: u -> u'}, thus yielding a biparser
of type \lstinline{Biparser u v}.

\paragraph{An example biparser}

Let us write a biparser, \lstinline{string :: Biparser String String}, for strings
which are prefixed by their length and a space. For example, the following unit tests should be true:
\begin{code}
test1 = parse string "6 lambda calculus" == ("lambda", " calculus")
test2 = print string "SKI" == ("SKI", "3 SKI")
\end{code}
We start by defining a primitive biparser of single characters as:
\begin{code}
char :: Biparser Char Char
char = Biparser (\ (c : s) -> (c, s)) (\ c -> (c, [c]))
\end{code}
A character is parsed by deconstructing the source string into its
head and tail. For brevity, we do not handle the failure associated
with an empty string. A character \lstinline{c} is printed as its
single-letter string (a singleton list) paired with \lstinline{c}.

Next, we define a biparser \lstinline{int} for an
integer followed by a single space. An auxiliary biparser
\lstinline{digits} (on the right) parses an integer one digit at a time
into a string.  Note that in Haskell, the \lstinline{do}-notation
statement ``\lstinline{d <- char `upon` head}'' desugars to
``\lstinline{char `upon` head >>= \ d ->} \ldots'' which uses
\lstinline{(>>=)} and a function binding \lstinline{d} in the scope of
the rest of the desugared block.

\vspace{-1.5em}
\begin{center}
\hspace{-2.5em}
\begin{minipage}[t]{0.46\linewidth}
\begin{code}
int :: Biparser Int Int
int = do
  ds <- digits `upon` printedInt
  return (read ds)
  where
    printedInt n = show n ++ " "

\end{code}
\end{minipage}
\begin{minipage}[t]{0.50\linewidth}
\begin{code}
digits :: Biparser String String
digits = do
  d <- char `upon` head
  if isDigit d then do
    igits <- digits `upon` tail
    return (d : igits)
  else if d == ' ' then return " "
  else error "Expected digit or space"
\end{code}
\end{minipage}
\vspace{-0.5em}
\end{center}

\noindent
On the right, \lstinline{digits} extracts a \lstinline{String} consisting of digits
followed by a single space.
As a parser, it parses a character (\lstinline{char `upon` head}); if it is a digit then
it continues parsing recursively (\lstinline{digits `upon` tail}) appending the first
character to the result (\lstinline{d : igits}). Otherwise, if the
parsed character is a space the parser returns \lstinline{" "}. As a printer, \lstinline{digits} expects
a non-empty string of the same format; \lstinline{`upon` head} extracts the
first character of the input, then \lstinline{char} prints it and returns it back as
\lstinline{d}; if it is a digit, then \lstinline{`upon` tail} extracts the rest
of the input to print recursively. If the character is a space,
the printer returns a space and terminates; otherwise (not digit or space) the printer throws an error.

On the left, the biparser \lstinline{int} uses \lstinline{read} to
convert an input string of digits (parsed by \lstinline{digits}) into an
integer, and \lstinline{printedInt} to convert an integer to an output
string printed by \lstinline{digits}.
%For example, \lstinline{parse (digits `upon` printedInt) "123 " = 123}
%and \lstinline{print (digits `upon` printedInt) 42 = "42 "}.
A safer implementation could
return the \lstinline{Maybe} type when parsing but we keep things
simple here for now.

After parsing an integer \lstinline{n}, we can parse the string following it
by iterating \lstinline{n} times the biparser \lstinline{char}. This is captured by
the \lstinline{replicateBiparser} combinator below, defined recursively like \lstinline{digits} but
with the termination condition given by an external parameter.
To iterate \lstinline{n} times a biparser \lstinline{pv}: if \lstinline{n == 0},
there is nothing to do and we return the empty list; otherwise for
\lstinline{n > 0}, we run \lstinline{pv} once to get the head \lstinline{v},
and recursively iterate \lstinline{n-1} times to get the tail \lstinline{vs}.

Note that although not reflected in its type, \lstinline{replicateBiparser n pv}
expects, as a printer, a list \lstinline{l} of length \lstinline{n}:
if \lstinline{n == 0}, there is nothing to print; if \lstinline{n > 0},
\lstinline{`upon` head} extracts the head of \lstinline{l} to print it with \lstinline{pv},
and \lstinline{`upon` tail} extracts its tail, of length \lstinline{n-1}, to print it
recursively.
\begin{code}
replicateBiparser :: Int -> Biparser u v -> Biparser [u] [v]
replicateBiparser 0 pv = return []
replicateBiparser n pv = do
  v  <- pv `upon` head
  vs <- (replicateBiparser (n - 1) pv) `upon` tail
  return (v : vs)
\end{code}
(akin to \lstinline{replicateM} from
Haskell's standard library). We can now fulfil our task:
\begin{code}
string :: Biparser String String
string = int `upon` length >>= \n -> replicateBiparser n char
\end{code}
Interestingly, if we erase applications of
\lstinline{upon}, \ie{}, we substitute every expression of the form
\lstinline{py `upon` f} with \lstinline{py} and ignore the second
parameter of the types, we obtain
what is essentially the definition of a parser in an idiomatic style
for monadic parsing. This is because
\lstinline{`upon` f} is the identity on the parser component of \lstinline{Biparser}.
Thus the biparser code closely resembles
standard, idiomatic monadic parser code but with ``annotations'' via
\lstinline{upon} expressing how to apply the backwards direction of printing to subparts of the
parsed string.

%.\mnote{Well, I guess a ``standard'' parser may ord instead of show and read.}

Despite its simplicity, the syntax of
length-prefixed strings is notably context-sensitive.
Thus the example makes crucial use of the monadic interface for
bidirectional programming:
a value (the length) must first be extracted
to dynamically delimit the string that is parsed next.
Context-sensitivity is standard for parser combinators
in contrast with parser generators, \eg{}, Yacc, and
applicative parsers, which are mostly restricted to context-free
languages. By our monadic BX approach, we can now bring this
power to bear on \emph{bidirectional} parsing.

\section{A unifying structure: monadic profunctors}
\label{sec:monadic-profunctors}
\iffalse
\begin{code}
{-# LANGUAGE FlexibleContexts #-}
{-# LANGUAGE FlexibleInstances #-}
{-# LANGUAGE TypeFamilies #-}
{-# LANGUAGE TypeOperators #-}
{-# LANGUAGE UndecidableInstances #-}
{-# LANGUAGE UndecidableSuperClasses #-}
{-# LANGUAGE MultiParamTypeClasses #-}
{-# LANGUAGE InstanceSigs #-}

module Profunctors where

import Data.Constraint.Forall  -- constraints package
import qualified Control.Category as C ((>>>))
import Control.Monad.Writer
import Control.Monad.State
import Control.Monad

\end{code}
\fi

The biparser examples of the previous section were enabled by both the
monadic structure of \lstinline{Biparser} and the \lstinline{comap}
operation (also called \lstinline{upon}, with flipped arguments).  We
describe a type as being a \emph{monadic profunctor} when it has both
a monadic structure and a \lstinline{comap} operation (subject
to some equations). The notion of a monadic profunctor is general, but it
characterises a key class of structures for bidirectional programs,
which we explain here. Furthermore, we show a construction of monadic
profunctors from pairs of monads which elicits the necessary structure
for monadic bidirectional programming in the style of the previous
section.

\paragraph{Profunctors}

In Section~\ref{subsec:biparsers}, biparsers were defined by a data type
with two type parameters (\lstinline{Biparser u v})
which is functorial and monadic in the second parameter and
\emph{contravariantly} functorial in the first parameter (provided by
the \lstinline{comap} operation).
In standard terminology, a two-parameter type \lstinline{p} which is
functorial in both its type parameters is called a \emph{bifunctor}.
In Haskell, the term \emph{profunctor} has come to mean any bifunctor
which is contravariant in the first type parameter and covariant in
the
second.\footnote{{\url{http://hackage.haskell.org/package/profunctors/docs/Data-Profunctor.html}}}
This differs slightly from the standard category theory terminology
where a profunctor is a bifunctor
$\mathsf{F} : \mathcal{D}^{\mathsf{op}} \times \mathcal{C} \rightarrow
\textbf{Set}$. This corresponds to the Haskell community's use of the term
``profunctor'' if we treat Haskell in an idealised way as the category of sets.

We adopt this programming-oriented terminology, capturing
the \lstinline{comap} operation via a class \lstinline{Profunctor}.
In the preceding section, some uses of \lstinline{comap} involved a partial
function, \eg{}, \lstinline{comap head}. We make the possibility of
partiality explicit via the \lstinline{Maybe} type, yielding the following definition.

\begin{definition}\label{def:profunctor}
  A binary data type is a \textbf{profunctor} if it
  is a contravariant functor in its first parameter and covariant
  functor in its
  second, with the operation:
\begin{code}
class ForallF Functor p => Profunctor p where
     comap :: (u -> Maybe u') -> p u' v -> p u v
\end{code}
\iffalse
% Double-check the typing of the laws
\begin{code}
reflLaw :: (ForallF Functor p, Profunctor p) => [p u u -> p u u]
reflLaw = [comap Just, id]

transLaw :: (ForallF Functor p, Profunctor p) =>
  (u -> Maybe v) -> (v -> Maybe w) -> [p w z -> p u z]
transLaw f g = [comap (f >=> g), comap f . comap g]
\end{code}
\fi
%
which should obey two laws:
\begin{align*}
  \lstinline{comap Just = id}
  \qquad\quad
  \lstinline{comap (f >=> g) = comap f . comap g}
\end{align*}
where \lstinline{(>=>) :: (a -> Maybe b) -> (b -> Maybe c) -> (a -> Maybe c)}
composes partial functions (left-to-right), captured by Kleisli arrows
of the \texttt{Maybe} monad.

The constraint \lstinline{ForallF Functor p} captures a universally
quantified constraint \citep{Bottu2017}: for all types \lstinline{u}
then \lstinline{p u} has an instance of the \lstinline{Functor}
class.%
\footnote{As of GHC 8.6, the \lstinline{QuantifiedConstraints} extension allows
  universal quantification in constraints, written as
  \texttt{forall u. Functor (p u)}, but for simplicity we use the constraint
  constructor \lstinline{ForallF} from the
  \textsf{constraints} package:
  \url{http://hackage.haskell.org/package/constraints}.}

The requirement for \lstinline{comap} to take partial
functions is in response to the frequent need to restrict the domain
of bidirectional transformations. In combinator-based approaches,
 combinators typically constrain bidirectional
programs to be bijections, enforcing domain restrictions
by construction. Our more flexible approach requires a way
to include such restrictions explicitly, hence \lstinline{comap}.
%For example, the \lstinline{string} biparser can parse and print strings of
%any length. Interpreting it as a printer, the first part
%\lstinline{int `upon` length} produces the length \lstinline{n} of the
%input list, but we cannot propagate this information
%this information is not reflected in the type of the continuation:
%\lstinline{(\n -> replicateBiparser n char) :: Int -> Biparser String String}.
%Therefore partial functions such as \lstinline{head} and \lstinline{tail}
%cannot be avoided in \lstinline{replicateBiparser}.

Since the contravariant part of the bifunctor applies to functions of type
\lstinline{u -> Maybe u'}, the categorical analogy here is more precisely
a profunctor $\mathsf{F} : {\mathcal{C}_T}^{\mathsf{op}} \times
\mathcal{C} \rightarrow \textbf{Set}$ where $\mathcal{C}_T$ is the
Kleisli category of the partiality (\lstinline{Maybe}) monad.
\end{definition}

\begin{definition}\label{def:promonad}
A \textbf{monadic profunctor}
is a profunctor \lstinline{p} (in the sense of Definition~\ref{def:profunctor})
such that \,\lstinline{p u}\, is a monad
for all \lstinline{u}.
In terms of type class constraints, this means there is
an instance \lstinline{Profunctor p} and for all \lstinline{u}
there is a \lstinline{Monad (p u)} instance. Thus, we represent
monadic profunctors
by the following empty class (which inherits all its methods from its superclasses):
\begin{code}
class (Profunctor p, ForallF Monad p) => Profmonad p
\end{code}
Monadic profunctors must obey the following laws about the interaction between
profunctor and monad operations:
\begin{edoc}
comap f (return y)     =   return y
comap f (py >>= kz)    =   comap f py >>= (\ y -> comap f (kz y))
\end{edoc}
\iffalse
% Double checking the typing of the laws
\begin{code}
reflMP :: (Monad (p u), Monad (p v), Profmonad p) => (u -> Maybe v) -> y -> [p u y]
reflMP f y = [comap f (return y), return y]

transMP :: (Profunctor p, Monad (p u), Monad (p v)) =>
     (u -> Maybe v) -> p v y -> (y -> p v z) -> [p u z]
transMP f py kz = [comap f (py >>= kz), comap f py >>= (\ y -> comap f (kz y))]
\end{code}
\fi
%
(for all \lstinline{f :: u -> Maybe v},
\lstinline{py :: p v y}, \lstinline{kz :: y -> p v z}).
These laws are equivalent to saying that \lstinline{comap} lifts (partial)
functions into monad morphisms.
In Haskell, these laws are obtained \emph{for free}
by parametricity~\citep{Wadler89}.
This means that every contravariant functor and monad is in fact a
monadic profunctor, thus the following universal instance is lawful:
\begin{code}
instance (Profunctor p, ForallF Monad p) => Profmonad p
\end{code}
\end{definition}

\begin{corollary}
Biparsers form a monadic profunctor
as there is an instance of \lstinline{Monad (P u)} and \lstinline{Profunctor p}
 satisfying the requisite laws.
\end{corollary}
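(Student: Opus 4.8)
The plan is to treat the statement as a direct corollary of Definition~\ref{def:promonad}. Since \lstinline{Profmonad} is an empty class with the universal instance \lstinline{instance (Profunctor p, ForallF Monad p) => Profmonad p} --- whose coherence laws relating \lstinline{comap} to \lstinline{return} and \lstinline{(>>=)} hold for free by parametricity --- all that remains is to exhibit a lawful \lstinline{Monad (P u)} for every \lstinline{u} together with a lawful \lstinline{Profunctor P} in the sense of Definition~\ref{def:profunctor}. Here \lstinline{P} is the biparser type of Section~\ref{subsec:biparsers}, refined so that its \lstinline{print} component has type \lstinline{u -> Maybe (v, String)} (printing may fail), which is what allows \lstinline{comap} to accept the partial functions \lstinline{u -> Maybe u'} demanded by Definition~\ref{def:profunctor}. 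Both \lstinline{(>>=)} and \lstinline{comap} on biparsers act independently on the \lstinline{parse} and \lstinline{print} projections, so I would discharge every equation by projecting onto these two components and reasoning about each in isolation.

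For the monad laws, recall that \lstinline{P u} is the componentwise product of the \lstinline{Parser} monad (on \lstinline{parse}) and the \lstinline{Maybe}-augmented \lstinline{Printer u} monad (on \lstinline{print}), both given in Section~\ref{sec:biparsers}: \lstinline{Parser} is a standard state monad over \lstinline{String}, and \lstinline{Printer u} combines a reader monad for the shared input \lstinline{u} with the writer monad over strings --- so its unit and associativity laws rest on the monoid laws of string concatenation with the empty string as unit --- and with \lstinline{Maybe}. Both are lawful monads, and a componentwise product of monads is a monad with componentwise operations, so \lstinline{P u} satisfies left identity, right identity and associativity. Equivalently, once the general construction announced in this section --- that a forward monad paired with a backward monad yields a monadic profunctor --- is in hand, the corollary is just its instantiation at these two monads.

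For the profunctor laws, \lstinline{comap f} is the identity on the \lstinline{parse} component and Kleisli-precomposition by \lstinline{f} on the \lstinline{print} component. Hence \lstinline{comap Just = id} holds because \lstinline{Just} is the identity arrow of the \lstinline{Maybe}-Kleisli category and \lstinline{parse} is untouched, and \lstinline{comap (f >=> g) = comap f . comap g} holds because on the \lstinline{print} component it is exactly associativity of Kleisli composition in \lstinline{Maybe}, while on \lstinline{parse} both sides are the identity. The monadic-profunctor laws \lstinline{comap f (return y) = return y} and \lstinline{comap f (py >>= kz) = comap f py >>= (\ y -> comap f (kz y))} then follow from the parametricity argument of Definition~\ref{def:promonad}; if a direct check is preferred, on \lstinline{parse} they are immediate (since \lstinline{comap} is the identity there) and on \lstinline{print} they amount to the fact that pre-composition by a fixed \lstinline{Maybe}-Kleisli arrow is a monad morphism.

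I expect the main obstacle to be bookkeeping rather than anything conceptual. One has to keep careful track of the extra \lstinline{Maybe} layer that a partial \lstinline{comap} inserts into the \lstinline{print} component --- this is exactly why Definition~\ref{def:profunctor} uses \lstinline{u -> Maybe u'} rather than \lstinline{u -> u'} --- and of the order in which the nested binds of the \lstinline{print} direction concatenate their output strings when verifying associativity of \lstinline{(>>=)}. Each individual step is a routine equational rewrite, which is precisely what makes the full argument amenable to the machine-checked Coq development accompanying the paper.
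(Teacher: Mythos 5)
Your proposal is correct and follows essentially the paper's own route: the coherence laws between \lstinline{comap} and the monad operations come for free by parametricity (Definition~\ref{def:promonad}), and the \lstinline{Monad (P u)} and \lstinline{Profunctor P} instances are exactly those obtained from the \lstinline{Fwd}/\lstinline{Bwd} product construction instantiated at the state monad over \lstinline{String} and \lstinline{WriterT Maybe String}, with the \lstinline{Maybe}-refined printer type accounting for partial \lstinline{comap}. Your explicit componentwise verification of the monad and Kleisli-composition laws just spells out what the paper leaves to the standard constructions of Section~\ref{sec:constructing-mp}, so there is nothing to add.
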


%\marginpar{\dnote{I wonder if we still want to include this?}}
%Another simple promonad will be useful throughout the next
%sections.
%%
%\begin{corollary}
%\label{cor:pure-functions-promonad}
%The type of pure functions, \lstinline{(->)} is a promonad.
%\end{corollary}
%%
%\begin{proof}
%\lstinline{(->)} is isomorphic to \lstinline{Bwd} of the identity monad.
%\end{proof}

Lastly, we introduce a useful piece of terminology (mentioned in
the previous section on biparsers)
for describing values of a profunctor of a particular form:
\begin{definition}
\label{sec:aligned}
A value \lstinline{p :: P u v} of a profunctor \lstinline{P}
 is called \emph{aligned} if \lstinline{u} = \lstinline{v}.
\end{definition}

\subsection{Constructing monadic profunctors}
\label{sec:constructing-mp}

Our examples (parsers/printers, lenses, and generators/predicates) share monadic
profunctors as an abstraction, making it possible to write different kinds of
bidirectional transformations monadically. Underlying these definitions of
monadic profunctors is a common structure, which we explain here using biparsers, and which
will be replayed in Section~\ref{sec:lenses} for
lenses and Section~\ref{sec:generators} for bigenerators.

There are two simple ways in which a covariant functor \lstinline{m} (resp. a monad)
gives rise to a profunctor (resp. a monadic profunctor).
The first is by constructing a profunctor in which the contravariant
parameter is discarded, \ie{}, \lstinline{p u v = m v};
the second is as a function type from the contravariant
parameter \lstinline{u} to \lstinline{m v}, \ie{}, \lstinline{p u v = u -> m v}.
These are standard mathematical constructions, and the
latter appears in the Haskell \textsf{profunctors} package with the
name \lstinline{Star}. Our core construction is based on these two
ways of creating a profunctor, which we call \lstinline{Fwd} and \lstinline{Bwd} respectively:
%Bwd
\begin{code}
data Fwd m u v = Fwd { unFwd :: m v }      -- ignore contrv. parameter
data Bwd m u v = Bwd { unBwd :: u -> m v } -- maps from contrv. parameter
\end{code}

\noindent
The naming reflects the idea that these two
constructions will together capture a bidirectional transformation
and are related by domain-specific round-tripping properties in our
framework. Both \lstinline{Fwd} and \lstinline{Bwd} map any
functor into a profunctor by the following type class instances:

\begin{code}
instance Functor m => Functor (Fwd m u) where
  fmap f (Fwd x) = Fwd (fmap f x)
instance Functor m => Profunctor (Fwd m) where
  comap f (Fwd x) = Fwd x

instance Functor m => Functor (Bwd m u) where
  fmap f (Bwd x) = Bwd ((fmap f) . x)
instance (Monad m, MonadPartial m) => Profunctor (Bwd m) where
  comap f (Bwd x) = Bwd ((toFailure . f) >=> x)
\end{code}

\iffalse
%%%%%%%% ALTERNATE PAGE FLOW
\hspace{-2.75em}
\begin{minipage}[t]{0.45\linewidth}
\begin{edoc}
instance Functor m
      => Functor (Fwd m u) where
  fmap f (Fwd x) = Fwd (fmap f x)

instance Functor m
      => Profunctor (Fwd m) where
  comap f (Fwd x) = Fwd x
\end{edoc}
\end{minipage}
\hspace{1em}
%
\begin{minipage}[t]{0.40\linewidth}
\begin{edoc}
instance Functor m
      => Functor (Bwd m u) where
  fmap f (Bwd x) = Bwd ((fmap f) . x)

instance (Monad m, MonadPartial m)
      => Profunctor (Bwd m) where
  comap f (Bwd x) =
      Bwd ((toFailure . f) >=> x)
\end{edoc}
\end{minipage}
\vspace{0.5em}
\fi

\noindent There is an additional constraint here for \lstinline{Bwd}, enforcing
that the monad \lstinline{m} is a member of the \lstinline{MonadPartial}
class which we define as:
\begin{code}
class MonadPartial m  where  toFailure :: Maybe a -> m a
\end{code}
This provides an interface for monads which can internalise a notion
of failure, as captured at the top-level by \lstinline{Maybe} in \lstinline{comap}.

Furthermore, \lstinline{Fwd} and \lstinline{Bwd} both map any monad into
a monadic profunctor:

\vspace{-1em}
\hspace{-2.75em}
\begin{minipage}[t]{0.432\linewidth}
\begin{code}
instance Monad m
      => Monad (Fwd m u) where
  return x = Fwd (return x)
  Fwd py >>= kz =
    Fwd (py >>= unFwd . kz)
\end{code}
\end{minipage}
\begin{minipage}[t]{0.55\linewidth}
\begin{code}
instance Monad m
      => Monad (Bwd m u) where
  return x = Bwd (\_ -> return x)
  Bwd my >>= kz = Bwd
    (\u -> my u >>= (\y -> unBwd (kz y) u))
\end{code}
\end{minipage}
\vspace{0.5em}

\noindent
The product of two monadic profunctors is also a monadic profunctor.
This follows from the fact that the product of two monads is a monad and the
product of two contravariant functors is a contravariant functor.
\begin{code}
data (:*:) p q u v = (:*:) { pfst :: p u v,  psnd :: q u v }
\end{code}

\begin{code}
instance (Monad (p u), Monad (q u)) => Monad ((p :*: q) u) where
  return y = return y :*: return y
  py :*: qy >>= kz = (py >>= pfst . kz) :*: (qy >>= psnd . kz)

instance (ForallF Functor (p :*: q), Profunctor p, Profunctor q)
      => Profunctor (p :*: q) where
  comap f (py :*: qy) = comap f py :*: comap f qy
\end{code}
\iffalse
\begin{code}
instance (Functor (p u), Functor (q u)) => Functor ((p :*: q) u) where
  fmap f (py :*: qy) = (fmap f py) :*: (fmap f qy)
\end{code}
\fi
\vspace{-0.5em}

\noindent
\subsection{Deriving biparsers as monadic profunctor pairs}
\label{subsec:biparser-promonad}

We can redefine biparsers in terms of the above data types,
their instances, and two standard monads, the state and writer monads:
\begin{edoc}
type State  s a = s -> (a, s)
type WriterT w m a = m (a, w)
type Biparser = Fwd (State String) :*: Bwd (WriterT Maybe String)
\end{edoc}
% Actual implementation
\iffalse
\begin{code}
newtype Biparser u v = Biparser { unP :: (:*:) (Fwd (State String)) (Bwd (WriterT String Maybe)) u v }

parse :: Biparser u v -> (String -> (v, String))
parse = runState . unFwd . pfst . unP

print :: Biparser u v -> (u -> Maybe (v, String))
print = (runWriterT .) . unBwd . psnd . unP

mkBP :: (String -> (v, String)) -> (u -> Maybe (v, String)) -> Biparser u v
mkBP parser print = undefined --Biparser ((Fwd $ StateT parser) :*: (Bwd $ WriterT . print))

instance Functor (Biparser u) where
   fmap f mx = mx >>= (\x -> return (f x))

instance Monad (Biparser u) where
  return = Biparser . return
  (Biparser x) >>= k = Biparser (x >>= (unP . k))
\end{code}
\fi
The backward direction composes the writer monad with the
\lstinline{Maybe} monad using \lstinline{WriterT} (the writer monad
transformer, equivalent to composing two monads with a
distributive law). Thus the backwards component of \lstinline{Biparser}
corresponds to printers (which may fail) and the
forwards component to parsers:
\begin{align*}
  \lstinline{Bwd (WriterT Maybe String) u v} \quad & \cong
\quad \lstinline{u -> Maybe (v, String)} \\[-0.25em]
\lstinline{Fwd (State String) u v} \quad & \cong \quad \lstinline{String -> (v, String)}
\end{align*}
For the above code to work in Haskell, the
\lstinline{State} and \lstinline{WriterT} types need to be
defined via either a {\bf\lstinline{data}} type or
{\bf\lstinline{newtype}} in order to allow type
class instances on partially applied
  type constructors. We abuse
the notation here for simplicity but define smart
constructors and deconstructors for the actual implementation:\footnote{\emph{Smart constructors} (and dually
  \emph{smart deconstructors}) are just functions that
 hide boilerplate code for constructing and deconstructing data
 types.}% We use them throughout to elide simple implementation details.}
\begin{edoc}
parse :: Biparser u v -> (String -> (v, String))
print :: Biparser u v -> (u -> Maybe (v, String))
mkBP  :: (String -> (v, String)) -> (u -> Maybe (v, String)) -> Biparser u v
\end{edoc}
The monadic profunctor definition for biparsers now comes for free from the
constructions in Section~\ref{sec:constructing-mp} along with the
following instance of \lstinline{MonadPartial} for the writer monad
transformer with the \lstinline{Maybe} monad:
\begin{code}
instance Monoid w => MonadPartial (WriterT w Maybe) where
  toFailure Nothing = WriterT Nothing
  toFailure (Just a) = WriterT (Just (a, mempty))
\end{code}
In a similar manner, we will use this monadic profunctor construction to define monadic bidirectional
transformations for lenses (\S\ref{sec:lenses}) and bigenerators (\S\ref{sec:generators}).

The example biparsers from Section~\ref{subsec:biparsers} can be easily
redefined using the structure here. For example, the primitive
biparser \lstinline{char} becomes:
\begin{code}
char :: Biparser Char Char
char = mkBP (\ (c : s) -> (c, s)) (\ c -> Just (c, [c]))
\end{code}

\paragraph{Codec library}

The \textsf{codec} library~\citep{codec} provides a general type
for bidirectional programming isomorphic
to our composite type \lstinline{Fwd r :*: Bwd w}:
\begin{edoc}
data Codec r w c a = Codec { codecIn :: r a, codecOut :: c -> w a }
\end{edoc}
Though the original \textsf{codec} library was developed
independently, its current form is a result of this
work. In particular, we contributed to the package by generalising its original type (\lstinline{codecOut :: c -> w ()}) to the one above, and provided \lstinline{Monad} and
\lstinline{Profunctor} instances to support monadic bidirectional
programming with codecs.
%This led us to carry out a more in-depth study
%of the present bidirectional programming style.

\iffalse
% Some boilerplate derived applicative + functor instances
\begin{code}
-- instance {-# OVERLAPPABLE #-} Monad m => Functor m where
--   fmap f mx = mx >>= (\x -> return (f x))

instance{-# OVERLAPPABLE #-}  (Functor m, Monad m) => Applicative m where
  pure = return
  mf <*> mx = mf >>= (\f -> mx >>= (\x -> return (f x)))
\end{code}
\fi

%%% Local Variables:
%%% mode: latex
%%% TeX-master: t
%%% End:

\section{Reasoning about bidirectionality}
\label{sec:compositionality}
\iffalse
\begin{code}
{-# LANGUAGE InstanceSigs #-}
{-# LANGUAGE RankNTypes #-}
{-# LANGUAGE MonoLocalBinds #-}

module Compositionality where

import Prelude hiding (print)
import Data.Char (isDigit, isSpace)
import Data.List (span, (++), length, head, tail)
import Profunctors -- hiding (comap, P, print, parse, Biparser)

type Bp = Biparser

\end{code}
\fi
%

So far we have seen how the monadic profunctor structure provides a
way to define biparsers using familiar operations and syntax:
monads and \lstinline{do}-notation. This structuring allows
both the forwards and backwards components of a biparser to
be defined simultaneously in a single compact definition.

This section studies the interaction of monadic profunctors with the
\emph{round-tripping laws} that relate the two components of a
bidirectional program.  For every bidirectional transformation we can
define dual properties: \emph{backward round tripping} (going
backwards-then-forwards) and \emph{forward round tripping} (going
forwards-then-backwards). In each BX domain, such properties
also capture additional domain-specific information flow inherent to the
transformations.  We use biparsers as the running example.
We then apply the same principles to our other examples in
Sections~\ref{sec:lenses} and \ref{sec:generators}. For brevity, we
use \lstinline{Bp} as an alias for \lstinline{Biparser}.
%
%we separate the string and value components of printers via helper
%functions:
%\hspace{-2.5em}
%\begin{minipage}[t]{0.53\linewidth}
%\begin{code}
%printS :: P u v -> u -> Maybe String
%printS p = fmap snd . print p
%\end{code}
%\end{minipage}
%\begin{minipage}[t]{0.5\linewidth}
%\begin{code}
%printV :: P u v -> u -> Maybe v
%printV p = fmap fst . print p
%\end{code}
%\end{minipage}
%\vspace{0.25em}
%
%% Haskell rendering of the properties to check consistency
\iffalse
\begin{code}
biparseBwd :: Eq u => Biparser u u -> String -> u -> Bool
biparseBwd p s' x =
  case fmap snd (print p x) of
   Just s -> parse p (s ++ s') == (x, s')

biparseFwd :: Biparser u u -> String -> Bool
biparseFwd p s =
  case parse p s of
    (x, "") -> fmap snd (print p x) == Just s
\end{code}
\fi
%
\begin{definition}
A biparser \lstinline{p :: Bp u u} is \emph{backward round tripping}
if for all \lstinline{x :: u} and \lstinline{s, s' :: String}
then (recalling that \lstinline{print p :: u -> Maybe (v, String)}):
\vspace{-0.5em}
$$
\lstinline{fmap snd (print p x) = Just s}
\quad \Longrightarrow \quad
\lstinline{parse p (s ++ s') = (x, s')}.
$$
\end{definition}
That is, if a biparser \lstinline{p} when used as a printer (going
backwards) on an input value
\lstinline{x} produces a string \lstinline{s}, then using \lstinline{p} as a parser on a string
with prefix \lstinline{s} and suffix \lstinline{s'}
yields the original input value \lstinline{x} and the remaining input
\lstinline{s'}. %The presence of the suffix captures the idea that
%the parsing is \emph{context agnostic}.

Note that backward round tripping is
defined for \emph{aligned} biparsers (of type
\lstinline{Bp u u}) since the same value
\lstinline{x} is used as both the input of the printer (typed by the
first type parameter of \lstinline{Bp}) and as the expected output of
the parser (typed by the second type parameter of \lstinline{Bp}).

%Why call this ``\emph{left} round tripping''? Imagine, the forwards
%direction (parsing) pointing left-to-right, and the backwards
%direction (printing) from right-to-left. The above property
%starts by moving left via printing, hence \emph{left} round tripping.
The dual property is \emph{forward}
round tripping: a source
string \lstinline{s} is parsed (going forwards) into some value \lstinline{x}
which when printed produces the initial source \lstinline{s}:
\begin{definition}
  A biparser \lstinline{p :: Bp u u} is \emph{forward round tripping} if
%  $\forall \lstinline{x :: u}, \;  \lstinline{s :: String}$:
for every \lstinline{x :: u} and \lstinline{s :: String} we have
that:
\vspace{-0.5em}
$$
\lstinline{parse p s = (x, "")}
\quad \Longrightarrow \quad
\lstinline{fmap snd (print p x) = Just s}
$$
\end{definition}

\vspace{-1em} % because restatable seems to add space unnecessarily
\begin{restatable}{prop}{charRoundtrip}
\label{lem:char-roundtrip}
The biparser \lstinline{char :: Bp Char Char}
(\S\ref{subsec:biparser-promonad}) is both backward and forward
round tripping. Proof by expanding definitions and algebraic reasoning.
\end{restatable}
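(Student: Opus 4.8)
The plan is to reduce everything to the smart-constructor definition \lstinline{char = mkBP (\ (c : s) -> (c, s)) (\ c -> Just (c, [c]))} and then dispatch each of the two implications by a one-line case analysis. The first step is to record the two facts that will be used repeatedly: \lstinline{parse char} is the partial function mapping a non-empty string \lstinline{c : s'} to \lstinline{(c, s')}, and \lstinline{print char c = Just (c, [c])}, so that \lstinline{fmap snd (print char c) = Just [c]} for every \lstinline{c}.

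For backward round tripping I would assume \lstinline{fmap snd (print char x) = Just s}, simplify the left-hand side to \lstinline{Just [x]} using the fact above, conclude \lstinline{s = [x]}, and then observe that \lstinline{parse char (s ++ s') = parse char (x : s') = (x, s')} directly by the parsing clause — which is exactly the desired conclusion. For forward round tripping I would assume \lstinline{parse char s = (x, "")}; here the one slightly delicate point is that \lstinline{parse char} is undefined on the empty string, so the hypothesis forces \lstinline{s} to have the form \lstinline{c : s'}, whereupon \lstinline{parse char s = (c, s')}, and matching against \lstinline{(x, "")} yields \lstinline{c = x} and \lstinline{s' = ""}, hence \lstinline{s = [x]}; then \lstinline{fmap snd (print char x) = Just [x] = Just s}, closing the case.

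I do not expect any real obstacle: both directions are a couple of rewriting steps. The closest thing to a subtlety is the bookkeeping around partiality — in the forward direction one must exploit the hypothesis to exclude \lstinline{s = ""} before \lstinline{parse char} can be evaluated, and dually one relies on \lstinline{print char} being total (it never returns \lstinline{Nothing}), so that the antecedent of backward round tripping is satisfiable precisely when \lstinline{s = [x]}. Everything else is routine algebraic reasoning of the kind the proposition's statement already advertises.
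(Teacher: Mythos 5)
Your proof is correct and follows essentially the same route as the paper's: expand the definition of \lstinline{char}, note that \lstinline{print char x = Just (x, [x])} so the printed string must be \lstinline{[x]}, and that \lstinline{parse char ([x] ++ s') = (x, s')} (and conversely that \lstinline{parse char s = (x, "")} forces \lstinline{s = [x]}). Your extra remarks on partiality (excluding the empty string, totality of \lstinline{print char}) are just a slightly more careful spelling-out of the same one-line argument.
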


\newcommand{\propP}[2]{#2 \;\in\; \mathcal{P}_{#1}}
\newcommand{\propQ}[2]{#2 \;\in\; \mathcal{Q}^{#1}}
\newcommand{\propR}[3]{#3 \;\in\; \mathcal{R}^{#1}_{#2}}

Note, in some applications, forward round tripping is too
strong. Here it requires that every printed value corresponds to at most one source string.
This is often not the case as ASTs typically discard
formatting and comments so that pretty-printed
code is lexically different to the original source. However,
different notions of equality enable more reasonable forward round-tripping properties.

Although one can check round-tripping properties of biparsers by expanding
their definitions and the underlying monadic profunctor operations,
a more scalable approach is provided if a round-tripping
property is \emph{compositional} with respect to the monadic
profunctor operations, i.e., if these operations
preserve the property. Compositional properties are easier to enforce and check
since only the individual atomic components need round-tripping proofs.
Such properties are then guaranteed ``by construction'' for programs
built from those components.

\subsection{Compositional properties of monadic bidirectional
  programming}
\label{subsec:compositionality}

Let us first formalize compositionality as follows.
A \emph{property} $\mathcal{R}$ over a monadic profunctor \lstinline{P} is a
family of subsets $\mathcal{R}^{\lstinline{u}}_{\lstinline{v}}$ of
\lstinline{P u v} indexed by types \lstinline{u} and \lstinline{v}.

%Languages for bidirectional programming generally guarantee round-tripping
%properties by construction. That is formalized by a notion of
%\emph{compositionality}: round-tripping should be preserved by every operation.

\begin{definition}
  A property $\mathcal{R}$ over a monadic profunctor \lstinline{P} is
  \emph{compositional} if the monadic profunctor operations are closed over
  $\mathcal{R}$, \ie{}, the following conditions hold for all types \lstinline{u},
  \lstinline{v}, \lstinline{w}:
  \begin{enumerate}[itemsep=-0.4em]
    \item For all \lstinline{x :: v},
      \vspace{-1.25em}
      \begin{equation}\label{eqn:comp-return}\tag{comp-return}
        \propR{\lstinline{u}}{\lstinline{v}}{\lstinline{(return x)}}
      \end{equation}
    \item For all \lstinline{p :: P u v} and \lstinline{k :: v -> P u w},
      \vspace{-0.5em}
      \begin{equation}\label{eqn:comp-bind}\tag{comp-bind}
        \left(
        \propR{\lstinline{u}}{\lstinline{v}}{\lstinline{p}}\right)
        \,\wedge\,
        \left(\forall \lstinline{v} . \; \propR{\lstinline{u}}{\lstinline{w}}{\lstinline{(k v)}}
        \right)
        \; \implies \;
        \lstinline{(p >>= k)} \propR{\lstinline{u}}{\lstinline{w}}{}
      \end{equation}
    \item For all \lstinline{p :: P u' v} and \lstinline{f :: u -> Maybe u'},
    \vspace{-0.5em}
    \begin{equation}\label{eqn:comp-comap}\tag{comp-comap}
      \propR{\lstinline{u'}}{\lstinline{v}}{\lstinline{p}}
      \;\; \implies \;\;
      \propR{\lstinline{u}}{\lstinline{v}}{\lstinline{(comap f p)}}
    \end{equation}
  \end{enumerate}
% i.e., $\mathcal{R}$ is a congruence with respect to the
% monadic profunctor operations.
\end{definition}

\noindent
Unfortunately for biparsers, forward and backward round tripping as defined above
are \emph{not} compositional: \lstinline{return} is not
backward round tripping and \lstinline{>>=} does not preserve forward
round tripping. Furthermore, these two properties are restricted to
biparsers of type \lstinline{Bp u u} (i.e., aligned biparsers)
but compositionality requires that the two type parameters of the monadic
profunctor can differ in the case of \lstinline{comap} and
\lstinline{(>>=)}. This suggests that we need to look for more
general properties that capture the full gamut of possible biparsers.

%For one, the \lstinline{return}
%operation is not backward round tripping. Consider a biparser
%\lstinline{char >>= \z -> return (succ z)}. As a parser, it
%can parse any character and returns its successor, \eg{}
%\lstinline{"a"} is parsed to \lstinline{'b'}, but
%as a printer it just prints the given character, \eg{},
%\lstinline{'b'} is printed as \lstinline{"b"}, i.e. different from
%the original \lstinline{"a"}. The failure is due to return
%even though \texttt{char} is backward round tripping. Furthermore,
%\lstinline{>>=} does not preserve

% For forward round tripping it is not possible to generalise
% to arbitrarily parameterised biparsers \lstinline{Bp u v}
% since it relies on directly passing the result of the parser
% to the printer.
% As we just mentioned, one obstacle is that our general type of biparsers
% \lstinline{Bp u v} may be indexed by two distinct types \lstinline{u} and
% \lstinline{v}.
% This type defines parsers of values with type \lstinline{v},
% and also printers that extract some \lstinline{v} from a \lstinline{u},
% and print just that \lstinline{v}. Thus the only way to connect the
% parser and the printer in a general biparser is to compare the
% outputs, of type \lstinline{v}, of both the parser and printer.

We first focus on backward round tripping. Informally,
backward round tripping states that if you print (going backwards) and
parse the resulting output (going forwards) then you get back the initial
value. However, in a general biparser \lstinline{p :: Bp u v}, the input
type of the printer \lstinline{u} differs from the output type of the
parser \lstinline{v}, so we cannot compare them. But our intent for printers
is that what we actually print is a fragment of \lstinline{u},
a fragment which is given as the output of the printer. By thus comparing the outputs of both
the parser and printer, we obtain the following variant of backward round
tripping:
\begin{definition}
A biparser \lstinline{p :: Bp u v} is \emph{weak backward round tripping} if
for all \lstinline{x :: u}, \lstinline{y :: v}, and \lstinline{s},
\lstinline{s' :: String}
then:
%
%
\iffalse
%% Check that everything is well-typed
\begin{code}
wellBehaved :: Eq v => Biparser u v -> String -> u -> Bool
wellBehaved p s' x =
  case print p x of
    Just (y, s) -> parse p (s ++ s') == (y, s')
    Nothing     -> False
\end{code}
\fi
%%%
\begin{align*}
&  \lstinline{print p x = Just (y, s)} \quad \Longrightarrow \quad
\lstinline{parse p (s ++ s') = (y, s')}
\end{align*}
\end{definition}
Removing backward round tripping's restriction to aligned
biparsers and using the result \lstinline{y :: v} of the printer gives
us a property that \emph{is} compositional:

\begin{restatable}{prop}{biparserwellbehaved}
\label{prop:biparser-well-behaved}
Weak backward round tripping of biparsers is compositional.
\end{restatable}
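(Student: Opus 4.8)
The plan is to verify the three closure conditions \eqref{eqn:comp-return}, \eqref{eqn:comp-bind}, and \eqref{eqn:comp-comap} directly, by unfolding the monadic profunctor operations on \lstinline{Biparser} --- that is, the \lstinline{Monad} and \lstinline{Profunctor} instances induced by the product \lstinline{Fwd (State String) :*: Bwd (WriterT Maybe String)}. Three observations drive every case. First, \lstinline{comap} is the identity on the forward (parsing) component, so \lstinline{parse (comap f p) = parse p}. Second, a successful print of a bind, \lstinline{print (p >>= k) x = Just (w, t)}, must decompose as \lstinline{print p x = Just (v, s1)} and \lstinline{print (k v) x = Just (w, s2)} with \lstinline{t = s1 ++ s2}, since \lstinline{WriterT Maybe} propagates failure and concatenates outputs. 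Third, \lstinline{print (comap f p) x = Just (y, s)} forces \lstinline{f x = Just x'} with \lstinline{print p x' = Just (y, s)}. It is worth noting at the outset that weak backward round tripping, unlike backward round tripping, is stated for arbitrary \lstinline{Bp u v}; this is precisely what makes the types align in \eqref{eqn:comp-bind}, where the second parameter moves from \lstinline{v} to \lstinline{w}, and in \eqref{eqn:comp-comap}, where the first moves from \lstinline{u'} to \lstinline{u}.

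The two ``base'' conditions are immediate. For \eqref{eqn:comp-return}: the only successful print is \lstinline{print (return y) x = Just (y, "")}, and \lstinline{parse (return y) ("" ++ s') = (y, s')}, so the implication holds. For \eqref{eqn:comp-comap}: from \lstinline{print (comap f p) x = Just (y, s)} the third observation gives \lstinline{f x = Just x'} and \lstinline{print p x' = Just (y, s)}; weak backward round tripping of \lstinline{p} yields \lstinline{parse p (s ++ s') = (y, s')}, and \lstinline{parse (comap f p) = parse p} closes the case.

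The only condition with genuine content is \eqref{eqn:comp-bind}. Assume \lstinline{p} and every \lstinline{k v} are weak backward round tripping and \lstinline{print (p >>= k) x = Just (w, t)}; by the second observation, \lstinline{print p x = Just (v, s1)}, \lstinline{print (k v) x = Just (w, s2)}, and \lstinline{t = s1 ++ s2}. The goal is \lstinline{parse (p >>= k) ((s1 ++ s2) ++ s') = (w, s')}. Unfolding bind on \lstinline{parse}, the composite parser runs \lstinline{parse p} and feeds the leftover string into \lstinline{parse (k v)}. The crucial step is to instantiate the hypothesis on \lstinline{p} with the \emph{larger} suffix \lstinline{s2 ++ s'} rather than \lstinline{s'}: this gives \lstinline{parse p (s1 ++ (s2 ++ s')) = (v, s2 ++ s')}, so the bind reduces to \lstinline{parse (k v) (s2 ++ s')}, and the hypothesis on \lstinline{k v} with suffix \lstinline{s'} then delivers \lstinline{(w, s')}. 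I expect this suffix bookkeeping --- recognising that the leftover after parsing \lstinline{p} must be exactly \lstinline{s2 ++ s'}, which is what makes the recursive hypothesis on \lstinline{k v} applicable --- to be the single delicate point; the remainder is routine unfolding of the \lstinline{State} and \lstinline{WriterT Maybe} plumbing together with associativity of \lstinline{(++)}.
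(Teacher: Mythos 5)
Your proposal is correct and follows essentially the same route as the paper's proof: the \lstinline{return} and \lstinline{comap} cases by direct unfolding, and the \lstinline{(>>=)} case by decomposing the successful print into \lstinline{print p x = Just (v, s1)} and \lstinline{print (k v) x = Just (w, s2)}, then instantiating the hypothesis on \lstinline{p} with the enlarged suffix \lstinline{s2 ++ s'} before applying the hypothesis on \lstinline{k v}. This suffix instantiation is exactly the step the paper's appendix proof records, so there is nothing to add.
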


\begin{proposition}
\label{lemm:char-well-behaved}
The primitive biparser \lstinline{char} is weak backward round tripping.
\end{proposition}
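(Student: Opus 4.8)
The plan is to prove the statement directly, by unfolding the definition of \lstinline{char}, exactly in the style of Proposition~\ref{lem:char-roundtrip}: since \lstinline{char} is an atomic biparser there is no monadic or profunctor structure to exploit, so expanding definitions together with a little algebraic reasoning suffices. Recall from Section~\ref{subsec:biparser-promonad} that \lstinline{char} is built with \lstinline{mkBP} so that \lstinline{parse char (c : s) = (c, s)} and \lstinline{print char c = Just (c, [c])} for every character \lstinline{c} and string \lstinline{s}.

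First I would fix arbitrary \lstinline{x, y :: Char} and \lstinline{s, s' :: String}, and assume the hypothesis of weak backward round tripping, namely \lstinline{print char x = Just (y, s)}. Unfolding \lstinline{print char} on \lstinline{x} rewrites the left-hand side to \lstinline{Just (x, [x])}; by injectivity of \lstinline{Just} and of pairing this forces \lstinline{y = x} and \lstinline{s = [x]}.

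Then I would compute the parser side. Using \lstinline{s = [x]} and the list identity \lstinline{[x] ++ s' = x : s'}, the term \lstinline{parse char (s ++ s')} equals \lstinline{parse char (x : s')}, which by the defining equation for \lstinline{parse char} equals \lstinline{(x, s')}. Since \lstinline{x = y}, this is \lstinline{(y, s')}, which is precisely the required conclusion \lstinline{parse char (s ++ s') = (y, s')}.

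I do not anticipate any real obstacle: the argument is a finite unfolding plus one list-append identity. The only point that might look delicate is that \lstinline{char}, read as a parser, is only defined on non-empty input; but the string \lstinline{s = [x]} supplied to the parser here is non-empty by construction, so this partiality is never triggered and the implication holds for all relevant inputs. Together with Proposition~\ref{prop:biparser-well-behaved}, this proposition supplies the base case needed to derive weak backward round tripping for compound biparsers such as \lstinline{digits} and \lstinline{string}.
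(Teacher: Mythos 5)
Your proof is correct and matches the paper's own treatment: the paper establishes the round-tripping behaviour of \lstinline{char} by the same direct expansion of \lstinline{print char x = Just (x, [x])} followed by computing \lstinline{parse char ([x] ++ s')}, with no further machinery needed. The observation that the parser's partiality on empty input is never triggered is a nice touch but not essential.
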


% \lnote{said earlier}
% \noindent
% Programming against an interface of well-behaved primitives and
% combinators preserving well-behavedness ensures that the resulting programs
% are well-behaved by construction. Indeed, the above two results
% have the following corollary:

\begin{corollary}\label{prop:string-weak}
Propositions~\ref{prop:biparser-well-behaved} \& \ref{lemm:char-well-behaved} imply
\lstinline{string} is weak backward round tripping.
\end{corollary}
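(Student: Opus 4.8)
The plan is to observe that \lstinline{string} is, up to unfolding of the definitions in Section~\ref{subsec:biparsers}, a term built only from the atomic biparser \lstinline{char} using the monadic profunctor operations (\lstinline{return}, \lstinline{>>=}, and \lstinline{comap}/\lstinline{upon}) plus two recursive definitions, and then to push the property up along that structure. Unfolding gives \lstinline{string = comap length int >>= (\ n -> replicateBiparser n char)}, with \lstinline{int = comap printedInt digits >>= (\ ds -> return (read ds))}, and \lstinline{digits} and \lstinline{replicateBiparser} the two recursive definitions from that section (throughout, \lstinline{`upon` f} abbreviates \lstinline{comap} applied to the total function \lstinline{f}). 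So the base case is Proposition~\ref{lemm:char-well-behaved} (\lstinline{char} is weak backward round tripping), and every combinator that glues the pieces together is one of the three operations closed over weak backward round tripping by Proposition~\ref{prop:biparser-well-behaved}.

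The two recursive components require a small induction each. For \lstinline{replicateBiparser n char} I would induct on \lstinline{n}: the \lstinline{n == 0} branch is \lstinline{return []}, handled by \eqref{eqn:comp-return}; the \lstinline{n > 0} branch is \lstinline{comap head char >>= (\ v -> comap tail (replicateBiparser (n-1) char) >>= (\ vs -> return (v : vs)))}, which is weak backward round tripping by \eqref{eqn:comp-comap}, \eqref{eqn:comp-bind}, \eqref{eqn:comp-return}, Proposition~\ref{lemm:char-well-behaved}, and the induction hypothesis. For \lstinline{digits} I would induct on the input (equivalently, use fixpoint induction, since weak backward round tripping is an admissible predicate: its statement is a pointwise implication closed under limits of chains, and \lstinline{return}, \lstinline{>>=}, \lstinline{comap} are continuous). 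In the step, \lstinline{digits} is \lstinline{comap head char >>= (\ d -> body d)}; \lstinline{comap head char} is weak backward round tripping by \eqref{eqn:comp-comap} and the base case, and for each concrete \lstinline{d} the continuation \lstinline{body d} is either \lstinline{comap tail digits >>= (\ igits -> return (d : igits))} (weak backward round tripping by \eqref{eqn:comp-comap}, \eqref{eqn:comp-bind}, \eqref{eqn:comp-return}, and the induction hypothesis on \lstinline{digits}), or \lstinline{return " "} (by \eqref{eqn:comp-return}), or the \lstinline{error} branch, which is vacuously fine because the printer then diverges rather than returning \lstinline{Just (y, s)}, so the hypothesis of the implication is never met. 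Applying \eqref{eqn:comp-bind} closes the step. With these two facts, \lstinline{int} is weak backward round tripping by \eqref{eqn:comp-comap}, \eqref{eqn:comp-bind}, \eqref{eqn:comp-return}, and then \lstinline{string} is weak backward round tripping by a final use of \eqref{eqn:comp-comap} and \eqref{eqn:comp-bind}.

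The main obstacle is not the algebra but the recursion: compositionality as stated only closes the property under \lstinline{return}, \lstinline{>>=}, and \lstinline{comap}, whereas \lstinline{digits} and \lstinline{replicateBiparser} are genuine fixpoints and \lstinline{digits} additionally branches on a parsed/printed character and can raise \lstinline{error}. Handling this cleanly requires either the admissibility remark above (so fixpoint induction applies) or the reformulation of the two recursions as structural inductions (on the remaining input for \lstinline{digits}, on the count for \lstinline{replicateBiparser}), plus the trivial observation that an \lstinline{if}-expression is weak backward round tripping as soon as the selected branch is, and the vacuity argument for \lstinline{error}. One subtlety to flag in the write-up: \eqref{eqn:comp-bind} demands that the continuation satisfy the property \emph{for every} value the first biparser can return, so each use above relies on the continuation having been verified uniformly in its bound variable (\lstinline{n}, \lstinline{ds}, \lstinline{d}, \lstinline{v}, \lstinline{vs}) — exactly what the inductions provide.
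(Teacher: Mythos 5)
Your proposal is correct and follows essentially the same route as the paper, which treats this corollary as immediate "by construction": \lstinline{string} is assembled from the atomic biparser \lstinline{char} using only \lstinline{return}, \lstinline{>>=}, and \lstinline{comap}, so compositionality (Proposition~\ref{prop:biparser-well-behaved}) together with Proposition~\ref{lemm:char-well-behaved} propagates weak backward round tripping to the whole definition. Your explicit handling of the two recursive definitions (induction on the count for \lstinline{replicateBiparser}, fixpoint/input induction plus the vacuous \lstinline{error} branch for \lstinline{digits}) simply spells out details the paper leaves implicit, and is fine.
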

This property is ``weak'' as it does not
constrain the relationship between the input \lstinline{u}
of the printer and its output \lstinline{v}. In fact, there is no hope
for a compositional property to do so: the monadic profunctor
combinators do not enforce a relationship between
them. However, we can regain
compositionality for the stronger backward round-tripping property
by combining the weak compositional property with an additional
non-compositional property on the relationship between the printer's
input and output. This relationship is represented by the function
that results from ignoring the printed string, which amounts to removing the main
effect of the printer. Thus we call this operation a \emph{purification}:
\begin{code}
purify :: forall u v. Bp u v -> u -> Maybe v
purify p u = fmap fst (print p u)
\end{code}
Ultimately, when a biparser is aligned (\lstinline{p :: Bp u u}) we want an input to the
printer to be returned in its output, i.e,
\lstinline{purify p} should equal \lstinline{\x -> Just x}.
If this is the case, we recover the original backward round
tripping property:

\begin{restatable}{thm}{weakToBackward}
\label{thm:weak-to-backward}
  If \lstinline{p :: P u u} is weak backward round tripping, and
  for all \lstinline{x :: u}. \lstinline{purify p x = Just x},
  then \lstinline{p} is backward round tripping.
\end{restatable}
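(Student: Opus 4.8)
The plan is to unfold the definition of backward round tripping and reduce it to the weak variant, using the purification hypothesis to close the gap. Concretely, I would fix \lstinline{x :: u} and \lstinline{s, s' :: String}, assume the premise \lstinline{fmap snd (print p x) = Just s}, and aim to derive \lstinline{parse p (s ++ s') = (x, s')}. Since \lstinline{p} here is an aligned biparser, its printer has type \lstinline{u -> Maybe (v, String)} with \lstinline{v = u}, so the output of the printer and the expected output of the parser live in the same type and can be compared.

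The first key step is to read off the shape of \lstinline{print p x} from the premise. Because \lstinline{fmap} on \lstinline{Maybe} sends \lstinline{Nothing} to \lstinline{Nothing} and \lstinline{Just a} to \lstinline{Just (f a)}, the equation \lstinline{fmap snd (print p x) = Just s} can only hold when \lstinline{print p x = Just (y, s)} for some \lstinline{y :: u}. Next I would pin down \lstinline{y} using the hypothesis \lstinline{purify p x = Just x}: by the definition of \lstinline{purify} this says \lstinline{fmap fst (print p x) = Just x}, and substituting \lstinline{print p x = Just (y, s)} gives \lstinline{Just y = Just x}, hence \lstinline{y = x} by injectivity of the \lstinline{Just} constructor. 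So in fact \lstinline{print p x = Just (x, s)}.

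It then remains only to invoke weak backward round tripping, which we may assume of \lstinline{p}. Instantiating that property at \lstinline{x}, at the printed value \lstinline{x}, the string \lstinline{s}, and the suffix \lstinline{s'}, the antecedent \lstinline{print p x = Just (x, s)} is exactly what we have established, and the conclusion is \lstinline{parse p (s ++ s') = (x, s')}, as desired. The whole argument is a short case analysis plus a substitution; the only slightly delicate point — the nearest thing to an obstacle — is the elementary reasoning about \lstinline{Maybe}, namely that the premise rules out \lstinline{print p x = Nothing} and so licenses peeling off the pair \lstinline{(y, s)} and identifying its first component with \lstinline{x}.
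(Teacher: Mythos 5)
Your proposal is correct and follows essentially the same route as the paper's own proof: both combine \lstinline{purify p x = fmap fst (print p x) = Just x} with the premise \lstinline{fmap snd (print p x) = Just s} to conclude \lstinline{print p x = Just (x, s)}, and then discharge the goal by instantiating weak backward round tripping. Your version merely spells out the elementary \lstinline{Maybe} case analysis that the paper leaves implicit.
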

\noindent
Thus, for any biparser \lstinline{p}, we can get backward round
tripping by proving that its atomic subcomponents are weak backward
round tripping, and proving that \lstinline{purify p x = Just x}. The
interesting aspect of the purification condition here is that it
renders irrelevant the domain-specific effects of the biparser,
i.e., those related to manipulating source strings. This
considerably simplifies any proof. Furthermore, the
definition of \lstinline{purify} is a \emph{monadic profunctor
  homomorphism} which provides a set of equations that can be used
to expedite the reasoning.

\begin{definition}
  A \emph{monadic profunctor homomorphism} between monadic profunctors \lstinline{P} and
  \lstinline{Q} is a polymorphic function \lstinline{proj :: P u v -> Q u v}
  such that:
\begin{align*}
  \lstinline{proj (comap}_P \; \lstinline{ f p)} \; & \equiv \;
  \lstinline{comap}_Q \; \lstinline{f (proj p)} \\[-0.2em]
  \lstinline{proj (p >>=}_P \; \lstinline{k)} \; & \equiv \;
  \lstinline{(proj p)} \;\, \lstinline{>>=}_Q \;\, \lstinline{(\\x -> proj (k x))} \\[-0.2em]
  \lstinline{proj (return}_P \; \lstinline{ x)} \; & \equiv \;
    \lstinline{return}_Q \; \lstinline{ x}
\end{align*}
\end{definition}
\iffalse
\begin{code}
homom1axiom :: (Profmonad p, Profmonad q) =>
   (forall u v . p u v -> q u v) -> (u -> Maybe u') -> p u' v -> [q u v]
homom1axiom proj f p = [proj (comap f p), comap f (proj p)]

homo2axiom :: (Profmonad p, Profmonad q, Monad (q u), Monad (p u)) =>
   (forall u v . p u v -> q u v) -> p u v -> (v -> p u w) -> [q u w]
homo2axiom proj p k = [proj (p >>= k), (proj p) >>= (\x -> proj (k x))]

homom3axiom :: (Profmonad p, Profmonad q, Monad (q u), Monad (p u)) =>
   (forall u v . p u v -> q u v) -> x -> [q u x]
homom3axiom proj x = [proj (return x), return x]
\end{code}
\fi

\begin{proposition}\label{prop:purify-homom}
  The \lstinline{purify :: Bp u v -> u -> Maybe v} operation for
  biparsers (above) is a monadic
  profunctor homomorphism between \lstinline{Bp} and the monadic
  profunctor \lstinline{PartialFun u v = u -> Maybe v}.
\end{proposition}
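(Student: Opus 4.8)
The plan is to recognise \lstinline{purify} as a composite of two monadic profunctor homomorphisms, so that the three equations need never be unfolded simultaneously. First, observe that \lstinline{PartialFun u v = u -> Maybe v} is, up to newtype (de)constructors, exactly \lstinline{Bwd Maybe u v}, and that its \lstinline{Monad (PartialFun u)} and \lstinline{Profunctor PartialFun} structure is precisely the one the \lstinline{Bwd} construction of Section~\ref{sec:constructing-mp} produces from the \lstinline{Maybe} monad (with \lstinline{MonadPartial Maybe} given by \lstinline{toFailure = id}): reader-over-\lstinline{Maybe} bind, and \lstinline{comap f m = (toFailure . f) >=> m}. Since \lstinline{Biparser} is (after unwrapping) \lstinline{Fwd (State String) :*: Bwd (WriterT Maybe String)} and \lstinline{print} is \lstinline{runWriterT} applied to the \lstinline{psnd} projection, we have \lstinline{purify p = Bwd (phi . g)} where \lstinline{g} is the function underlying the backward component of \lstinline{p} and \lstinline{phi = fmap fst : WriterT Maybe String a -> Maybe a}. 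Thus \lstinline{purify} factors as: project onto the backward component, then postcompose the function inside a \lstinline{Bwd} value with \lstinline{phi}.

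The first factor, \lstinline{psnd : (p :*: q) u v -> q u v}, is immediately a monadic profunctor homomorphism: it commutes with \lstinline{return}, \lstinline{(>>=)} and \lstinline{comap} by the very definitions of the \lstinline{Monad} and \lstinline{Profunctor} instances of \lstinline{(:*:)}. For the second factor I would prove two small lemmas. Lemma~A: \lstinline{phi = fmap fst} is a monad morphism \lstinline{WriterT Maybe String -> Maybe} that is moreover compatible with the \lstinline{MonadPartial} structure, i.e.\ \lstinline{phi . toFailure = toFailure}. Unfolding \lstinline{WriterT w Maybe}, \lstinline{phi (return x) = fmap fst (Just (x, mempty)) = Just x = return x}; the clause \lstinline{phi (m >>= k) = phi m >>= (phi . k)} follows from the \lstinline{Maybe} monad laws once the appended monoid elements are projected away by \lstinline{fst} — this is the familiar fact that collapsing a writer to the trivial monoid is a monad morphism. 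The \lstinline{toFailure} compatibility is a one-line case split on \lstinline{Nothing}/\lstinline{Just}, using the \lstinline{MonadPartial (WriterT w Maybe)} instance from the excerpt.

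Lemma~B: for any monad morphism \lstinline{phi : m -> m'} between \lstinline{MonadPartial} monads with \lstinline{phi . toFailure = toFailure}, the map sending \lstinline{Bwd g} to \lstinline{Bwd (phi . g)} is a monadic profunctor homomorphism \lstinline{Bwd m -> Bwd m'}. The three equations are checked directly against the \lstinline{Bwd} instances: \lstinline{return x = Bwd (\_ -> return x)} goes to \lstinline{Bwd (\_ -> phi (return x)) = Bwd (\_ -> return x)}; \lstinline{comap f (Bwd g) = Bwd ((toFailure . f) >=> g)} goes to \lstinline{Bwd (phi . ((toFailure . f) >=> g)) = Bwd ((toFailure . f) >=> (phi . g))} using that \lstinline{phi} preserves \lstinline{(>>=)} and \lstinline{toFailure}; and the \lstinline{(>>=)} equation follows from \lstinline{phi} preserving \lstinline{(>>=)} pointwise, the shared contravariant argument \lstinline{u} threading through unchanged on both sides. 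Composing \lstinline{psnd} with the \lstinline{Bwd}-lift of the \lstinline{phi} of Lemma~A, and using that monadic profunctor homomorphisms compose, yields that \lstinline{purify} is a monadic profunctor homomorphism between \lstinline{Bp} and \lstinline{PartialFun}.

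The one step requiring genuine care is Lemma~A, specifically the \lstinline{(>>=)}-preservation clause: one must expand the \lstinline{WriterT}-bind (which threads a \lstinline{Maybe} effect and a \lstinline{(++)} of logs), then use associativity of \lstinline{Maybe}-bind together with the fact that \lstinline{fst} discards the logs, so the monoid structure is irrelevant in the image. Everything else is routine. Equivalently, if one prefers to skip the abstraction, the proposition can be obtained by directly unfolding \lstinline{purify (return x)}, \lstinline{purify (p >>= k)} and \lstinline{purify (comap f p)} with the concrete \lstinline{Biparser} definitions; the same bookkeeping — threading \lstinline{u}, propagating \lstinline{Nothing}, discarding printed strings — is exactly where the content lies, and is equally routine.
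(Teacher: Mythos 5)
Your proof is correct, but it takes a more structural route than the paper, which does not spell out an argument for this proposition in the text (it is left to the mechanised Coq development, and the expected pen-and-paper proof is a direct verification of the three homomorphism equations against the concrete \lstinline{Biparser} instances, threading \lstinline{u}, propagating \lstinline{Nothing}, and discarding the printed strings). You instead factor \lstinline{purify} as the product projection \lstinline{psnd} followed by the \lstinline{Bwd}-lift of the log-discarding map \lstinline{fmap fst} from \lstinline{WriterT String Maybe} to \lstinline{Maybe}, and you correctly identify \lstinline{PartialFun} with \lstinline{Bwd Maybe} under \lstinline{toFailure = id} so that its monad and \lstinline{comap} structure is exactly the one induced by the paper's \lstinline{Bwd} construction. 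The two supporting lemmas are right and their verifications check out: \lstinline{psnd} commutes with \lstinline{return}, \lstinline{>>=} and \lstinline{comap} by the very definition of the \lstinline{(:*:)} instances, and a \lstinline{MonadPartial}-compatible monad morphism lifts to a homomorphism of \lstinline{Bwd} types, with the only genuine computation being that \lstinline{fmap fst} preserves \lstinline{return}, \lstinline{>>=} and \lstinline{toFailure} for the writer-over-\lstinline{Maybe} monad. What your decomposition buys is modularity and reuse: the domain-specific content is isolated in one small monad-morphism lemma, the rest is generic structure of \lstinline{Fwd :*: Bwd}, and the same recipe explains why the other purifications in the paper (e.g.\ for bigenerators, where the projection is essentially \lstinline{psnd} itself) are homomorphisms; the direct unfolding the paper relies on is shorter to state but repeats the same bookkeeping wholesale and generalises less readily.
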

\begin{corollary}\label{cor:string-backward-rt}
  (of Theorem~\ref{thm:weak-to-backward} with Corollary~\ref{prop:string-weak} and Proposition~\ref{prop:purify-homom})
  The biparser \lstinline{string} is backward round tripping.
\end{corollary}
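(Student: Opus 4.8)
The plan is to apply Theorem~\ref{thm:weak-to-backward} directly: to show that \lstinline{string} is backward round tripping it suffices to establish (i) that \lstinline{string} is weak backward round tripping, and (ii) that \lstinline{purify string x = Just x} for every \lstinline{x :: String}. Part (i) is exactly Corollary~\ref{prop:string-weak}, which we may assume. So the real work is (ii), i.e.\ showing that the purification of \lstinline{string} is the (total) identity on strings.

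For (ii) the key tool is Proposition~\ref{prop:purify-homom}: \lstinline{purify} is a monadic profunctor homomorphism from \lstinline{Bp} to \lstinline{PartialFun}, so it commutes with \lstinline{return}, \lstinline{>>=}, and \lstinline{comap}. This lets us compute \lstinline{purify string} by recursion over the \emph{definition} of \lstinline{string}, never unfolding the underlying state/writer machinery. Since \lstinline{string = comap length int >>= (\\n -> replicateBiparser n char)} (using \lstinline{upon = flip comap}), the homomorphism equations give \lstinline{purify string = comap length (purify int) >>= (\\n -> purify (replicateBiparser n char))}, now interpreted in \lstinline{PartialFun}, where \lstinline{>>=} and \lstinline{comap} are Kleisli bind and Kleisli composition for \lstinline{Maybe}. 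It therefore remains to compute the purifications of the atomic and auxiliary biparsers \lstinline{char}, \lstinline{digits}, \lstinline{int}, and \lstinline{replicateBiparser n char}.

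The sub-steps, in order: (a) \lstinline{purify char = Just}, by unfolding \lstinline{print char c = Just (c, [c])}; (b) by induction on the input string, and again using the homomorphism to unfold the \lstinline{do}-block of \lstinline{digits} (with \lstinline{purify (digits `upon` tail) = (\\s -> purify digits (tail s))}), \lstinline{purify digits} maps any string consisting of decimal digits followed by a single space to itself, and in particular \lstinline{purify digits (show n ++ " ") = Just (show n ++ " ")} whenever \lstinline{n >= 0}; (c) hence \lstinline{purify int n = purify digits (show n ++ " ") >>= (\\ds -> Just (read ds)) = Just (read (show n ++ " ")) = Just n} for \lstinline{n >= 0}, using the \lstinline{read}/\lstinline{show} round trip on naturals and that \lstinline{read} ignores trailing whitespace; (d) by induction on \lstinline{n}, \lstinline{purify (replicateBiparser n char) l = Just l} for every list \lstinline{l} of length exactly \lstinline{n} (base case: both sides \lstinline{Just []}; step: \lstinline{purify char (head l) = Just (head l)} plus the hypothesis applied to \lstinline{tail l}, of length \lstinline{n-1}). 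Assembling: for any \lstinline{s}, \lstinline{comap length (purify int) s = purify int (length s) = Just (length s)} since \lstinline{length s >= 0}, and then \lstinline{purify string s = purify (replicateBiparser (length s) char) s = Just s} by (d) with \lstinline{l = s}. This establishes (ii), and Theorem~\ref{thm:weak-to-backward} closes the corollary.

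The main obstacle is step (b): \lstinline{purify digits} is \emph{not} the identity on all strings (on \lstinline{"12 xyz"} it returns \lstinline{Just "12 "}), so the induction must be phrased over exactly the class of strings that \lstinline{string} actually feeds to \lstinline{digits} --- those of the form \lstinline{show n ++ " "} --- and one must check these are nonempty and digit-prefixed so that neither \lstinline{head} nor the recursive \lstinline{tail} call gets stuck. A secondary wrinkle is that the \lstinline{read}/\lstinline{show} inverse in step (c) holds only for nonnegative integers; here this is harmless because the integer in question is always a list length, but it is the reason the argument does not transfer verbatim to a biparser reading arbitrary \lstinline{Int}s.
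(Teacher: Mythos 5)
Your proof is correct and follows essentially the same route as the paper: apply Theorem~\ref{thm:weak-to-backward} with Corollary~\ref{prop:string-weak}, and use the purification homomorphism (Proposition~\ref{prop:purify-homom}) to push \lstinline{purify} through the definition of \lstinline{string}, reducing everything to the pure projections of \lstinline{char}, \lstinline{int}, and \lstinline{replicateBiparser}. The only differences are cosmetic: you specialise the replicate lemma to \lstinline{char} (the paper states it generally via \lstinline{mapM}) and you spell out the \lstinline{digits} and \lstinline{read}/\lstinline{show} details (nonnegativity, trailing space) that the paper treats as straightforward.
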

\begin{proof}
 First prove (in
\iflongVersion Appendix~\ref{app:aux-results}%
\else Appendix B~\cite{EXTENDED}%
\fi%
) the following properties
of biparsers \lstinline{char}, \lstinline{int},
and \lstinline{replicatedBp :: Int -> Bp u v -> Bp [u] [v]}
  (writing \lstinline{proj} for \lstinline{purify}):
  \begin{align}
    \lstinline{proj char n} & \equiv \lstinline{Just n} \label{eq:proj-char} \\[-0.2em]
    \lstinline{proj int n} & \equiv \lstinline{Just n} \label{eq:proj-int} \\[-0.2em]
    \lstinline{proj (replicateBp (length xs) p) xs} & \equiv \lstinline{mapM (proj p) xs}
\label{eq:proj-replicate}
  \end{align}
  From these and the homomorphism properties
  we can prove \lstinline{proj string = Just}:
  \begin{align*}
    & \lstinline{proj string xs}  \\[-0.2em]
    \equiv \;\, &
       \lstinline{proj (comap length int >>= \\n -> replicateBp n
    char) xs} \\[-0.2em]
\textit{Prop.\ref{prop:purify-homom}} \equiv \;\, & \lstinline{(comap length (proj int) >>= \\n -> proj (replicateBp n char)) xs}\\[-0.2em]
(\ref{eq:proj-int}) \, \equiv \;\, & \lstinline{(comap length Just >>= \\n -> proj  (replicateBp n char)) xs} \\[-0.2em]
\textit{Def.\ref{def:promonad}} \, \equiv \;\, & \lstinline{proj (replicateBp (length xs) char) xs} \\[-0.2em]
(\ref{eq:proj-replicate}) \equiv \;\, & \lstinline{mapM (proj char) xs} \\[-0.2em]
    (\ref{eq:proj-char}) \equiv \;\, & \lstinline{mapM Just xs} \\[-0.2em]
    \{\textit{monad}\} \equiv \;\, & \lstinline{Just xs}
  \end{align*}
  Combining \lstinline{proj string = Just} with
  Corollary~\ref{prop:string-weak} (\lstinline{string} is weak backward round
  tripping) enables Theorem~\ref{thm:weak-to-backward}, proving that \lstinline{string} is
  backward round tripping.
\end{proof}

The other two core examples in this paper also permit a definition of
\lstinline{purify}.  We capture the general pattern as follows:
\begin{definition}
  \label{def:purifiable}
  A \emph{purifiable monadic profunctor} is a monadic
  profunctor \lstinline{P} with a homomorphism \lstinline{proj} from \lstinline{P} to the
  monadic profunctor of partial functions \lstinline{- -> Maybe -}.
  We say that \lstinline{proj p} is the \emph{pure projection} of \lstinline{p}.
\end{definition}
\begin{definition}
A pure projection \lstinline{proj p :: u -> Maybe v} is
called the \emph{identity projection} when \lstinline{proj p x = Just x}
 for all \lstinline{x :: u}.
\end{definition}

\noindent
Here and in Sections~\ref{sec:lenses} and \ref{sec:generators}, identity projections
enable compositional round-tripping properties to
be derived from more general non-compositional
properties, as seen above for backward round tripping of biparsers.

% We compare the above approach to a traditional proof of backwards round
% tripping for the same parser and printer implemented separately in
% \iflongVersion
% \Cref{app:manual-parser-printer}.
% \else
% Appendix C of the extended version\cite{EXTENDED}.
% \fi

We have neglected forward round tripping, which is not
compositional, not even in a weakened form.
However, we can generalise compositionality
with conditions related to \emph{injectivity}, enabling a
generalisation of forward round tripping. We call the generalised
meta-property \emph{quasicompositionality}.
% older text
%specialisation of forward
%round-tripping is compositional with respect to a subset of
%possible programs of a particular form, which all our examples conform
%to;\marginpar{\dnote{check}} we call this restriction \emph{quasicompositionality}.

\subsection{Quasicompositionality for monadic profunctors}
\label{subsec:quasicompositionality}

\noindent
An injective function $f : A \rightarrow B$ is a function for which
there exists a left inverse $f^{-1} : B \rightarrow A$, i.e., where
$f^{-1} \circ f = id$. We can see this pair of functions as a simple
kind of bidirectional program, with a forward round-tripping property
(assuming $f$ is the forwards direction). We can lift the notion
of injectivity to the monadic profunctor setting and capture forward round-tripping
properties that are preserved by the monadic profunctor operations,
given some additional injectivity-like restriction. We first formalise
the notion of an \emph{injective arrow}.

Informally, an injective arrow \lstinline{k :: v -> m w} produces an output
from which the input can be recalculated:
% As the naming suggests,
%injective arrows generalize injective functions.

% However, \lstinline{comap} and \lstinline{(>>=)} can be
%redefined in terms of \lstinline{bibind} and \lstinline{return},
%although less polymorphically.

\begin{definition}
 \label{def:injective-arrow}
Let \lstinline{m} be a monad. A function \lstinline{k :: v -> m w}
is an \emph{injective arrow} if there exists
\lstinline{k' :: w -> v} (the \emph{left arrow inverse} of
\lstinline{k}) such that for all \lstinline{x :: v}:
\begin{align*}
\lstinline{k x >>= \\y -> return (x, y)} \;\; \equiv \;\;
\lstinline{k x >>= \\y -> return (k' y, y)}
\end{align*}
\iffalse
\begin{code}
leftArrowInverse
  :: (Eq (m (v, w)), Monad m) => (v -> m w) -> (w -> v) -> v -> Bool
leftArrowInverse k k' x =
    (k x >>= \y -> return (x, y))
 == (k x >>= \y -> return (k' y, y))
\end{code}
\fi
\end{definition}

Next, we define \emph{quasicompositionality} which extends the
compositionality meta-property with the requirement for \lstinline{>>=} to be
applied to injective arrows:

\begin{definition}
  \label{def:quasicomp}
  Let \lstinline{P} be a monadic profunctor.
  A property $\mathcal{R}^{\lstinline{u}}_{\lstinline{v}} \subseteq \lstinline{P u v}$
  indexed by types \lstinline{u} and \lstinline{v}
  is \emph{quasicompositional}
  if the following holds
  \begin{enumerate}[itemsep=-0.4em]
    \item
      For all \lstinline{x :: v},%
       \vspace{-1.25em}
      \begin{equation}\label{eqn:qcomp-return}\tag{qcomp-return}
        \propR{\lstinline{u}}{\lstinline{v}}{\lstinline{(return x)}}
      \end{equation}
    \item
      For all \lstinline{p :: P u v},\;
      \lstinline{k :: v -> P u w},\;
      {\bf if \lstinline{k} is an injective arrow,}
      \vspace{-0.6em}
      \begin{equation}\label{eqn:qcomp-bind}\tag{qcomp-bind}
        \left(\propR{\lstinline{u}}{\lstinline{v}}{\lstinline{p}}\right) \,\wedge\,
        \left(\forall \lstinline{v} . \; \propR{\lstinline{u}}{\lstinline{w}}{(\lstinline{k v})}\right)
        \; \implies \;
        \propR{\lstinline{u}}{\lstinline{w}}{\lstinline{(p >>= k)}}
      \end{equation}
    \item
      For all \lstinline{p :: P u' v},\;
      \lstinline{f :: u -> Maybe u'},
      \vspace{-0.6em}
      \begin{equation}\label{eqn:qcomp-comap}\tag{qcomp-comap}
        \propR{\lstinline{u'}}{\lstinline{v}}{\lstinline{p}} \wedge
        \Longrightarrow
        \propR{\lstinline{u}}{\lstinline{w}}{\lstinline{(comap f p)}}
      \end{equation}
  \end{enumerate}
\end{definition}

%\begin{lemma}
%Every compositional property is also a quasicompositional.% $\mathcal{Q}$ by the definition
%$\mathcal{Q}^{\lstinline{u}} :=
%  \mathcal{R}_{\lstinline{u}}^{\lstinline{u}}$.
%\end{lemma}

\noindent
We now formulate a weakening of forward round tripping. %, which adds a condition
%for the pure projection. %% UH no it doesn't
As with weak backward round tripping, we rely on the
idea that the printer \emph{outputs} both a string and the value that was
printed, so that we need to compare the outputs of both the parser and the
printer, as opposed to comparing the output of the parser with the input of the
printer as in (strong) forward round tripping. If running the parser component of
a biparser on a string \lstinline{s01} yields a value \lstinline{y} and a
remaining string \lstinline{s1}, and the printer outputs that same
value \lstinline{y} along with a string \lstinline{s0}, then \lstinline{s0}
is the prefix of \lstinline{s01} that was consumed by the parser,
i.e., \lstinline{s01 = s0 ++ s1}.
\begin{definition}
A biparser \lstinline{p : Bp u v} is \textit{weak forward round
  tripping} if for all \lstinline{x :: u}, \lstinline{y :: v},
and \lstinline{s0, s1, s01 :: String} then:
\begin{equation*}
  \lstinline{parse p s01 = (y, s1)}
\;\,\wedge\;\,
  \lstinline{print p x = Just (y, s0)}
\;\; \Longrightarrow \;\;
  \lstinline{s01 = s0 ++ s1}
\end{equation*}
\end{definition}
%
\iffalse
\begin{code}
weakForwardRoundTrip :: Eq v => Biparser u v -> u -> v -> String -> Bool
weakForwardRoundTrip p x y s01 =
  let (y, s1) = parse p s01
  in case print p x of
      Just (y', s0) ->
        if y==y' then s01 == s0 ++ s1 else False
\end{code}
\fi

\vspace{-1em} % because restatable seems to add space unnecessarily
\begin{restatable}{prop}{wfquasi}
%\begin{proposition}
  \label{prop:wf-quasi}
  Weak forward round tripping is quasicompositional.
%\end{proposition}
\end{restatable}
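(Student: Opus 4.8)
The plan is to check the three closure conditions of Definition~\ref{def:quasicomp} for weak forward round tripping by unfolding the monadic-profunctor operations on \lstinline{Biparser} into their \lstinline{parse} and \lstinline{print} components. The \lstinline{return} and \lstinline{comap} cases are immediate. For \lstinline{return x}: the parser consumes nothing, so \lstinline{parse (return x) s01 = (x, s01)} forces the parsed value to be \lstinline{x} and the leftover to be \lstinline{s01}, while \lstinline{print (return x) _ = Just (x, "")} forces \lstinline{s0 = ""}; hence \lstinline{s0 ++ s1 = s01} and \eqref{eqn:qcomp-return} holds. For \lstinline{comap f p}: the parser component is unchanged and the printer is merely pre-composed with \lstinline{f}, so \lstinline{print (comap f p) x = Just (y, s0)} yields some \lstinline{x'} with \lstinline{f x = Just x'} and \lstinline{print p x' = Just (y, s0)}; applying weak forward round tripping of \lstinline{p} at \lstinline{x'}, \lstinline{y}, \lstinline{s0}, \lstinline{s1}, \lstinline{s01} gives \lstinline{s01 = s0 ++ s1}, so \eqref{eqn:qcomp-comap} holds.

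The bind case \eqref{eqn:qcomp-bind} is the heart of the argument and the only place the injective-arrow hypothesis is used. Let \lstinline{q = p >>= k} with \lstinline{k :: v -> Biparser u w} an injective arrow for the monad \lstinline{Biparser u}, with left arrow inverse \lstinline{k' :: w -> v}; assume \lstinline{p} and each \lstinline{k v} are weak forward round tripping, and suppose \lstinline{parse q s01 = (z, s1)} and \lstinline{print q x = Just (z, s0)}. Unfolding the bind definition, the parser run threads an intermediate value: \lstinline{parse p s01 = (v, t)} and \lstinline{parse (k v) t = (z, s1)}; unfolding the printer run (which succeeds, so each sub-step does too): \lstinline{print p x = Just (v', a)}, \lstinline{print (k v') x = Just (z, b)}, and \lstinline{s0 = a ++ b}. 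The difficulty is that the value \lstinline{v} threaded in the forward direction need not obviously equal the value \lstinline{v'} threaded backwards. To bridge this, I expand the equation of Definition~\ref{def:injective-arrow} into its \lstinline{parse} and \lstinline{print} components; it states exactly that whenever \lstinline{k a} (for any \lstinline{a :: v}), run as a parser or as a printer, outputs some \lstinline{z :: w}, then \lstinline{k' z = a}. Applying this to \lstinline{parse (k v) t = (z, s1)} gives \lstinline{v = k' z}, and to \lstinline{print (k v') x = Just (z, b)} gives \lstinline{v' = k' z}, hence \lstinline{v = v'}.

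With the intermediate values identified, the remainder is bookkeeping on string concatenation: weak forward round tripping of \lstinline{p} applied to \lstinline{parse p s01 = (v, t)} and \lstinline{print p x = Just (v, a)} gives \lstinline{s01 = a ++ t}; weak forward round tripping of \lstinline{k v} applied to \lstinline{parse (k v) t = (z, s1)} and \lstinline{print (k v) x = Just (z, b)} gives \lstinline{t = b ++ s1}; combining and reassociating, \lstinline{s01 = a ++ (b ++ s1) = (a ++ b) ++ s1 = s0 ++ s1}, as required.

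The main obstacle is thus the forward/backward mismatch in the bind case: the two directions of \lstinline{p >>= k} run the continuation on independently computed arguments, and nothing in the monadic-profunctor structure forces these to coincide. Reconciling them is precisely what the injective-arrow side condition of quasicompositionality buys, and the step needing the most care is translating that abstract condition (an equality in \lstinline{Biparser u (v, w)}) into the concrete value-recovery facts about \lstinline{parse} and \lstinline{print} used above. A secondary point to handle carefully is the partiality of \lstinline{print}: since the \lstinline{Maybe} in \lstinline{print q x} short-circuits on \lstinline{Nothing}, the assumption that it is \lstinline{Just} is what licenses decomposing it into successful sub-steps.
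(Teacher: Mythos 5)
Your proof is correct and takes essentially the same route as the paper's: the only nontrivial case is bind, and you resolve the forward/backward mismatch exactly as the paper does, by projecting the injective-arrow equation through the \lstinline{parse} and \lstinline{print} components (which are monad morphisms) to conclude that both intermediate values equal \lstinline{k'} applied to the common output, then finishing with the same concatenation bookkeeping. Your explicit treatment of the \lstinline{return} and \lstinline{comap} cases matches what the paper leaves to the mechanised proofs.
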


\begin{proof}
  We sketch the \ref{eqn:qcomp-bind} case, where \lstinline{p = (m >>= k)}
  for some \lstinline{m} and \lstinline{k} that are weak forward roundtripping.
  From \lstinline{parse (m >>= k) s01 = (y, s1)}, it follows that there exists
  \lstinline{z}, \lstinline{s} such that \lstinline{parse m s01 = (z, s)} and
  \lstinline{parse (k z) s = (y, s1)}. Similarly \lstinline{print (m >>= k) x = Just (y, s0)}
  implies there exists \lstinline {z'}, \lstinline{s0'} such that
  \lstinline{print m x} \lstinline{=} \lstinline{Just (z', s0')} and
  \lstinline{print (k z') x} \lstinline{=} \lstinline{Just (y, s1')}
  and \lstinline{s0} \lstinline{=} \lstinline{s0' ++ s1'}. Because \lstinline{k} is an injective arrow,
  we have \lstinline{z = z'} (see appendix).
  We then use the assumption that
  \lstinline{m} and \lstinline{k} are weak forward roundtripping on \lstinline{m}
  and on \lstinline{k a},
  and deduce that \lstinline{s01 = s0' ++ s} and \lstinline{s = s1' ++ s1}
  therefore \lstinline{s01 = s0 ++ s1}.
\end{proof}

\begin{proposition}\label{prop:wf-quasi-char}
The \lstinline{char} biparser is weak forward
round tripping.
\end{proposition}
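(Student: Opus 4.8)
The plan is to prove Proposition~\ref{prop:wf-quasi-char} directly, by unfolding the definitions of \lstinline{char} from \S\ref{subsec:biparser-promonad} and checking the implication with a short calculation. Unlike the compositional (inductive) cases handled via \ref{eqn:qcomp-bind} and \ref{eqn:qcomp-comap}, there is nothing to recurse on here: \lstinline{char} is an atomic component, so this is just a base case discharged by expanding \lstinline{mkBP}, \lstinline{parse}, and \lstinline{print}.

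First I would recall that \lstinline{parse char} acts as \lstinline{\ (c : s) -> (c, s)} and \lstinline{print char} acts as \lstinline{\ c -> Just (c, [c])}. Fix \lstinline{x, y :: Char} and \lstinline{s0, s1, s01 :: String}, and assume both \lstinline{parse char s01 = (y, s1)} and \lstinline{print char x = Just (y, s0)}. From the first assumption, the left-hand side is only defined when \lstinline{s01} is non-empty, so write \lstinline{s01 = c : s}; then \lstinline{parse char s01 = (c, s)}, hence \lstinline{y = c} and \lstinline{s1 = s}, that is, \lstinline{s01 = y : s1}. From the second assumption, \lstinline{print char x = Just (x, [x])}, so matching with \lstinline{Just (y, s0)} gives \lstinline{x = y} and \lstinline{s0 = [y]}.

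I would then conclude by combining these facts: \lstinline{s0 ++ s1 = [y] ++ s1 = y : s1 = s01}, which is exactly the equality \lstinline{s01 = s0 ++ s1} required by weak forward round tripping.

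The main obstacle, such as it is, is bookkeeping rather than mathematics: being precise about the smart-constructor/deconstructor layer (\lstinline{mkBP}, and the projections \lstinline{parse}/\lstinline{print}) so that the unfoldings above are literally justified, and noting the (harmless) partiality of \lstinline{parse char} on the empty string --- which the hypothesis \lstinline{parse char s01 = (y, s1)} rules out anyway. In the Coq development all of this is discharged by simplification and a case analysis on \lstinline{s01}.
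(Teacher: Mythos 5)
Your proof is correct and takes essentially the same route as the paper, which treats \lstinline{char} as an atomic base case discharged by expanding the definitions from \S\ref{subsec:biparser-promonad} (with the details relegated to the Coq development). Your case analysis on \lstinline{s01} and the matching of \lstinline{print char x = Just (x, [x])} against \lstinline{Just (y, s0)} is exactly the intended calculation, yielding \lstinline{s01 = y : s1 = [y] ++ s1 = s0 ++ s1}.
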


\begin{corollary}\label{prop:wf-quasi-string}
Propositions~\ref{prop:wf-quasi} and \ref{prop:wf-quasi-char}
imply that \lstinline{string} is weak forward round tripping
if we restrict the parser to inputs whose digits
do not contain redundant leading zeros.
\end{corollary}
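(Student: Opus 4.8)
The plan is to obtain weak forward round tripping of \lstinline{string} from a single (iterated) use of quasicompositionality (Proposition~\ref{prop:wf-quasi}), feeding it the only primitive that occurs, \lstinline{char}, via Proposition~\ref{prop:wf-quasi-char}. Recall \lstinline{string = int `upon` length >>= \n -> replicateBiparser n char}, that \lstinline{int} is \lstinline{digits `upon` printedInt} bound to the continuation sending \lstinline{ds} to \lstinline{return (read ds)}, and that \lstinline{digits} and \lstinline{replicateBiparser} are recursive. Since Proposition~\ref{prop:wf-quasi} only certifies finite terms over \lstinline{return}, \lstinline{>>=} and \lstinline{comap}, I first peel the two recursions off with ordinary inductions: \lstinline{replicateBiparser n char} is weak forward round tripping by induction on \lstinline{n} (base case \lstinline{return []} by \eqref{eqn:qcomp-return}, step case by \eqref{eqn:qcomp-bind} and \eqref{eqn:qcomp-comap} from \lstinline{char} and the induction hypothesis), and \lstinline{digits} is weak forward round tripping by induction on the length of the digit prefix it consumes (base case a \lstinline{return}, step case again by quasicompositionality from \lstinline{char}). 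Each \lstinline{>>=} carries the side condition of \eqref{eqn:qcomp-bind} that its continuation be an injective arrow (Definition~\ref{def:injective-arrow}); for the cons-continuations inside \lstinline{digits} and \lstinline{replicateBiparser} the outputs retain the list being built, so \lstinline{head} and \lstinline{tail} serve as left arrow inverses, and the top-level continuation \lstinline{\n -> replicateBiparser n char} is an injective arrow with left arrow inverse \lstinline{length}, because every successful run of \lstinline{replicateBiparser n char}, in either direction, returns a list of length exactly \lstinline{n}.

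The one delicate join, and the one that forces the hypothesis in the statement, is the last \lstinline{>>=} of \lstinline{int}, whose continuation sends \lstinline{ds} to \lstinline{return (read ds)}. Over arbitrary digit strings this is \emph{not} an injective arrow --- \lstinline{read} identifies \lstinline{"07 "} with \lstinline{"7 "} --- and correspondingly \lstinline{int} itself fails weak forward round tripping (parsing \lstinline{"07 rest"} and then printing gives \lstinline{"7 rest"}, not \lstinline{"07 rest"}). Restricting the inputs of \lstinline{parse string} to strings whose digit prefix has no redundant leading zeros confines the intermediate values \lstinline{ds} that can arise to the canonical forms \lstinline{show n ++ " "}, on which \lstinline{\n -> show n ++ " "} is a left arrow inverse of \lstinline{read}. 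Reading \eqref{eqn:qcomp-bind} and Definition~\ref{def:injective-arrow} over that restricted domain, the continuation is an injective arrow, so Proposition~\ref{prop:wf-quasi} applies to \lstinline{int}; and since \eqref{eqn:qcomp-comap} leaves the parser direction (hence the input restriction) untouched, it applies to \lstinline{int `upon` length} and then, with the \lstinline{length}-inverse fact above, to \lstinline{string}.

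Assembling the pieces: the two inductions give \lstinline{digits} and every \lstinline{replicateBiparser n char} weak forward round tripping; \eqref{eqn:qcomp-comap} lifts this to \lstinline{digits `upon` printedInt} and \lstinline{int `upon` length}; \eqref{eqn:qcomp-bind} with the injective-arrow checks lifts it first to \lstinline{int} and then to \lstinline{string} on the restricted domain; and \eqref{eqn:qcomp-return} covers the \lstinline{return} leaves throughout. I expect the main obstacle to be precisely the \lstinline{read}/leading-zeros interaction: one must see that the stated restriction is exactly what turns the \lstinline{int} join into an application of \eqref{eqn:qcomp-bind}, and must set up the domain-restricted readings of ``injective arrow'' and ``weak forward round tripping'' carefully enough that the restriction propagates through \lstinline{comap} and \lstinline{>>=} --- which it does, since neither alters the parser's input. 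A secondary, purely bookkeeping, difficulty is that Proposition~\ref{prop:wf-quasi} speaks only of finite terms, so the two recursive combinators genuinely need the extra layer of induction.
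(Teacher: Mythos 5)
Your proposal is correct and follows essentially the same route as the paper: identify that every right operand of \lstinline{>>=} in \lstinline{string} is an injective arrow except \lstinline{\ds -> return (read ds)} in \lstinline{int}, observe that the no-leading-zeros restriction makes \lstinline{read} injective on the values that can arise, and then conclude by Propositions~\ref{prop:wf-quasi} and~\ref{prop:wf-quasi-char}. You merely spell out details the paper leaves implicit (the inductions unfolding \lstinline{digits} and \lstinline{replicateBiparser}, and the explicit left arrow inverses \lstinline{head}, \lstinline{tail}, \lstinline{length}), which is consistent with, not divergent from, its argument.
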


\begin{proof}
  All of the right operands of \lstinline{>>=} in the definition of
  \lstinline{string} are injective arrows, apart from
\lstinline{\ds -> return (read ds)}
at the end of the auxiliary \lstinline{int}
  biparser. Indeed, the \lstinline{read} function is not injective
  since multiple strings may parse to the same integer:
  \lstinline{read "0" = read "00" = 0}. But the pre-condition to
  the proposition (no redundant leading zero digits) restricts the input
  strings so that \lstinline{read} is injective.  The rest of the
  proof is a corollary of
  Propositions~\ref{prop:wf-quasi} and \ref{prop:wf-quasi-char}.
\end{proof}

Thus, quasicompositionality gives us scalable reasoning for
weak forward round tripping, which is by construction for biparsers: we
just need to prove this property for individual atomic biparsers.
Similarly to backward round tripping, we can prove forward
round tripping by combining weak forward round tripping with
the identity projection property:
\begin{theorem}
If \lstinline{p :: P u u} is weak forward round-tripping, and for all
\lstinline{x :: u}, \lstinline{purify p x = Just x}, then
\lstinline{p} is forward round tripping.
\end{theorem}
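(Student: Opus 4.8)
The plan is to mirror exactly the structure of the proof that weak backward round tripping plus the identity-projection property yields (strong) backward round tripping (Theorem~\ref{thm:weak-to-backward}), but now for the forward direction. We are given an aligned biparser \lstinline{p :: P u u} that is weak forward round tripping, and we know that for all \lstinline{x :: u}, \lstinline{purify p x = Just x}, i.e., \lstinline{fmap fst (print p x) = Just x}. We must show \lstinline{p} is forward round tripping: whenever \lstinline{parse p s = (x, "")}, then \lstinline{fmap snd (print p x) = Just s}.

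First I would unfold the hypothesis \lstinline{purify p x = Just x}. Since \lstinline{print p x :: Maybe (v, String)} and taking \lstinline{fmap fst} of it gives \lstinline{Just x}, the \lstinline{print} cannot have failed, so there is some string \lstinline{s0} with \lstinline{print p x = Just (x, s0)} (the first component is forced to be \lstinline{x} by the purification hypothesis). This is the key move: purification collapses the distinction between the printer's input and its output, so the value component of \lstinline{print p x} is exactly \lstinline{x}. Now I would instantiate weak forward round tripping at \lstinline{s01 := s}, the given source string, with \lstinline{parse p s = (x, s1)} where \lstinline{s1 = ""} by hypothesis, and with \lstinline{print p x = Just (x, s0)} just obtained — note the value components genuinely agree (both are \lstinline{x}), which is precisely what lets weak forward round tripping fire here even though in general it only constrains matching outputs. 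Weak forward round tripping then yields \lstinline{s = s0 ++ s1 = s0 ++ "" = s0}. Hence \lstinline{fmap snd (print p x) = fmap snd (Just (x, s0)) = Just s0 = Just s}, which is the conclusion.

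The only subtlety — and the step I expect to need the most care — is justifying that \lstinline{print p x} produces a value component equal to \lstinline{x} rather than merely equal to \emph{some} \lstinline{y} for which the weak property applies. This is where the identity-projection hypothesis is essential and where the argument would break without it: weak forward round tripping alone relates the parser's output and the printer's output only \emph{when they coincide}, so we need \lstinline{purify} to guarantee the printer's output coincides with the parsed value \lstinline{x}. Everything else is routine \lstinline{Maybe}-reasoning: checking that \lstinline{fmap fst (print p x) = Just x} forces \lstinline{print p x} to be of the form \lstinline{Just (x, _)} (it cannot be \lstinline{Nothing}, since \lstinline{fmap fst Nothing = Nothing}), and discharging the trivial \lstinline{s0 ++ "" = s0}. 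I would present the proof in three short lines corresponding to these three observations.
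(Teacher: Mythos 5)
Your proof is correct and follows essentially the same route the paper takes for the analogous backward-direction result (Theorem~\ref{thm:weak-to-backward}): use the identity-projection hypothesis to force \lstinline{print p x = Just (x, s0)}, then instantiate the weak property (here with \lstinline{s1 = ""}) to identify \lstinline{s0} with the consumed string. Your flagged subtlety—that \lstinline{purify} is what makes the printer's output coincide with the parsed value so the weak property can fire—is exactly the point the paper's argument hinges on as well.
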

\begin{corollary}
The biparser \lstinline{string} is forward round tripping
by the above theorem (with identity projection
shown in the proof of Corollary~\ref{cor:string-backward-rt}) and
Corollary~\ref{prop:wf-quasi-string}.
\end{corollary}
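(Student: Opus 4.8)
The plan is to invoke the theorem immediately preceding, specialised to \lstinline{p = string}, so that the entire argument reduces to checking its two hypotheses, both of which have already been established. The first hypothesis, that \lstinline{string} is weak forward round tripping, is Corollary~\ref{prop:wf-quasi-string}: it follows from quasicompositionality of weak forward round tripping (Proposition~\ref{prop:wf-quasi}) applied to the definition of \lstinline{string}, using that \lstinline{char} is weak forward round tripping (Proposition~\ref{prop:wf-quasi-char}) and that every right operand of \lstinline{>>=} occurring in \lstinline{string} (and in the auxiliary \lstinline{int}, \lstinline{digits}, and the \lstinline{replicateBp} combinator) is an injective arrow. The second hypothesis, the identity-projection condition \lstinline{purify string x = Just x} for all \lstinline{x}, is precisely the equation \lstinline{proj string = Just} derived in the proof of Corollary~\ref{cor:string-backward-rt} from the homomorphism laws for \lstinline{purify} (Proposition~\ref{prop:purify-homom}) together with the auxiliary identities (\ref{eq:proj-char})--(\ref{eq:proj-replicate}). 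So the proof itself is a one-line instantiation: both antecedents hold, hence \lstinline{string} is forward round tripping.

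The one point that needs care is the domain restriction. Corollary~\ref{prop:wf-quasi-string} only yields weak forward round tripping for \lstinline{string} on source strings whose digit prefixes contain no redundant leading zeros, because \lstinline{read} fails to be injective there (\lstinline{read "0" = read "00"}), so the \ref{eqn:qcomp-bind} clause of quasicompositionality is unavailable for the final \lstinline{\\ds -> return (read ds)} step of \lstinline{int}. Accordingly the theorem must be applied in a relativised form. The cleanest way to justify this is to observe that the proof of the theorem --- which is the forward-direction analogue of Theorem~\ref{thm:weak-to-backward} --- never uses totality of the property: it only chases the two round-tripping implications through the defining equations of \lstinline{parse} and \lstinline{print}, so it stays valid verbatim when both the weak-forward-round-tripping hypothesis and the forward-round-tripping conclusion are restricted to the same set of inputs. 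I would therefore phrase the conclusion as: for every source string \lstinline{s} with no redundant leading zeros, \lstinline{parse string s = (x, "")} implies \lstinline{fmap snd (print string x) = Just s}.

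The main obstacle is thus bookkeeping rather than mathematics: one must confirm that the identity-projection equation is unconditional --- \lstinline{proj string = Just} is proved with no side condition in Corollary~\ref{cor:string-backward-rt} --- so that weak forward round tripping is the only hypothesis carrying a restriction, and that restriction then propagates unchanged into the final statement. No new calculation beyond what already appears in the proofs of Corollaries~\ref{cor:string-backward-rt} and \ref{prop:wf-quasi-string} is required.
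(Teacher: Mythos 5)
Your proposal matches the paper's argument exactly: the corollary is just an instantiation of the preceding theorem, with the weak forward round-tripping hypothesis supplied by Corollary~\ref{prop:wf-quasi-string} and the identity-projection hypothesis supplied by the computation \lstinline{proj string = Just} carried out in the proof of Corollary~\ref{cor:string-backward-rt}. Your additional point about relativising both hypothesis and conclusion to inputs without redundant leading zeros is a sound refinement of a restriction the paper leaves implicit, but it does not change the approach.
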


%\paragraph{Summary}

\noindent
In summary, for any BX we can consider two round-tripping properties: fowards-then-backwards and
backwards-then-forwards, called just \emph{forward} and
\emph{backward} here respectively.
Whilst combinator-based approaches can guarantee round-tripping
by construction, we have made a trade-off to get greater expressivity
in the monadic approach. However, we regain the
ability to reason about bidirectional transformations in a manageable, scalable
way if round-tripping properties are compositional. Unfortunately, due to the monadic
profunctor structuring, this tends not to be the case. Instead,
weakened round-tripping properties can be compositional or quasicompositional (adding
injectivity). In such cases, we recover the stronger
property by proving a simple property on aligned transformations: that
the backwards direction faithfully reproduces its input as its output
(\emph{identity projection}).
\iflongVersion
Appendix~\ref{app:manual-parser-printer}
\else
Appendix C in our extended manuscript~\cite{EXTENDED}
\fi
compares this reasoning approach to a proof of backwards round
tripping for separately implemented parsers and printers (not
using our combined monadic approach).

%We apply this
%approach in the next two sections, but in less detail.

\section{Monadic bidirectional programming for lenses}
\label{sec:lenses}
%%%%% Haskell preamble
\iffalse
\begin{code}
{-# LANGUAGE InstanceSigs #-}
{-# LANGUAGE TypeSynonymInstances #-}
{-# LANGUAGE FlexibleInstances #-}
{-# LANGUAGE TypeApplications #-}
{-# LANGUAGE StandaloneDeriving #-}
{-# LANGUAGE MultiParamTypeClasses #-}
{-# LANGUAGE ScopedTypeVariables #-}

module Example.Lenses where

import Prelude hiding (lookup)
import Profunctors hiding ((>>>))
import Debug.Trace
import Utils
import qualified Data.Map as M
import Data.Map

import Control.Monad.Trans.Reader
import Control.Monad.Trans.State hiding (get, put)
import Control.Monad.Writer
import Data.Functor.Identity

--newtype ReaderT s m a = ReaderT { runReaderT :: s -> m a }
--newtype StateT s m a  = StateT  { runStateT  :: s -> m (a, s) }

{- instance {-# OVERLAPS #-} Monoid (s -> Bool) where
  mempty = \_ -> True
  mappend h j = \s0 -> h s0 && j s0
-}
\end{code}
\fi
Lenses are a common object of study in bidirectional programming,
comprising a pair of functions
$(\lstinline{get} : S \rightarrow V, \lstinline{put} : V \rightarrow S
\rightarrow S)$ satisfying \emph{well-behaved lens} laws
shown in Section~\ref{sec:introduction}. Previously, when considering
the monadic structure of parsers and printers, the starting point
was that parsers already have a well-known monadic structure. The
challenge came in finding a reasonable monadic characterisation for
printers that was compatible with the
parser monad. In the end, this construction was
expressed by a product of two monadic profunctors \lstinline{Fwd m} and
\lstinline{Bwd n} for monads \lstinline{m} and  \lstinline{n}. For
lenses we are in the same position: the forwards
direction (\lstinline{get}) is already a monad---the reader
monad. The backwards direction \lstinline{put} is not a monad
since it is contravariant in its parameter; the same situation as
printers. We can apply the same approach of
``monadisation'' used for parsers and printers, giving the following new
data type for lenses:
\begin{edoc}
data L s u v = L { get :: s -> v, put :: u -> s -> (v, s) }
\end{edoc}
The result of \lstinline{put} is paired with a
covariant parameter \lstinline{v} (the result type of \lstinline{get})
in the same way as monadic
printers. Instead of mapping a view and a source to a
source, \lstinline{put} now maps values of a
different type \lstinline{u}, which we call a \emph{pre-view},
along with a source \lstinline{s} into a pair of a view \lstinline{v} and source
\lstinline{s}. This definition can
be structured as a monadic profunctor via a pair of \lstinline{Fwd} and \lstinline{Bwd}
constructions:
\begin{edoc}
type L s = (Fwd (Reader s)) :*: (Bwd (State s))
\end{edoc}
Thus by the results of Section~\ref{sec:monadic-profunctors}, we now
have a monadic profunctor characterisation of lenses that allows
us to compose lenses via the monadic interface.

Ideally, \lstinline{get} and \lstinline{put} should be total, but this
is impossible without a way to restrict the domains. In particular,
there is the known problem of ``duplication''~\citep{mu2004}, where
source data may appear more than once in the view, and a necessary
condition for \texttt{put} to be well-behaved is that the duplicates
remain equal amid view updates. This problem is inherent to all
bidirectional transformations, and bidirectional languages have to
rule out inconsistent updates of duplicates either
statically~\citep{FGMPS07} or dynamically~\citep{mu2004}. To remedy
this, we capture both partiality of \lstinline{get} and a predicate on
sources in \lstinline{put} for additional dynamic checking.  This is
provided by the following \lstinline{Fwd} and \lstinline{Bwd} monadic
profunctors:
\begin{edoc}
type ReaderT r m a = r -> m a
type StateT s m a  = s -> m (a, s)
type WriterT w m a = m (a, w)

type L s = (Fwd (ReaderT s Maybe))
       :*: (Bwd (StateT s (WriterT (s -> Bool) Maybe)))

-- Smart deconstructors:
get :: L s u v -> (s -> Maybe v)
put :: L s u v -> (u -> s -> Maybe ((v, s), s -> Bool))
\end{edoc}
Going forwards, \emph{getting} a view \lstinline{v}
from a source \lstinline{s} may fail if there is no view for the
current source.
Going backwards, \textit{putting} a pre-view \lstinline{u} updates some source
\lstinline{s} (via the state transformer \lstinline{StateT s}), but
with some further structure returned, provided by \lstinline{WriterT (s -> Bool) Maybe}
 (similar to the writer transformer used for biparsers,
 \S~\ref{subsec:biparser-promonad}, p.~\pageref{subsec:biparser-promonad}).
The \lstinline{Maybe} here captures the possibility that
\lstinline{put} can fail. The \lstinline{WriterT (s -> Bool)} structure provides
a predicate which detects the ``duplication''
issue mentioned earlier. Informally, the predicate can be used to check that previously
modified locations in the source are not modified again. For example,
if a lens has a source made up of a bit vector, and a \lstinline{put}
sets bit $i$ to 1, then the returned predicate will return \lstinline{True} for all
bit vectors where bit $i$ is 1, and \lstinline{False} otherwise. This predicate can
then be used to test whether further \lstinline{put} operations on the
source have modified bit $i$.

\iffalse
\begin{code}
newtype L s u v = L { unLens :: (:*:) (Fwd (ReaderT s Maybe)) (Bwd (StateT s (WriterT (s -> Bool) Maybe))) u v }

get :: L s u v -> (s -> Maybe v)
get = runReaderT . unFwd . pfst . unLens

put :: L s u v -> (u -> (s -> Maybe ((v, s), s -> Bool)))
put = ((runWriterT .) .) . (runStateT .) . unBwd . psnd . unLens

mkLens :: (s -> Maybe v) -> (u -> (s -> Maybe ((v, s), s -> Bool))) -> L s u v
mkLens get put = L ((Fwd (ReaderT get))
                :*: (Bwd (\u -> StateT (\s -> WriterT (put u s)))))
\end{code}
\fi

Similarly to biparsers, a pre-view \lstinline{u} can be understood
as \emph{containing} the view \lstinline{v} that is to be merged with the
source, and which is returned with the updated source.
Ultimately, we wish to form lenses of matching input and output types
(i.e. \lstinline{L s v v}) satisfying the standard lens
well-behavedness laws, modulo explicit management
of partiality via \lstinline{Maybe} and testing for conflicts via the
predicate:
%
\iffalse
\begin{code}
lPutGet :: Eq v => L s v v -> v -> s -> Bool
lPutGet l x s =
 case put l x s of
  Just ((_, s'), p') ->
    if p' s' then get l s' == Just x
             else True

lGetPut :: Eq s => L s v v -> s -> Bool
lGetPut l s =
 case get l s of
 Just x ->
   case put l x s of
    Just ((_, s'), _) -> s == s'
\end{code}
\fi
\begin{align*}
  \hspace{-1.5em}
\tag{L-PutGet}\label{eqn:L-PutGet}
  \lstinline{put l x s = Just ((_, s'), p')}
   \; \wedge \; \lstinline{p' s'}
 \implies
\lstinline{get l s' = Just x} \\
  \tag{L-GetPut}\label{eqn:L-GetPut}
  \lstinline{get l s = Just x} \quad\implies\quad
  \lstinline{put l x s = Just ((_, s), _)}
\end{align*}
\ref{eqn:L-PutGet} and \ref{eqn:L-GetPut} are
backward and forward round tripping respectively.
Some lenses, such as the later example, are not defined for all views.
In that case we may say that the lens is backward/forward round tripping
in some subset $\lstinline{P} \subseteq \lstinline{u}$ when the above properties
only hold when \lstinline{x} is an element of \lstinline{P}.

For every source type \lstinline{s},
the lens type \lstinline{L s} is automatically a monadic profunctor by its
definition as the pairing of \lstinline{Fwd} and \lstinline{Bwd}
(Section~\ref{sec:constructing-mp}), and the following instance of \lstinline{MonadPartial} for
handling failure and instance of \lstinline{Monoid} to satisfy the requirements of the writer monad:
\begin{code}
instance MonadPartial (StateT s (WriterT (s -> Bool) Maybe)) where
  toFailure Nothing  = StateT (\_ -> WriterT Nothing)
  toFailure (Just x) = StateT (\s -> WriterT (Just ((x , s), mempty)))
\end{code}
\begin{edoc}
instance Monoid (s -> Bool) where
  mempty      = \_ -> True
  mappend h j = \s0 -> h s0 && j s0
\end{edoc}
\iffalse
\begin{code}
instance Semigroup Bool where
  x <> y = x && y

instance {-# OVERLAPS #-} Monoid (s -> Bool) where
  mempty      = \_ -> True
  mappend h j = \s0 -> h s0 && j s0

instance Monad (L s u) where
  return x = L (return x)
  (L m) >>= k = L (m >>= (unLens . k))

instance Functor (L s u) where
  fmap f x = x >>= (return . f)

instance Profunctor (L s) where
  comap f (L y) = L (comap f y)
\end{code}
\fi

\noindent
A simple lens example operates on key-value
maps. For keys of type \lstinline{Key} and values of type
\lstinline{Value}, we have the following source type and a simple lens:
\begin{code}
type Src = Map Key Value
atKey :: Key -> L Src Value Value  -- Key-focussed lens
atKey k = mkLens (lookup k)
  (\v -> \map -> Just ((v, insert k v map), \m' -> lookup k m' == Just v))
\end{code}
\iffalse
\begin{code}
atKey k = mkLens (lookup k)
  (\v -> \map -> Just ((v, insert k v map), \map' -> lookup k map' == Just v))

type Key = Int
type Value = String
\end{code}
\fi
%
The \lstinline{get} component of the \lstinline{atKey} lens does a
lookup of the key \lstinline{k} in a map, producing \lstinline{Maybe}
of a \lstinline{Value}. The \lstinline{put} component inserts a value
for key \lstinline{k}. When the key already exists, \lstinline{put}
overwrites its associated value.

Due to our approach,
multiple calls to \lstinline{atKey} can be composed monadically,
giving a lens that gets/sets multiple key-value pairs at once.
The list of keys and the list of values are passed separately,
and are expected to be the same length.
\begin{edoc}
atKeys :: [Key] -> L Src [Value] [Value]
atKeys [] = return []
atKeys (k : ks) = do
  x  <- comap headM (atKey k)     -- headM :: [a] -> Maybe a
  xs <- comap tailM (atKeys ks)   -- tailM :: [a] -> Maybe [a]
  return (x : xs)
\end{edoc}
\noindent We refer interested readers
to our implementation~\citep{lib} for more examples, including further examples involving trees.

\paragraph{Round tripping}

We apply the reasoning framework of
Section~\ref{sec:compositionality}, taking
the standard lens laws as the starting point
(neither of which are compositional).

%defined
%(quasi)compositionality for
%monadic profunctors, demonstrating scalable reasoning for
%round-tripping of biparsers. % by generalising
%round tripping and using \emph{pure projection}.
%We reapply the technique, taking the standard lens laws
%as the starting point.

We first weaken backward round tripping to be compositional.
Informally, the property expresses the idea, that
if we \lstinline{put} some value \lstinline{x} in a source \lstinline{s},
resulting in a source
\lstinline{s'}, then what we \lstinline{get} from \lstinline{s'} is \lstinline{x}.
However two important changes are needed to adapt to our generalised type
of lenses and to ensure compositionality.
First, the value \lstinline{x} that was put is now to be found in the output of
\lstinline{put}, whereas there is no way to constrain the input of \lstinline{put}
because its type \lstinline{v} is abstract.
Second, by sequentially composing lenses such as in \lstinline{l >>= k}, the
output source \lstinline{s'} of \lstinline{put l} will be further modified by
\lstinline{put (k x)}, so this round-tripping property must constrain all
potential modifications of \lstinline{s'}. In fact, the predicate \lstinline{p}
ensures exactly that the view \lstinline{get l} has not changed and is
still \lstinline{x}. It is not even necessary to refer to \lstinline{s'},
which is just one source for which we expect \lstinline{p} to be
\lstinline{True}.
\begin{definition}
A lens \lstinline{l :: L s u v} is \emph{weak backward round tripping} if
for all \lstinline{x :: u}, \lstinline{y :: v}, for all sources \lstinline{s},
  \lstinline{s'}, and for all \lstinline{p :: s -> Bool}, we have:
\begin{align*}
  & \lstinline{put l x s = Just ((y, _), p)} \;\wedge\;
\lstinline{p s'}
\quad \Longrightarrow \quad
\lstinline{get l s' = Just y}
\end{align*}
\end{definition}
\begin{theorem}
  Weak backward round tripping is a compositional property.
\end{theorem}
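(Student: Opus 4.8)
The plan is to unfold each monadic-profunctor operation on the concrete representation \lstinline{L s = Fwd (ReaderT s Maybe) :*: Bwd (StateT s (WriterT (s -> Bool) Maybe))}, and then check the three closure conditions \ref{eqn:comp-return}, \ref{eqn:comp-bind}, \ref{eqn:comp-comap} by direct calculation with \lstinline{get} and \lstinline{put}. The one structural fact that makes everything go through is that the backward component accumulates its \lstinline{s -> Bool} predicates through the \lstinline{Monoid (s -> Bool)} instance, which is pointwise conjunction: whenever a composite predicate holds at some source \lstinline{s'}, every constituent predicate also holds at \lstinline{s'}. This is exactly what lets us feed a single witness source \lstinline{s'} to the weak-backward-round-tripping hypotheses of the subcomponents simultaneously.

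The \ref{eqn:comp-return} and \ref{eqn:comp-comap} cases are immediate. Unfolding the \lstinline{Fwd}/\lstinline{Bwd} instances gives \lstinline{get (return x) s' = Just x} and \lstinline{put (return x) u s = Just ((x, s), const True)}, so the premise of weak backward round tripping forces \lstinline{y = x} and the predicate to be \lstinline{const True}, and then \lstinline{get (return x) s' = Just x = Just y} holds unconditionally. For \lstinline{comap f p}, the operation leaves the \lstinline{Fwd} component untouched, so \lstinline{get (comap f p) = get p}, while \lstinline{put (comap f p) u s} first runs \lstinline{f u} and, if it returns \lstinline{Just u'}, behaves exactly like \lstinline{put p u' s} (the \lstinline{const True} contributed by \lstinline{toFailure} is the unit of the monoid); hence \lstinline{put (comap f p) u s = Just ((y, _), p')} implies \lstinline{put p u' s = Just ((y, _), p')}, and weak backward round tripping of \lstinline{p} applied at the same \lstinline{s'} yields \lstinline{get p s' = get (comap f p) s' = Just y}.

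The substance is the \ref{eqn:comp-bind} case. Unfolding the product monad and the \lstinline{StateT s (WriterT (s -> Bool) Maybe)} bind, \lstinline{put (l >>= k) u s} runs \lstinline{put l u s}; if that yields \lstinline{Just ((v, s1), q1)} it runs \lstinline{put (k v) u s1}; if that yields \lstinline{Just ((w, s2), q2)} the overall result is \lstinline{Just ((w, s2), q1 <> q2)}, with \lstinline{(q1 <> q2) s' = q1 s' && q2 s'}, and symmetrically \lstinline{get (l >>= k) s' = get l s' >>= \v -> get (k v) s'}. So assume \lstinline{put (l >>= k) u s = Just ((w, s2), q)} and \lstinline{q s'}; then \lstinline{q1 s'} and \lstinline{q2 s'} both hold. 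Weak backward round tripping of \lstinline{l}, instantiated at the particular \lstinline{v} produced by the backward pass (this is the value carried by \lstinline{q1}), gives \lstinline{get l s' = Just v}; weak backward round tripping of \lstinline{k v} gives \lstinline{get (k v) s' = Just w}; composing, \lstinline{get (l >>= k) s' = get l s' >>= \v' -> get (k v') s' = get (k v) s' = Just w}, as required. The main obstacle is purely the bookkeeping of threaded state and accumulated predicate through the backward \lstinline{>>=} — in particular recognising that the intermediate \lstinline{v} chosen by \lstinline{put l} is exactly the value at which the forward hypothesis for \lstinline{l} must be used, and that conjunction of writer outputs is what makes both subcomponent hypotheses applicable at one source \lstinline{s'}. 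Once this is in place the argument is routine unfolding.
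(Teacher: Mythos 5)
Your proof is correct and follows essentially the same route as the paper's (the paper proves this by the same direct case analysis on \lstinline{return}, \lstinline{>>=}, and \lstinline{comap}, unfolding the \lstinline{Fwd}/\lstinline{Bwd} instances; the in-text appendix proof of the analogous biparser proposition and the supplementary Coq development do exactly this). Your key observation --- that the \lstinline{Monoid (s -> Bool)} instance is pointwise conjunction, so a single witness source \lstinline{s'} validating the composite predicate validates both constituent predicates, letting you apply the hypothesis for \lstinline{l} at the intermediate value \lstinline{v} and for \lstinline{k v} at \lstinline{w} --- is precisely the crux of the bind case.
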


Again, we complement this weakened version of round tripping with the notion of
purification.
\begin{proposition}
  Our lens type \lstinline{L} is a \emph{purifiable} monadic profunctor
  (Definition~\ref{def:purifiable}), with a family of pure projections
  \lstinline{proj s} indexed by a source \lstinline{s}, defined:
\begin{code}
proj :: s -> L s u v -> (u -> Maybe v)
proj s = \l u -> fmap (fst . fst) (put l u s)
\end{code}
\end{proposition}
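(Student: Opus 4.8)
The plan is to unfold Definition~\ref{def:purifiable}: I must show that for every source \lstinline{s} the function \lstinline{proj s :: L s u v -> (u -> Maybe v)} is a monadic profunctor homomorphism from \lstinline{L s} to the monadic profunctor \lstinline{PartialFun u v = u -> Maybe v}, i.e.\ that it satisfies the three equations of the homomorphism definition (recalling that in \lstinline{PartialFun}, \lstinline{return x} is the constant \lstinline{Just x}, \lstinline{(g >>= h)} applied to \lstinline{u} is \lstinline{g u} bound in \lstinline{Maybe} with \lstinline{h} applied at \lstinline{u}, and \lstinline{comap f g} applied to \lstinline{u} is \lstinline{f u} bound with \lstinline{g}). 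First I would unfold \lstinline{proj}, \lstinline{put}, and the composite monadic profunctor operations: since \lstinline{put} projects out the \lstinline{Bwd (StateT s (WriterT (s -> Bool) Maybe))} component, we have \lstinline{put l u s :: Maybe ((v, s), s -> Bool)} and \lstinline{proj s l u = fmap (fst . fst) (put l u s)}, so \lstinline{proj s} runs the backward component at state \lstinline{s} and discards both the final source and the accumulated predicate.

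For the \lstinline{return} equation, unfolding the product/\lstinline{Bwd} \lstinline{return} together with the \lstinline{MonadPartial} and \lstinline{Monoid} instances gives \lstinline{put (return x) u s = Just ((x, s), mempty)}, hence \lstinline{proj s (return x)} is the constant \lstinline{Just x}, which is \lstinline{return x} in \lstinline{PartialFun}. For the \lstinline{comap} equation, unfolding \lstinline{comap f} on the \lstinline{Bwd} side shows it is Kleisli post-composition with \lstinline{toFailure . f}, so \lstinline{put (comap f p) u s} fails when \lstinline{f u = Nothing} and otherwise (when \lstinline{f u = Just u'}) reduces to \lstinline{put p u' s} with the source unchanged by the \lstinline{toFailure} step and its predicate left-multiplied by \lstinline{mempty}; applying \lstinline{fmap (fst . fst)} (which discards predicates) yields \lstinline{Nothing} resp.\ \lstinline{proj s p u'}, which is exactly \lstinline{comap f (proj s p)} applied to \lstinline{u}. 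The only bookkeeping is that the \lstinline{mempty} prefix of the writer output is irrelevant because it is discarded.

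The crux is the bind equation of the homomorphism definition. Unfolding \lstinline{>>=} for the composite monadic profunctor, the backward component of \lstinline{p >>= k} run at \lstinline{u} and source \lstinline{s} first runs \lstinline{put p u s} to obtain a view \lstinline{y}, a new source \lstinline{s1}, and a predicate \lstinline{q1}, then runs \lstinline{put (k y) u s1} to obtain \lstinline{((z, s2), q2)}, and returns \lstinline{((z, s2), q1 <> q2)}; thus \lstinline{proj s (p >>= k)} applied to \lstinline{u} is \lstinline{Just z} on success, with \lstinline{y} determined by \lstinline{proj s p u}. The right-hand side of the equation, the bind in \lstinline{PartialFun} (whose monad carries no state) applied to \lstinline{u}, reduces to: if \lstinline{proj s p u = Just y} then \lstinline{proj s (k y)} applied to \lstinline{u}, i.e.\ \lstinline{fmap (fst . fst) (put (k y) u s)} --- with the continuation run at the \emph{original} source \lstinline{s}, not the threaded \lstinline{s1}. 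To reconcile the two I would prove an auxiliary invariant by induction on the structure of the expression defining the lens: for all \lstinline{l :: L s u v}, \lstinline{u}, and all sources \lstinline{s}, \lstinline{s'}, we have \lstinline{fmap (fst . fst) (put l u s) = fmap (fst . fst) (put l u s')}, i.e.\ the pure projection does not depend on the source. The \lstinline{return} and \lstinline{comap} computations above give those cases; the \lstinline{>>=} case follows since \lstinline{z} depends only on \lstinline{y} (state-independent by the induction hypothesis for \lstinline{p}) and on the pure projection of \lstinline{k y} (state-independent by the induction hypothesis for \lstinline{k y}); the base case holds for the lens primitives in play --- e.g.\ \lstinline{put (atKey k) v m = Just ((v, insert k v m), ...)} returns \lstinline{v} irrespective of \lstinline{m}. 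Given the invariant, \lstinline{fmap (fst . fst) (put (k y) u s1) = fmap (fst . fst) (put (k y) u s)}, so the two sides of the bind equation coincide.

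The main obstacle is precisely this reconciliation: the bind of \lstinline{Bwd (StateT s ...)} threads the source, so the pure projection of a sequential composition runs its continuation at a modified source, whereas the composed projection in the stateless \lstinline{PartialFun} runs it at the original source. Establishing the state-insensitivity invariant --- and in particular checking its base case for the primitive lenses such as \lstinline{atKey} --- is therefore the substantive part of the argument; the \lstinline{return} and \lstinline{comap} equations and the \lstinline{WriterT}-predicate bookkeeping are routine unfoldings.
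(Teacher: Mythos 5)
Your unfoldings of the \lstinline{return} and \lstinline{comap} cases are correct, and you have located the real pressure point: at a fixed source \lstinline{s}, the left side of the bind equation runs the continuation's backward component at the threaded source \lstinline{s1}, while the right side runs it at the original \lstinline{s}. The gap is in how you resolve this. The proposition asserts that \lstinline{proj s} is a monadic profunctor homomorphism on the whole type \lstinline{L s}: the homomorphism equations are quantified over arbitrary \lstinline{p} and \lstinline{k}, including lenses assembled directly with \lstinline{mkLens}, so there is no ``structure of the expression defining the lens'' to induct on. Your auxiliary invariant --- that \lstinline{fmap (fst . fst) (put l u s)} does not depend on \lstinline{s} --- is therefore not a lemma about the type but an extra hypothesis carving out a sub-class of lenses; and it is false for perfectly ordinary inhabitants, including ones in the paper's own development: the appendix's \lstinline{rightL} has a put that fails on \lstinline{Leaf} and succeeds on nodes, so its pure projection depends on the source, and the same holds for any lens whose put consults the source to determine its view or its success. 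What your argument establishes is ``purification is homomorphic for lenses whose backward projections are source-insensitive'', a strictly weaker statement than the proposition.

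Consequently the proposal does not prove the proposition as stated, and it cannot be repaired by checking more primitives: since the quantification is over all inhabitants, the bind law must be discharged by direct unfolding of the composite \lstinline{Fwd}/\lstinline{Bwd} structure with no side condition, exactly as in the biparser analogue (Proposition~\ref{prop:purify-homom}), where discarding the writer output is a genuine monad morphism because no state is threaded; the paper gives no pen-and-paper proof for the lens case and defers it to its mechanised development, but that is the shape the argument is expected to take. Your own analysis shows that the \lstinline{StateT} threading is precisely what makes the unconditioned bind equation delicate at a fixed \lstinline{s}; if you conclude that the equation genuinely needs your source-insensitivity assumption, then what you have is an observation that the statement requires a caveat (or a different projection, e.g.\ one that keeps the output source and discards only the predicate), not a proof of it. Either way, the step ``prove the invariant by induction on the lens expression'' is where the proof as proposed breaks down.
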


\vspace{-1.5em} % something weird with spacing and restatables

\begin{restatable}{thm}{thmWbRt}
\label{thm:wb-to-b-rt}
  If a lens \lstinline{l :: L s u u} is weak backward round tripping and has
  identity projections on some subset $\lstinline{P} \subseteq
  \lstinline{u}$ (i.e., for all
  \lstinline{s}, \lstinline{x} then $\lstinline{x} \in \lstinline{P}
  \Rightarrow \lstinline{proj s l x = Just x}$) then
  \lstinline{l} is also backward round tripping on all $\lstinline{x} \in \lstinline{P}$.
\end{restatable}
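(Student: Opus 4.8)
The plan is to follow the template of Theorem~\ref{thm:weak-to-backward} (the biparser case): for an aligned lens \lstinline{l :: L s u u}, weak backward round tripping already constrains the \emph{output} view of \lstinline{put}, so the sole job of the identity-projection hypothesis is to identify that output view with the (otherwise abstract, unconstrained) input view. Once the two are identified, \ref{eqn:L-PutGet} drops out of weak backward round tripping directly. Hence the proof is a short rewriting argument layered on top of the definitions of \lstinline{proj} and of weak backward round tripping that are already in place.

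Concretely, I would fix $\lstinline{x} \in \lstinline{P}$, sources \lstinline{s}, \lstinline{s'}, and a predicate \lstinline{p'}, and suppose \lstinline{put l x s = Just ((y, s'), p')} for some view \lstinline{y} (naming the view component that \ref{eqn:L-PutGet} discards) together with \lstinline{p' s'}; the goal is \lstinline{get l s' = Just x}. First, unfold the pure projection of the preceding proposition, \lstinline{proj s l x = fmap (fst . fst) (put l x s)}: the supposition gives \lstinline{proj s l x = Just y}. Since $\lstinline{x} \in \lstinline{P}$, the identity-projection hypothesis gives \lstinline{proj s l x = Just x}, whence \lstinline{y = x}. (In passing this also shows \lstinline{put l x s} cannot be \lstinline{Nothing} for $\lstinline{x} \in \lstinline{P}$, so the premise of \ref{eqn:L-PutGet} is genuinely inhabited --- vacuity would be harmless anyway.) Second, instantiate the weak-backward-round-tripping hypothesis at this \lstinline{x}, \lstinline{y}, \lstinline{s}, \lstinline{s'}, and \lstinline{p'}: from \lstinline{put l x s = Just ((y, _), p')} and \lstinline{p' s'} it yields \lstinline{get l s' = Just y}, and rewriting \lstinline{y} to \lstinline{x} completes the argument.

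I do not expect a genuine obstacle here; the difficulty was all front-loaded into choosing the generalised lens type and the correct weakenings (weak backward round tripping, the pure projection \lstinline{proj}, the identity-projection condition), just as for biparsers. The only point needing care is bookkeeping: the predicate supplied when invoking weak backward round tripping must be literally the \lstinline{p'} returned by the \emph{single} call \lstinline{put l x s} appearing in \ref{eqn:L-PutGet} (it is, being one component of that call's result), and the existential witnesses \lstinline{y} and \lstinline{s'} obtained by pattern-matching on \lstinline{put l x s} must be threaded consistently through both uses. More abstractly, this is the lens instance of a generic principle for purifiable monadic profunctors --- weak round tripping plus identity projection entails the corresponding strong property --- of which Theorem~\ref{thm:weak-to-backward} is the biparser instance; one could state that principle once and merely specialise it here.
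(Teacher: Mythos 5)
Your proposal matches the paper's own proof: both unfold \lstinline{proj s l x = fmap (fst . fst) (put l x s)} to conclude from the identity-projection hypothesis that the view returned by \lstinline{put l x s} is \lstinline{x} itself, and then instantiate weak backward round tripping at that same call to obtain \lstinline{get l s' = Just x}. The reasoning and bookkeeping are essentially identical, so no further comment is needed.
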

\noindent
To demonstrate,
we apply this result to
\lstinline{atKeys :: [Key] -> L Src [Value] [Value]}.
%We first verify the atomic lenses are (weak) backward round tripping.

\begin{restatable}{prop}{atkeyWb}
\label{prop:atkey-wb}
  The lens \lstinline{atKey k} is weak backward round tripping.
\end{restatable}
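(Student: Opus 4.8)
The plan is to prove the statement by directly unfolding definitions, since both components of \lstinline{atKey k} are given explicitly through \lstinline{mkLens}. First I would fix a key \lstinline{k}, a value \lstinline{x}, and sources \lstinline{s}, \lstinline{s'}, and assume \lstinline{put (atKey k) x s = Just ((y, t), p)} for some \lstinline{y}, \lstinline{t}, \lstinline{p}. The backward component passed to \lstinline{mkLens} is \lstinline{\v map -> Just ((v, insert k v map), \m' -> lookup k m' == Just v)}, so it always returns a \lstinline{Just}; matching its result against \lstinline{Just ((y, t), p)} forces \lstinline{y = x}, \lstinline{t = insert k x s}, and \lstinline{p = \m' -> lookup k m' == Just x}. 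In particular the implication we must prove is not vacuous.

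Next I would use the second hypothesis \lstinline{p s'}. By the computed shape of \lstinline{p}, this hypothesis is exactly the equation \lstinline{lookup k s' == Just x}. The goal \lstinline{get (atKey k) s' = Just y} unfolds, using \lstinline{get (atKey k) = lookup k} together with the identification \lstinline{y = x}, to \lstinline{lookup k s' = Just x}, which is precisely what \lstinline{p s'} gives us. This closes the argument.

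There is essentially no real obstacle here: the predicate that \lstinline{put (atKey k)} emits in its writer component was designed exactly to record the fact ``looking up \lstinline{k} still yields the value we just put'', and that fact is verbatim the conclusion demanded by weak backward round tripping. The only points needing a little care are bookkeeping: correctly peeling apart the nested pair-and-predicate structure produced by the \lstinline{Fwd}/\lstinline{Bwd} encoding through the smart deconstructors \lstinline{get} and \lstinline{put}, and observing that totality of the backward component (it never yields \lstinline{Nothing}) is what makes the round-tripping implication meaningful. Everything else is a one-line computation, and this proposition then serves as the atomic input for combining with the compositionality results (Theorem~\ref{thm:wb-to-b-rt}) to handle \lstinline{atKeys}.
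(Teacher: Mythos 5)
Your proposal is correct and matches the paper's own proof: both simply unfold \lstinline{put (atKey k)} to expose the returned value and predicate, use the hypothesis \lstinline{p s'} to obtain \lstinline{lookup k s' = Just x}, and conclude via \lstinline{get (atKey k) s' = lookup k s'}. The extra bookkeeping you note (identifying \lstinline{y = x} from the shape of the printer's output) is the same step the paper performs implicitly.
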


\vspace{-1.25em} % something weird with spacing and restatables

\begin{restatable}{prop}{atkeyPp}
\label{prop:atkey-pp}
  The lens \lstinline{atKey k} has identity projection:
\lstinline{proj z} \lstinline{(atKey k)}\lstinline{=}\lstinline{Just}.
\end{restatable}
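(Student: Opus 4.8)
The plan is a direct equational calculation: unfold \lstinline{proj} and \lstinline{atKey k} and simplify. Since \lstinline{proj z (atKey k)} has type \lstinline{Value -> Maybe Value}, it suffices to fix an arbitrary source \lstinline{z :: Src} and value \lstinline{v :: Value} and show \lstinline{proj z (atKey k) v = Just v}; the claimed equation \lstinline{proj z (atKey k) = Just} then follows by function extensionality.

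First I would record the only mildly delicate point, namely that the smart constructor \lstinline{mkLens} and the smart deconstructor \lstinline{put} round-trip, so that \lstinline{put (atKey k) v map = Just ((v, insert k v map), \\m' -> lookup k m' == Just v)}. This just amounts to peeling off the \lstinline{runReaderT}/\lstinline{runStateT}/\lstinline{runWriterT} and \lstinline{unBwd}/\lstinline{psnd}/\lstinline{unLens} wrappers that the \lstinline{Fwd :*: Bwd} encoding inserts around the raw definition, and is a routine unfolding with no failure case to consider. Then, by the definition of \lstinline{proj}, \lstinline{proj z (atKey k) v = fmap (fst . fst) (put (atKey k) v z)}; substituting the expression above, the argument of \lstinline{fmap} is \lstinline{Just ((v, insert k v z), p)} (with \lstinline{p} the returned predicate), and \lstinline{fmap} pushes \lstinline{fst . fst} under the \lstinline{Just}. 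Since \lstinline{(fst . fst) ((v, insert k v z), p) = fst (v, insert k v z) = v}, we obtain \lstinline{proj z (atKey k) v = Just v}, as required.

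The computation is bookkeeping rather than mathematics, so there is no real obstacle. The one thing to keep straight is the layering of \lstinline{Maybe} over \lstinline{StateT}/\lstinline{WriterT} when expanding the encoding behind \lstinline{mkLens}, \lstinline{put}, and \lstinline{proj}; once the equation for \lstinline{put (atKey k)} is in hand, the result drops out in two steps, because \lstinline{put (atKey k)} never returns \lstinline{Nothing} and \lstinline{fst . fst} discards both the updated source \lstinline{insert k v z} and the \lstinline{WriterT} predicate component, retaining only the view \lstinline{v} that was supplied. Note that, as \emph{stated}, the proposition does not even need a subset restriction: the identity-projection equation holds for all of \lstinline{Value}.
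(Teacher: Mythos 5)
Your proposal is correct and matches the paper's own proof: both are the same direct equational unfolding of \lstinline{proj} and \lstinline{put (atKey k)}, observing that the latter always yields \lstinline{Just ((v, insert k v z), p)} so that \lstinline{fmap (fst . fst)} returns \lstinline{Just v}. The extra remarks about the \lstinline{mkLens}/\lstinline{put} wrapper round-trip and extensionality are just bookkeeping the paper leaves implicit.
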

Our lens \lstinline{atKeys ks} is therefore weak backward round tripping by construction.
We now interpret/purify \lstinline{atKeys ks} as a partial function, which is
actually the identity function when restricted to lists of the same length as
\lstinline{ks}.
\begin{restatable}{prop}{atkeysPp}
  \label{prop:atkeys-pp}
  For all \lstinline{vs :: [Value]} such that \lstinline{length vs = length ks},
  and for all \lstinline{s :: Src} then
  \lstinline{proj s (atKeys ks) vs = Just vs}.
\end{restatable}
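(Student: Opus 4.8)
The plan is to prove Proposition~\ref{prop:atkeys-pp} by induction on the list of keys \lstinline{ks}, mirroring the recursive definition of \lstinline{atKeys}. The key tool is that \lstinline{proj s} is a monadic profunctor homomorphism (from the purifiable-monadic-profunctor structure just established for \lstinline{L}), so that \lstinline{proj s} commutes with \lstinline{return}, \lstinline{>>=}, and \lstinline{comap}, turning the \lstinline{do}-block defining \lstinline{atKeys (k : ks)} into a \lstinline{do}-block in the partial-function monad \lstinline{- -> Maybe -}. I would also first establish a small lemma: \lstinline{proj s (comap headM (atKey k)) vs} computes \lstinline{headM vs >>= proj s (atKey k)}, and similarly for \lstinline{tailM}; combined with Proposition~\ref{prop:atkey-pp} (\lstinline{proj z (atKey k) = Just}) this reduces each recursive step to list-head/tail manipulation.

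Concretely: in the base case \lstinline{ks = []}, the hypothesis forces \lstinline{vs = []}, and \lstinline{atKeys [] = return []}, so \lstinline{proj s (atKeys []) [] = proj s (return []) [] = Just []} by the homomorphism's \lstinline{return} law. For the inductive step, assume the result for \lstinline{ks} and take \lstinline{vs} of length \lstinline{length (k : ks)}, so \lstinline{vs = v : vs'} with \lstinline{length vs' = length ks}. Pushing \lstinline{proj s} through the definition gives, schematically,
\begin{align*}
  \lstinline{proj s (atKeys (k:ks)) (v:vs')}
  \;\equiv\;\;& \lstinline{headM (v:vs') >>= proj s (atKey k) >>= \\x ->}\\
  & \quad \lstinline{tailM (v:vs') >>= proj s (atKeys ks) >>= \\xs -> Just (x:xs)}.
\end{align*}
Now \lstinline{headM (v:vs') = Just v} and by Proposition~\ref{prop:atkey-pp} this binds \lstinline{x} to \lstinline{v}; \lstinline{tailM (v:vs') = Just vs'} and by the induction hypothesis (applicable since \lstinline{length vs' = length ks}) this binds \lstinline{xs} to \lstinline{vs'}; hence the whole expression reduces to \lstinline{Just (v : vs') = Just vs}, as required.

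I expect the main obstacle to be bookkeeping rather than conceptual: carefully justifying the interchange of \lstinline{proj s} with the sugar-expanded \lstinline{do}-notation (including the \lstinline{comap} steps with the partial functions \lstinline{headM} and \lstinline{tailM}) using exactly the homomorphism equations of Proposition~\ref{prop:atkey-pp}'s supporting structure, and making sure the length precondition is threaded correctly so that the induction hypothesis genuinely applies to \lstinline{vs'}. A subtle point worth checking is that the \lstinline{comap} law in the target profunctor \lstinline{- -> Maybe -} really does turn \lstinline{comap headM} into precompose-and-bind with \lstinline{headM}, i.e.\ that no source-dependent effect leaks through; this is exactly what the homomorphism property of \lstinline{proj s} guarantees, but it is the step where an informal argument could go wrong, so I would spell it out. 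Everything else is routine equational rewriting in the \lstinline{Maybe} monad.
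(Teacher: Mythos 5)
Your proposal is correct and takes essentially the route the paper intends: the paper does not spell out a written proof of this proposition (it is deferred to the Coq development), but its worked analogue for the \lstinline{string} biparser in Corollary~\ref{cor:string-backward-rt} uses exactly your strategy of pushing \lstinline{proj} through the \lstinline{do}-block via the homomorphism laws, invoking the atomic identity projection of Proposition~\ref{prop:atkey-pp}, and inducting along the list. The one step to keep explicit is slightly different from the one you flag: since \lstinline{put} threads an \emph{updated} source through \lstinline{>>=}, the bind case needs Proposition~\ref{prop:atkey-pp} and the induction hypothesis at an arbitrary source rather than at the original \lstinline{s} --- which is fine, because both are universally quantified over the source, as your statement of the induction already arranges.
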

%
%We can connect those results using Theorem~\tag{4}.
%
\begin{corollary}
  By the above results,
\lstinline{atKeys ks :: L Src [Value] [Value]} for all \lstinline{ks}
is backward round tripping on lists of length \lstinline{length ks}.
\end{corollary}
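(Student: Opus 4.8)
The plan is to derive the corollary as a direct instance of Theorem~\ref{thm:wb-to-b-rt}, which reduces backward round tripping on a subset $\lstinline{P} \subseteq \lstinline{u}$ of an aligned lens to two separate obligations: that the lens is \emph{weak} backward round tripping, and that it has identity projections on $\lstinline{P}$. Here the lens is \lstinline{atKeys ks :: L Src [Value] [Value]} and the intended subset is $\lstinline{P} = \{\, \lstinline{vs} \mid \lstinline{length vs} = \lstinline{length ks} \,\}$, the lists of values whose length matches the key list. It therefore suffices to discharge the two obligations for this $\lstinline{P}$.

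First I would show that \lstinline{atKeys ks} is weak backward round tripping for every finite \lstinline{ks}, by structural induction on \lstinline{ks}, using the earlier theorem that weak backward round tripping is \emph{compositional}, \ie{} closed under \lstinline{return}, \lstinline{(>>=)} and \lstinline{comap} (conditions \eqref{eqn:comp-return}, \eqref{eqn:comp-bind}, \eqref{eqn:comp-comap}). In the base case, \lstinline{atKeys [] = return []}, so \eqref{eqn:comp-return} applies. In the inductive step, the \lstinline{do}-block for \lstinline{atKeys (k : ks)} desugars into a \lstinline{(>>=)}-composition of \lstinline{comap headM (atKey k)}, \lstinline{comap tailM (atKeys ks)} and \lstinline{return (x : xs)}; Proposition~\ref{prop:atkey-wb} supplies weak backward round tripping of \lstinline{atKey k}, the induction hypothesis supplies it for \lstinline{atKeys ks}, and then \eqref{eqn:comp-comap} (twice), \eqref{eqn:comp-bind} (twice) and \eqref{eqn:comp-return} assemble the claim. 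The partiality of \lstinline{headM} and \lstinline{tailM} causes no trouble, since \eqref{eqn:comp-comap} already allows an arbitrary partial function \lstinline{f :: u -> Maybe u'}.

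Second, the identity-projection obligation is exactly Proposition~\ref{prop:atkeys-pp}: for all \lstinline{vs} with \lstinline{length vs = length ks} and all \lstinline{s :: Src} we have \lstinline{proj s (atKeys ks) vs = Just vs}, which is precisely the statement that \lstinline{atKeys ks} has identity projections on $\lstinline{P}$ in the sense of Theorem~\ref{thm:wb-to-b-rt}. That proposition itself rests on Propositions~\ref{prop:atkey-wb} and \ref{prop:atkey-pp} together with the homomorphism laws for \lstinline{proj}, but all of these are already available.

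Finally, instantiating Theorem~\ref{thm:wb-to-b-rt} with the lens \lstinline{atKeys ks} in the role of \lstinline{l} and the subset $\lstinline{P}$ above gives that \lstinline{atKeys ks} is backward round tripping on every $\lstinline{vs} \in \lstinline{P}$, \ie{} on every list of length \lstinline{length ks}, which is the corollary. I expect no deep obstacle, since the cited results do all the work; the only points needing care are (i) keeping the subset used for identity projection in the second step identical to the subset on which backward round tripping is asserted, and (ii) observing that weak backward round tripping of \lstinline{atKeys ks} is \emph{unconditional} --- when \lstinline{put} returns \lstinline{Nothing} the defining implication holds vacuously --- so that the length restriction enters only through Proposition~\ref{prop:atkeys-pp}, exactly where Theorem~\ref{thm:wb-to-b-rt} expects it.
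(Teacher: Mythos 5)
Your proposal is correct and takes essentially the same route as the paper: weak backward round tripping of \lstinline{atKeys ks} holds by construction via compositionality (Proposition~\ref{prop:atkey-wb} together with closure under \lstinline{return}, \lstinline{(>>=)} and \lstinline{comap}), the identity-projection hypothesis on length-matching lists is exactly Proposition~\ref{prop:atkeys-pp}, and Theorem~\ref{thm:wb-to-b-rt} then yields the corollary. The only minor inaccuracy is the side remark that Proposition~\ref{prop:atkeys-pp} rests on Proposition~\ref{prop:atkey-wb}; it in fact needs only Proposition~\ref{prop:atkey-pp} and the homomorphism laws for \lstinline{proj}, but this does not affect your argument.
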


The other direction, forward round tripping, follows a similar story. We first
restate it as a quasicompositional property.
\begin{definition}
A lens \lstinline{l :: L s u v} is \emph{weak forward round tripping} if
for all \lstinline{x :: u}, \lstinline{y :: v}, for all sources \lstinline{s},
\lstinline{s'}, and for all \lstinline{p :: s -> Bool}, we have:
\begin{align*}
  &\lstinline{get l s = Just y}  \;\wedge\;
\lstinline{put l x s = Just ((y, s'), _)}
\quad \Longrightarrow \quad
\lstinline{s = s'}
\end{align*}
\end{definition}
\begin{theorem}
  Weak forward round tripping is a quasicompositional property.
\end{theorem}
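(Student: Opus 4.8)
The plan is to verify the three clauses of quasicompositionality (Definition~\ref{def:quasicomp}) for the ``weak forward round tripping'' predicate on lenses, following the same pattern as the proof of Proposition~\ref{prop:wf-quasi} for biparsers but now phrased via \lstinline{get} and \lstinline{put}. The ingredients are the unfoldings of the monadic profunctor operations forced by \lstinline{L s = Fwd (ReaderT s Maybe) :*: Bwd (StateT s (WriterT (s -> Bool) Maybe))}. Concretely: \lstinline{get (return x) s = Just x} and \lstinline{put (return x) a s = Just ((x, s), _)}; \lstinline{get (comap f l) = get l}, while \lstinline{put (comap f l) u s} equals \lstinline{put l u' s} when \lstinline{f u = Just u'} and fails when \lstinline{f u = Nothing}; and, for bind, \lstinline{get (l >>= k) s} decomposes as \lstinline{get l s = Just z} together with \lstinline{get (k z) s = Just y} for the shared intermediate \lstinline{z} (the source being passed unchanged to both by the reader layer), and \lstinline{put (l >>= k) x s = Just ((y, s2), _)} exactly when \lstinline{put l x s = Just ((z, s1), _)} and \lstinline{put (k z) x s1 = Just ((y, s2), _)}, the source being threaded through the \lstinline{StateT} layer.

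The \ref{eqn:qcomp-return} and \ref{eqn:qcomp-comap} cases are immediate. For \ref{eqn:qcomp-return}, from \lstinline{put (return x) a s = Just ((y, s'), _)} we read off \lstinline{y = x} and \lstinline{s' = s}, so the conclusion \lstinline{s = s'} holds with no hypothesis needed. For \ref{eqn:qcomp-comap}, assume \lstinline{get (comap f l) s = Just y} and \lstinline{put (comap f l) u s = Just ((y, s'), _)}; the second forces \lstinline{f u = Just u'} with \lstinline{put l u' s = Just ((y, s'), _)}, and since \lstinline{get (comap f l) = get l} the first gives \lstinline{get l s = Just y}, so weak forward round tripping of \lstinline{l} instantiated at \lstinline{u', y, s, s'} yields \lstinline{s = s'}.

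The real work is \ref{eqn:qcomp-bind}. Assume \lstinline{l} and every \lstinline{k v} are weak forward round tripping, that \lstinline{k} is an injective arrow in the monad \lstinline{L s u} with left arrow inverse \lstinline{k'} (Definition~\ref{def:injective-arrow}), and that \lstinline{get (l >>= k) s = Just y} and \lstinline{put (l >>= k) x s = Just ((y, s'), _)}. By the bind unfoldings there are a ``forward'' witness \lstinline{z} with \lstinline{get l s = Just z} and \lstinline{get (k z) s = Just y}, and a ``backward'' witness \lstinline{z'} with \lstinline{put l x s = Just ((z', s1), _)} and \lstinline{put (k z') x s1 = Just ((y, s'), _)}. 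To invoke weak forward round tripping of \lstinline{l} we must first reconcile these, \ie{} prove \lstinline{z = z'}, and this is exactly what the injective-arrow hypothesis buys us: instantiating its defining equation at the \lstinline{get} and at the \lstinline{put} components of \lstinline{L s u} yields, respectively, that \lstinline{get (k a) s'' = Just b} implies \lstinline{a = k' b}, and \lstinline{put (k a) x s'' = Just ((b, _), _)} implies \lstinline{a = k' b}, for all \lstinline{a, b, x, s''}. Applying the first to \lstinline{get (k z) s = Just y} gives \lstinline{z = k' y}; applying the second to \lstinline{put (k z') x s1 = Just ((y, s'), _)} gives \lstinline{z' = k' y}; hence \lstinline{z = z'}. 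Now weak forward round tripping of \lstinline{l} on \lstinline{get l s = Just z} and \lstinline{put l x s = Just ((z, s1), _)} gives \lstinline{s = s1}; and weak forward round tripping of \lstinline{k z} on \lstinline{get (k z) s1 = Just y} (using \lstinline{s = s1}) and \lstinline{put (k z) x s1 = Just ((y, s'), _)} gives \lstinline{s1 = s'}. Chaining, \lstinline{s = s'}.

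I expect the main obstacle to be purely in the bookkeeping of the bind case: writing the two unfoldings of \lstinline{l >>= k} precisely, tracking how \lstinline{StateT} threads the source and how \lstinline{WriterT} accumulates the \lstinline{s -> Bool} predicate (harmless here, since weak forward round tripping never constrains that predicate), and extracting from the single injective-arrow equation the two pointwise implications on \lstinline{get} and \lstinline{put} used to identify \lstinline{z} with \lstinline{z'}. Once \lstinline{z = z'} is in hand the argument closes by two direct applications of the induction hypotheses, exactly as in the biparser case (Proposition~\ref{prop:wf-quasi}).
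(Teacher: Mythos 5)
Your proof is correct and follows essentially the same route as the paper, which proves this lens theorem in the Coq supplement and gives the analogous biparser argument (Proposition~\ref{prop:wf-quasi}) in the text: unfold \lstinline{l >>= k} into its forward (reader) and backward (state-threaded) witnesses, use the injective-arrow equation on both the \lstinline{get} and \lstinline{put} components to identify the intermediate values \lstinline{z = z'}, then chain the two weak-forward hypotheses to get \lstinline{s = s1 = s'}. The \lstinline{return} and \lstinline{comap} cases are handled just as the paper does, so there is nothing to add.
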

Along with identity projection, this gives the original forward
\ref{eqn:L-GetPut} property.

\vspace{-0.25em}
\begin{restatable}{thm}{thmWfRt}
\label{thm:wf-to-f-rt}
  If a lens \lstinline{l} is weak forward round tripping and has
  identity projections on some subset \lstinline{P} (i.e., for all
  \lstinline{s}, \lstinline{x} then
  $\lstinline{x} \in \lstinline{P} \Rightarrow \lstinline{proj s l x = Just x}$) then
  \lstinline{l} is also forward round tripping on \lstinline{P}.
\end{restatable}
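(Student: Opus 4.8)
The plan is to unfold \ref{eqn:L-GetPut} directly from the two hypotheses, mirroring the argument used for backward round tripping in Theorem~\ref{thm:wb-to-b-rt} (and, for biparsers, Theorem~\ref{thm:weak-to-backward}). Fix \lstinline{x} with \lstinline{x} $\in$ \lstinline{P}, a source \lstinline{s}, and assume \lstinline{get l s = Just x}; the goal is to derive \lstinline{put l x s = Just ((_, s), _)}, i.e.\ that \lstinline{put} succeeds and leaves the source unchanged.

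First I would use the identity-projection hypothesis to pin down the \emph{shape} of \lstinline{put l x s}. Since \lstinline{x} $\in$ \lstinline{P} we have \lstinline{proj s l x = Just x}, and by definition of \lstinline{proj} this is \lstinline{fmap (fst . fst) (put l x s) = Just x}. Because \lstinline{fmap} over \lstinline{Maybe} preserves the \lstinline{Just}/\lstinline{Nothing} distinction and is injective on the carried value up to the two projections, this forces \lstinline{put l x s = Just ((x, s'), p)} for some source \lstinline{s'} and predicate \lstinline{p}; in particular \lstinline{put} succeeds, and the view it returns is exactly \lstinline{x}.

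It remains to show \lstinline{s' = s}. Here I would invoke weak forward round tripping, instantiating its universally quantified \lstinline{y} with \lstinline{x}: we have \lstinline{get l s = Just x} by assumption and \lstinline{put l x s = Just ((x, s'), p)} as just derived, which are precisely the two premises of weak forward round tripping (with \lstinline{y := x}, the returned view matching the result of \lstinline{get}). It therefore yields \lstinline{s = s'}. Substituting back gives \lstinline{put l x s = Just ((x, s), p)}, which is exactly the conclusion of \ref{eqn:L-GetPut} restricted to \lstinline{x} $\in$ \lstinline{P}.

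The only delicate point — more a matter of bookkeeping than a genuine obstacle — is aligning the generalised lens type with the round-tripping statements: weak forward round tripping quantifies over an abstract pre-view input and relates the \emph{outputs} of \lstinline{get} and \lstinline{put} rather than the input of \lstinline{put}, so one must observe that it is precisely the identity-projection condition that forces the output of \lstinline{put l x s} to be \lstinline{x}, thereby making both premises of weak forward round tripping available with the common witness \lstinline{y = x}. Everything else reduces to a short unfolding of \lstinline{proj} and the \lstinline{Maybe} functor laws, and the argument is visibly dual to the backward case handled in Theorem~\ref{thm:wb-to-b-rt}.
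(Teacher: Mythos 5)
Your proposal is correct and follows essentially the same route the paper takes (mirroring its proof of the backward counterpart, Theorem~\ref{thm:wb-to-b-rt}): use the identity-projection hypothesis to force \lstinline{put l x s = Just ((x, s'), p)}, then instantiate weak forward round tripping with \lstinline{y = x} to conclude \lstinline{s' = s}. The observation that identity projection is what supplies the common witness \lstinline{y = x} for both premises is exactly the key step, so no gap remains.
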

We can thus apply this result to our example (details omitted).
\begin{proposition}
For all \lstinline{ks}, the lens \lstinline{atKeys ks :: L Src [Value] [Value]}
is forward round tripping on lists of length \lstinline{length ks}.
\end{proposition}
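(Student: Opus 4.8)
The plan is to replay, in the forward direction, the argument just given for backward round tripping of \lstinline{atKeys}. By Theorem~\ref{thm:wf-to-f-rt} it suffices to establish (a) that \lstinline{atKeys ks} is weak forward round tripping, and (b) that \lstinline{atKeys ks} has identity projections on the subset of lists of length \lstinline{length ks}. Part (b) is exactly Proposition~\ref{prop:atkeys-pp}, since \lstinline{proj} is the same operation used for the backward direction, so the real work is (a). For (a) I would first prove weak forward round tripping of the primitive \lstinline{atKey k}, and then lift it to \lstinline{atKeys ks} using the fact (established above) that weak forward round tripping is quasicompositional.

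For \lstinline{atKey k}: unfolding the smart deconstructors gives \lstinline{get (atKey k) s = lookup k s} and \lstinline{put (atKey k) x s = Just ((x, insert k x s), _)}. Hence the hypotheses \lstinline{get (atKey k) s = Just y} and \lstinline{put (atKey k) x s = Just ((y, s'), _)} force \lstinline{lookup k s = Just y}, \lstinline{x = y}, and \lstinline{s' = insert k y s}; since reinserting the value already stored under a key leaves the map unchanged, \lstinline{s' = s}, which is what weak forward round tripping requires. The companion fact, \lstinline{proj z (atKey k) = Just}, is Proposition~\ref{prop:atkey-pp}.

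To lift this to \lstinline{atKeys ks} I would induct on \lstinline{ks}. The base case \lstinline{atKeys [] = return []} is \ref{eqn:qcomp-return}. For \lstinline{atKeys (k : ks)}, desugaring the \lstinline{do}-block exposes two applications of \lstinline{comap}, namely \lstinline{comap headM (atKey k)} and \lstinline{comap tailM (atKeys ks)} --- both weak forward round tripping by \ref{eqn:qcomp-comap} together with the previous paragraph and the induction hypothesis --- sequenced by \lstinline{>>=} whose right operands are (after desugaring) \lstinline{\\x -> comap tailM (atKeys ks) >>= \\xs -> return (x : xs)} and \lstinline{\\xs -> return (x : xs)}. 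Each is an injective arrow in the sense of Definition~\ref{def:injective-arrow}: every value these arrows produce is a cons cell \lstinline{x : xs}, so \lstinline{head} and \lstinline{tail} serve as the respective left arrow inverses, and the defining equation collapses to an identity once the monad laws are used to push the trailing \lstinline{return} through the \lstinline{>>=}. Thus \ref{eqn:qcomp-bind} applies twice and \lstinline{atKeys (k : ks)} is weak forward round tripping. Combining (a) with (b), Theorem~\ref{thm:wf-to-f-rt} delivers that \lstinline{atKeys ks} is forward round tripping on lists of length \lstinline{length ks}.

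The only genuinely non-routine step is the base computation for \lstinline{atKey k}, where everything rests on one elementary fact about \lstinline{Data.Map}: that \lstinline{insert k v m = m} whenever \lstinline{lookup k m = Just v}. The remaining obligations --- the quasicompositional induction and the injective-arrow checks --- have the same shape as the bookkeeping already done for \lstinline{replicateBp} and \lstinline{string} in Corollaries~\ref{cor:string-backward-rt} and \ref{prop:wf-quasi-string}, so I expect no difficulty there.
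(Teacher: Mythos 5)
Your proposal is correct and follows exactly the route the paper intends (its own proof is left as "details omitted"): establish weak forward round tripping for the primitive \texttt{atKey k}, lift it to \texttt{atKeys ks} by quasicompositionality with the injective-arrow checks you perform on the desugared \texttt{do}-block, and then combine with the identity-projection fact of Proposition~\ref{prop:atkeys-pp} via Theorem~\ref{thm:wf-to-f-rt}. The one domain-specific step you isolate --- that \texttt{insert k v m} equals \texttt{m} whenever \texttt{lookup k m = Just v} --- is indeed the crux for \texttt{atKey}, and it holds for the (extensional) map equality in use, so the argument goes through as you describe.
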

\vspace{-1em}

\section{Monadic bidirectional programming for generators}
\label{sec:generators}
% Haskell preamble
\iffalse
\begin{code}
{-# LANGUAGE InstanceSigs #-}
{-# LANGUAGE TypeSynonymInstances #-}
{-# LANGUAGE TypeOperators #-}

module Codec.Generator where

import Profunctors
import Test.QuickCheck hiding (generate, classify)
import Control.Monad ((>=>))
import Data.Functor.Identity
\end{code}
\fi

\noindent
Lastly, we capture the novel notion of \emph{bidirectional
  generators} (\emph{bigenerators}) extending random generators in
property-based testing frameworks like
\emph{QuickCheck} \citep{Claessen2000} to a bidirectional setting. The
forwards direction generates values conforming to a specification;
the backwards direction checks whether values conform to a
predicate. We capture the two together via our monadic profunctor pair as:
\begin{edoc}
type G = (Fwd Gen) :*: (Bwd Maybe)
-- ... with deconstructors and constructors
generate :: G u v -> Gen v                      -- forward direction
check    :: G u v -> u -> Maybe v               -- backward direction
mkG      :: Gen v -> (u -> Maybe v) -> G u v
\end{edoc}
\iffalse
\begin{code}
newtype G u v = G { unBigen :: ((:*:) (Fwd Gen) (Bwd Maybe) u v) }

generate :: G u v -> Gen v                      -- forward direction
generate = unFwd . pfst . unBigen

check    :: G u v -> u -> Maybe v               -- backward direction
check = unBwd . psnd . unBigen
\end{code}
\fi
%
The forwards direction of a bigenerator is a generator, while the backwards
direction is a partial function \lstinline{u -> Maybe v}.
A value \lstinline{G u v} represents a subset of \lstinline{v},
where \lstinline{generate} is a generator of values in that subset and
\lstinline{check} maps pre-views \texttt{u} to members of the
generated subset. In the backwards direction,
\lstinline{check g} defines a predicate on \lstinline{u}, which is true
if and only if \lstinline{check g u} is \lstinline{Just} of some
value. The function \lstinline{toPredicate} extracts this predicate
from the backward direction:
\begin{code}
toPredicate :: G u v -> u -> Bool
toPredicate g x = case check g x of Just _  -> True; Nothing -> False
\end{code}
The bigenerator type \lstinline{G} is automatically a
monadic profunctor due to our construction (\S\ref{sec:monadic-profunctors}). Thus, monad and profunctor
instances come for free, modulo (un)wrapping of
 constructors and given a trivial instance of
\lstinline{MonadPartial}:
\begin{code}
instance MonadPartial Maybe where toFailure = id
\end{code}
Due to space limitations, we refer readers to
\iflongVersion Appendix~\ref{sec:generator_example}
\else Appendix E~\cite{EXTENDED}
\fi
for an example of a
compositionally-defined bigenerator that produces binary search trees.

\paragraph{Round tripping}

% \lnote{Should we use p or g as the variable for bigenerators?
% Here I started using p/py/pz because it is more uniform with the previous sections.
% Alternatively we could use p/py/pz for biparsers, l/ly/lz for lenses, g/gy/gz for
% bigenerators...}

A random generator can be interpreted as the set of values it
may generate, while a predicate represents the set of values satisfying it.
For a bigenerator \lstinline{g},
we write $\lstinline{x}\in \lstinline{generate g}$ when
\lstinline{x} is a possible output of the generator.
The generator of a bigenerator \lstinline{g}
should match its predicate \lstinline{toPredicate g}. This requirement
equates to round-tripping properties: a
bigenerator is \emph{sound} if every value which it can generate satisfies the
predicate (forward round tripping); a bigenerator is \emph{complete} if every value which satisfies the
predicate can be generated (backward round tripping). Completeness is often
more important than soundness in testing because unsound tests can be
filtered out by the predicate, but completeness determines
the potential adequacy of testing.
\begin{definition}
  A bigenerator \lstinline{g :: G u u} is \emph{complete} (backward
  round tripping) when
\lstinline{toPredicate g x = True} implies
$\lstinline{x} \in \lstinline{generate g}$.
\end{definition}
\begin{definition}
A bigenerator \lstinline{g :: G u u} is \emph{sound}  (forward
  round tripping) if
for all \lstinline{x :: u},
$\lstinline{x} \in \lstinline{generate g}$
implies that
\lstinline{toPredicate g x = True}.
\end{definition}

\noindent
Similarly to backward round tripping of biparsers and lenses,
completeness can be split into a compositional weak completeness and a
purifiable property.

As before, the compositional weakening of completeness relates the forward and
backward components by their outputs, which have the same type.
\begin{definition}
  A bigenerator \lstinline{g :: G u v} is \emph{weak-complete} when
  \[
    \lstinline{check g x = Just y}
    \;\Longrightarrow\;
    {\lstinline{y}\in\lstinline{generate g}}.
  \]
\end{definition}
\begin{theorem}
  Weak completeness is compositional.
\end{theorem}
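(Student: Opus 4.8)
The plan is to verify the three closure conditions from the definition of compositionality directly, working with the concrete representation $\lstinline{G = (Fwd Gen) :*: (Bwd Maybe)}$ and unfolding the monad/profunctor operations as defined in Section~\ref{sec:constructing-mp}. Throughout, I would write $\lstinline{g} \in \mathcal{C}^{\lstinline{u}}_{\lstinline{v}}$ to mean that $\lstinline{g :: G u v}$ is weak-complete, i.e. $\lstinline{check g x = Just y}$ implies $\lstinline{y} \in \lstinline{generate g}$. The key observation making everything routine is that the backward component $\lstinline{check g}$ and the forward component $\lstinline{generate g}$ are built by exactly parallel recursion on the same syntactic skeleton, so whatever value $\lstinline{check}$ produces can be traced through a corresponding path in $\lstinline{generate}$.

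First, for \ref{eqn:comp-return}: $\lstinline{generate (return x) = return x}$ in $\lstinline{Gen}$, whose support is the singleton $\{\lstinline{x}\}$, and $\lstinline{check (return x) u = Just x}$ for every $\lstinline{u}$; so if $\lstinline{check (return x) u = Just y}$ then $\lstinline{y = x} \in \lstinline{generate (return x)}$. Second, for \ref{eqn:comp-comap}: $\lstinline{comap f}$ leaves the $\lstinline{Fwd}$ component untouched, so $\lstinline{generate (comap f g) = generate g}$, while $\lstinline{check (comap f g) u = (f u) >>= check g}$; hence $\lstinline{check (comap f g) u = Just y}$ forces $\lstinline{check g u' = Just y}$ for some $\lstinline{u'}$, and weak completeness of $\lstinline{g}$ gives $\lstinline{y} \in \lstinline{generate g} = \lstinline{generate (comap f g)}$. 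Third, for \ref{eqn:comp-bind}: suppose $\lstinline{g :: G u v}$ and $\lstinline{k :: v -> G u w}$ are weak-complete and $\lstinline{check (g >>= k) u = Just z}$. Unfolding the product monad and the $\lstinline{Bwd Maybe}$ bind, this means there is an intermediate $\lstinline{y}$ with $\lstinline{check g u = Just y}$ and $\lstinline{check (k y) u = Just z}$. Weak completeness of $\lstinline{g}$ gives $\lstinline{y} \in \lstinline{generate g}$; weak completeness of $\lstinline{k y}$ gives $\lstinline{z} \in \lstinline{generate (k y)}$. Finally, since $\lstinline{generate (g >>= k) = generate g >>= (\\y -> generate (k y))}$ in $\lstinline{Gen}$, and the support of a monadic bind in $\lstinline{Gen}$ contains $\lstinline{generate (k y)}$'s support for each $\lstinline{y}$ in $\lstinline{generate g}$'s support, we conclude $\lstinline{z} \in \lstinline{generate (g >>= k)}$.

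The only genuinely delicate point—and the one I would state as a small lemma before the main argument—is the last step: that for $\lstinline{Gen}$, membership in the support of $\lstinline{ga >>= kb}$ follows from $\lstinline{a} \in \lstinline{ga}$ and $\lstinline{b} \in \lstinline{kb a}$. This is the semantic content of how $\lstinline{Gen}$'s bind is defined (it samples $\lstinline{a}$, then samples from $\lstinline{kb a}$), and it is exactly the dual of the compositional $\lstinline{check}$ step; making the notion $\lstinline{x} \in \lstinline{generate g}$ precise (e.g. as "$\lstinline{x}$ is in the image of $\lstinline{generate g}$ over some seed and size") is the one place where we must be careful rather than purely formal. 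Everything else is bookkeeping over the $\lstinline{(:*:)}$, $\lstinline{Fwd}$, and $\lstinline{Bwd}$ instances. Since $\lstinline{Maybe}$'s $\lstinline{MonadPartial}$ instance is the identity, there are no side conditions to discharge for $\lstinline{comap}$, and the proof goes through uniformly for all type indices $\lstinline{u}$, $\lstinline{v}$, $\lstinline{w}$.
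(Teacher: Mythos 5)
Your proof is correct and follows essentially the same route as the paper's (machine-checked, supplementary) argument: a case analysis over \lstinline{return}, \lstinline{(>>=)} and \lstinline{comap}, unfolding the \lstinline{Fwd Gen :*: Bwd Maybe} instances, exactly parallel to the paper's compositionality proof for weak backward round tripping of biparsers. The one step you flag as delicate---membership in the support of a bind in \lstinline{Gen}---is immediate under the paper's stated reading of a generator as the set of values it may generate, so no further lemma is needed beyond making that interpretation explicit, as you do.
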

In a separate step, we connect the input of the backward direction,
\textit{i.e.}, the checker, by reasoning directly about its pure
projection (via a more general form of identity projection)
which is defined to be the checker itself:
\begin{theorem}
  A bigenerator \lstinline{g :: G u u} is complete if it is weak-complete and its
  checker satisfies a pure projection property:
  $\lstinline{check g x = Just x'} \, \Rightarrow \, \lstinline{x = x'}$
\end{theorem}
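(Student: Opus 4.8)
The plan is to prove this by directly unfolding the definitions of \lstinline{complete}, \lstinline{weak-complete}, and \lstinline{toPredicate}; no induction on the structure of \lstinline{g} is required, since the compositional content has already been discharged in the (separate) proof that weak completeness is compositional. Conceptually this theorem is the bigenerator counterpart of Theorem~\ref{thm:weak-to-backward}: it recovers the strong aligned round-tripping property from its compositional weakening together with a statement about the pure projection of \lstinline{g}, which for the profunctor \lstinline{G} is exactly the checker \lstinline{check} (the backward component lives in the bare \lstinline{Maybe} monad, which carries no effect to purify away).

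First I would fix an arbitrary \lstinline{x :: u} and assume \lstinline{toPredicate g x = True}. Unfolding the definition of \lstinline{toPredicate}, the scrutinee \lstinline{check g x} of its \lstinline{case} must then be of the form \lstinline{Just y} for some \lstinline{y :: u}; I name this witness \lstinline{y}. Second, I would feed \lstinline{check g x = Just y} to the assumed pure projection property to obtain \lstinline{x = y}. Third, I would feed the same equation \lstinline{check g x = Just y} to weak-completeness of \lstinline{g}, which yields $\lstinline{y} \in \lstinline{generate g}$; rewriting along \lstinline{x = y} gives $\lstinline{x} \in \lstinline{generate g}$, which is precisely the conclusion of completeness for the arbitrary \lstinline{x} chosen.

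The only real care point — bookkeeping rather than a genuine obstacle — is that \lstinline{toPredicate} hides the generated value behind a \lstinline{case} on \lstinline{check g x}, so one must extract and name that witness before either hypothesis can be applied to it. It is also worth noting that the hypothesis here is the \emph{partial}-identity form (\lstinline{check g x = Just x'} implies \lstinline{x = x'}, with no claim when \lstinline{check} fails), which is strictly weaker than the identity-projection notion used elsewhere in the framework; this weakening is exactly what completeness tolerates, since a conclusion is only demanded in the case where \lstinline{check g x} succeeds. An alternative and even shorter presentation is to derive the statement as a corollary of Theorem~\ref{thm:weak-to-backward}, once one observes that the purification of \lstinline{G} in the sense of Definition~\ref{def:purifiable} is \lstinline{check}, so that \textbf{weak backward round tripping} specialises to weak completeness and the (partial) identity-projection hypothesis specialises to the stated pure projection property.
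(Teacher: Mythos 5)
Your direct unfolding argument is correct and is essentially the argument the paper intends (it mirrors the proof of Theorem~\ref{thm:weak-to-backward}: extract the witness from the successful check, use the pure-projection hypothesis to identify it with the input, then apply the weak property). One small caveat on your closing aside: you cannot literally obtain this as a corollary of Theorem~\ref{thm:weak-to-backward}, since that theorem is specific to biparsers (parse/print over strings) rather than a generic statement about purifiable monadic profunctors, so the analogy is only informal — but your primary proof does not rely on it.
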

Thus to prove completeness of a bigenerator \lstinline{g :: G u u},
we first have weak-completeness by construction, and we can then show that
\lstinline{check g} is a restriction of the identity function, interpreting
all bigenerators simply as partial functions.

Considering the other direction, soundness, there is unfortunately no
decomposition into a quasicompositional property and a property on
pure projections. To see why, let \lstinline{bool} be a random uniform
bigenerator of booleans, then consider for example,
\lstinline{comap isTrue bool} and \lstinline{comap isTrue (return True)},
where \lstinline{isTrue True = Just True} and \lstinline{isTrue False = Nothing}.
Both satisfy any quasicompositional property satisfied by \lstinline{bool},
and both have the same pure projection \lstinline{isTrue}, and yet the former
is unsound---it can generate \lstinline{False}, which is rejected by
\lstinline{isTrue}---while the latter is sound. This is not a problem in practice, as unsoundness, especially in small scale, is inconsequential in testing.
%Thus, to prove soundness, one
%must go back to reasoning explicitly about generators and
%checkers.
But it does raise an intellectual challenge and an interesting point in the design space, where
ease of reasoning has been traded for greater expressivity in the monadic approach.

\section{Discussion and Related Work}
\label{sec:discussion}
% Profunctor HOAS?
Bidirectional transformations are a widely applicable technique
 used in many domains~\citep{Czarnecki2009}. Among language-based
solutions, the lens framework is most
influential~\citep{FGMPS07,BaCFGP10,BoFPPS08,FoMV12,RaLFC13,PaHF14}.
 Broadly speaking, %lenses encapsulate a pair of functions (one in each
 % direction),
 combinators are used as programming constructs
 with which complex lenses are created by combining simpler ones.
 The combinators preserve round tripping,
 and therefore the resulting programs are correct by construction.
 A problem with lens languages is that they tend to be disconnected from
 more general programming. Lenses can only be constructed by very
 specialised combinators and are not subject to existing abstraction
 mechanisms.
 Our approach allows bidirectional
 transformations to be built using standard components of functional
 programming, and gives a reasoning framework for studying
 compositionality of round-tripping properties.

The framework of \emph{applicative lenses}~\citep{MaWa15}
uses a function representation of lenses to
lift the point-free restriction of the combinator-based languages, and enables bidirectional programming with
explicit recursion and pattern matching. Note that the use of
``applicative'' in applicative lenses refers to the transitional sense
of programming with $\lambda$-abstractions and functional applications,
which is not directly related to applicative functors. In a subsequent work, the authors developed a language known
as HOBiT~\citep{esop18}, which went further in featuring
proper binding of variables. Despite the
success in supporting $\lambda$-abstractions and function applications in programming bidirectional transformations, none of the languages
have explored advanced patterns such as monadic programming.

The work on \emph{monadic lenses}~\citep{Abou-Saleh2016} investigates lenses
with effects. For instance, a ``put'' could require additional input to resolve
conflicts. %when merging a view and a source.
Representing effects with monads helps reformulate the laws of round-tripping.
In contrast, we made the type of lenses itself a monad, and showed how
they can be composed monadically. % to preserve round tripping.
Our method is applicable to monadic lenses, yielding what one might call
\emph{monadic monadic lenses}: monadically composable lenses with monadic effects.
We conjecture that laws for monadic lenses can be adapted to this setting
with similar compositionality properties, reusing our reasoning framework.

Other work leverages profunctors for bidirectionality.
%with profunctors defined via a pair
%of co- and contravariant- functors.
Notably, a
\emph{Profunctor optic}~\citep{poptics} between a source type \lstinline{s}
and a view type \lstinline{v} is a function of type \lstinline{p v v -> p s s},
for an abstract profunctor \lstinline{p}.
Profunctor optics and our monadic profunctors offer orthogonal composition patterns:
profunctor optics can be composed ``vertically'' using function composition,
whereas monadic profunctor composition is ``horizontal'' providing
sequential composition. In both cases, composition
in the other direction can only be obtained by breaking the abstraction.

%as we do with a custom \lstinline{(>>>)} combinator.\mnote{Check this.}
%Second, promonadic definitions are loosely similar to profunctor optics.
%By abstracting over primitives, \eg{}, \lstinline{rootL} and \lstinline{rightL}
%in Section~\ref{sec:lenses},
%the type of a monadic lens such as \lstinline{spineL} becomes:
%%
%\begin{edoc}
%L Tree (Maybe Int) (Maybe Int) -> L Tree Tree Tree -> L Tree [Int] [Int]
%\end{edoc}
%%
%This corresponds loosely to a profunctor optic with source type \lstinline{[Int]},
%and two types of views: \lstinline{Maybe Int} and \lstinline{Tree},
%corresponding to the second and third arguments to the type constructor
%\lstinline{L}. This is not exclusive to promonadic lenses. Biparsers and
%bigenerators are also profunctor-optic-like in the same way, but we chose the
%example of lenses to better set our work apart from the point of view of
%profunctor optics.
%Indeed, the parallel we drew mismatches the notions of ``view'' and ``source'' in
%the two approaches.
%Among other differences, in the type of promonadic lenses
%\lstinline{L s u v}, the source type \lstinline{s} is internal to the
%concrete profunctor \lstinline{L s}, whereas profunctor optics operate with an
%abstract profunctor \lstinline{p} and rely on parametricity to
%preserve the relation between the view and source in \lstinline{p v v -> p s s}.
%Nevertheless, this resemblance suggests a deeper connection between
%monadic profunctors and profunctor optics to investigate in the future.

It is folklore in the Haskell community that profunctors can be combined with applicative functors~\citep{McPa08}. The pattern
is sometimes called a \emph{monoidal} profunctor.
The \textsf{codec} library~\citep{codec} mentioned in
Section~\ref{sec:monadic-profunctors} prominently features two applications of
this applicative programming style: binary serialisation (a form of
parsing/printing) and conversion to and from JSON structures (analogous to
lenses above). Opaleye~\citep{opaleye}, an EDSL of SQL queries for Postgres databases,
uses an interface of monoidal profunctors to implement generic operations such
as transformations between Haskell datatypes and database queries and responses.

Our framework adapts gracefully to applicative programming,
a restricted form of monadic programming.
By separating the input type from the output type, we can reuse the existing
interface of applicative functors without modification.
Besides our generalisation to monads, purification
and verifying round-tripping properties via (quasi)compositionality are novel in our framework.

Rendel and Ostermann proposed an interface for programming parsers and printers together~\citep{ReOs10}, but they were unable to reuse the existing structure of \lstinline{Functor}, \lstinline{Applicative} and \lstinline{Alternative} classes (because of the need to handle types that are both covariant and contravariant), and had to reproduce the
entire hierarchy separately. In contrast, our approach reuses the
standard type class hierarchy, further extending the
expressive power of bidirectional programming in Haskell. FliPpr~\citep{wang2013flippr, Matsuda2018} is an invertible language that
generates a parser from a definition of a pretty printer. In this
paper, our biparser definitions are more similar to those of parsers
than printers. This makes sense as it has been established that many
parsers are monadic. Similar to the case of HOBiT, there is no
discussion of monadic programming in the FliPpr work.

%In this paper, we construct a new composition
%framework for lenses based on monads, which fits neatly into Haskell's
%existing type class hierarchy. This new composition technique not
%only enables lens programming in the monadic style, but also opens
%up bidirectional programming to new domains that utilise monadic
%composition, \eg{}, random generators for testing.

Previous approaches to unifying random generators and predicates
mostly focused on deriving generators from predicates.
One general technique evaluates predicates lazily to
drive generation (random or enumerative)~\citep{korat,Claessen2015},
but one loses control over the resulting distribution of generated values.
\emph{Luck}~\citep{Lamp17} is a domain-specific language blending narrowing and
constraint solving to specify generators as predicates with
user-provided annotations to control the probability distribution.
In contrast, our programs can be viewed as generators annotated
with left inverses with which to derive predicates. This reversed perspective
comes with trade-offs: high-level properties would be more naturally
expressed in a declarative language of predicates, whereas it is \emph{a priori} more
convenient to implement complex generation strategies in a specialised
framework for random generators.

\paragraph{Conclusions}
\label{sec:conclusions}
This paper advances the expressive power of bidirectional programming; we
showed that the classic bidirectional patterns of parsers/printers and lenses
can be restructured in terms of \emph{monadic profunctors} to
provide sequential composition, with associated reasoning techniques.
This opens up a new area in the design of
embedded domain-specific languages for BX programming, that does not restrict programmers to stylised interfaces.
Our example of bigenerators broadened the scope of BX programming from
transformations (converting between two data representations) to
non-transformational applications.

To demonstrate the applicability of our approach to real code,
we have developed two bidirectional libraries~\cite{lib},
one extending the \textsf{attoparsec} monadic parser combinator library to
biparsers and one extending \textsf{QuickCheck} to bigenerators.
One area for further work is studying
biparsers with \emph{lookahead}. Currently lookahead
can be expressed in our extended \textsf{attoparsec}, but understanding
its interaction with (quasi)compositional
round-tripping is further work.

%Our framework could be applied in library implementations and in the design of
%bidirectional languages, with either a codec or monadic %profunctor basis for
%the semantics. We have considered some extensions to our notions of
%compositionality and congruence to include other% programming constructs such
%as sum types (see the end of Section~\ref{sec:quasicomp}).

However, this is not the final word on sequentially composable BX
programs. In all three applications, round-tripping properties are similarly
split into weak round tripping, which is weaker than the original property but
compositional, and purifiable, which is equationally friendly.
An open question is whether an underlying structure can be formalised,
perhaps based on an adjunction model, that captures
bidirectionality even more concretely than monadic profunctors.

\paragraph{Acknowledgments}

We thank the anonymous reviewers for their helpful comments.
The second author was supported partly by EPSRC grant EP/M026124/1.

% THE BELOW IS NO LONGER TRUE.
%Furthermore, quasicompositionality requires that programmers follow
%the pattern of binding with comap at each step, a discipline which is
%not enforced and therefore prone to error. We hope
%to develop tools and methods, such as type systems to
%automatically ensure correct usage of these patterns and guarantee
%quasicompositional properties.

\bibliography{references}

\iflongVersion
\appendix
\section{Further code}
\label{app:further-code}

\paragraph{Complete \lstinline{Monad} instance for biparsers}
\label{app:mbx-code}

The instance is the straightforward product of the monad instances
for \lstinline{Parser} and \lstinline{Printer}, where the two parts
remain independent:
\begin{code}
instance Monad (Biparser u) where
  return :: v -> Biparser u v
  return v = Biparser (\s -> (v, s)) (\ _ -> (v, ""))

  (>>=) :: Biparser u v -> (v -> Biparser u w) -> Biparser u w
  pu >>= kw = Biparser parse' print' where
    parse'  s = let (v, s') = parse pu s in parse (kw v) s'
    print'  u = let (v, s) = print pu u
                    (w, s') = print (kw v) u in (w, s ++ s')
\end{code}

\section{Proofs for compositional reasoning}
\label{app:aux-results}

\noindent
The supplementary Coq proofs formalise many results of
Section~\ref{sec:compositionality}. We include some results here
as hand-proofs for human consumption.

\charRoundtrip*
\begin{proof}
Expanding the definitions, for backward round tripping: \\
  \lstinline{fmap snd (print p x) = Just [c]},
then \lstinline{parse p ([c] + +s') = (c, s')} (QED).

For forward round tripping, \lstinline{parse p s = (x, "")} means that
s must be \lstinline{[x]}, then:
\lstinline{fmap snd (print p x) = Just [x]} (QED).
\end{proof}

\begin{proposition}
  \label{prop:return-backwards}
  The \lstinline{return} operation for the \lstinline{Biparser}
  monadic profunctor is not backward round tripping, but it is
  \emph{weak} backward round tripping.
\end{proposition}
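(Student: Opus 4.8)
The plan is to unfold \lstinline{return x} into its two components and treat each round-tripping direction separately. Recall from Section~\ref{subsec:biparser-promonad} (or the explicit instance in Appendix~\ref{app:further-code}) that \lstinline{return x} is built from \lstinline{(\ s -> (x, s))} in the forward direction and \lstinline{(\ _ -> Just (x, ""))} in the backward direction, so that \lstinline{parse (return x) s = (x, s)} and \lstinline{print (return x) w = Just (x, "")} for every \lstinline{w}.

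For weak backward round tripping, take any \lstinline{x :: v}, so that \lstinline{return x :: Bp u v}. The premise \lstinline{print (return x) w = Just (y, s)} forces \lstinline{y = x} and \lstinline{s = ""}, and the required conclusion \lstinline{parse (return x) ("" ++ s') = (x, s')} then holds by the definition of \lstinline{parse (return x)}, since \lstinline{"" ++ s' = s'}. Equivalently, this is just the \lstinline{return} clause of Proposition~\ref{prop:biparser-well-behaved}, which states that weak backward round tripping is compositional and hence in particular preserved by \lstinline{return}.

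For the failure of (strong) backward round tripping, the crux is that \lstinline{print (return x)} discards its argument, so the parser --- which always returns \lstinline{x} --- cannot recover a printer input different from \lstinline{x}. Concretely, instantiate the aligned type at \lstinline{Bool} and consider \lstinline{return True :: Bp Bool Bool}. Feeding \lstinline{False} to the printer gives \lstinline{fmap snd (print (return True) False) = Just ""}, so the premise of backward round tripping holds with \lstinline{s = ""}; but for every \lstinline{s'} we have \lstinline{parse (return True) ("" ++ s') = (True, s')}, which is not \lstinline{(False, s')}. Hence the conclusion fails, and \lstinline{return} is not backward round tripping. The same counterexample works for any \lstinline{return x :: Bp u u} whenever \lstinline{u} contains an element distinct from \lstinline{x}; only at a singleton source/view type would \lstinline{return} happen to be (vacuously) backward round tripping.

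There is no genuine obstacle here: both halves follow by unfolding definitions. The only point requiring mild care is the reading of the negative claim --- ``\lstinline{return} is not backward round tripping'' asserts the existence of an instance \lstinline{return x} that violates the property (so that \lstinline{return} does not suffice to establish it), which is precisely why the counterexample must be exhibited at a type with at least two elements rather than an arbitrary one.
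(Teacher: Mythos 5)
Your proof is correct and follows essentially the same route as the paper's: unfold \lstinline{return} into its parser component \lstinline{\ s -> (x, s)} and printer component \lstinline{\ _ -> Just (x, "")}, observe that the printed string is \lstinline{""} so weak backward round tripping holds because the comparison is against the printer's \emph{output}, and refute strong backward round tripping by feeding the printer an input different from the returned value. Your only addition is making explicit that the counterexample needs a type with at least two elements, which the paper leaves implicit when it writes \lstinline{(y, s')} $\not\equiv$ \lstinline{(x, s')}.
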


\begin{proof}
  Let \lstinline{x, y :: u} and \lstinline{s, s' :: String}:
  \begin{itemize}
    \item (Proof that it is \emph{not} backwards round tripping)
    \begin{align*}
             & \lstinline{fmap snd (print (return y) x)} \\
      \equiv\;\; & \lstinline{fmap snd ((\\_ -> Just (y, "")) x)} \\
      \equiv\;\; & \lstinline{fmap snd (Just (y, ""))} \\
      \equiv\;\; & \lstinline{Just ""}
    \end{align*}
    Thus \lstinline{s == ""}. Now we must prove the consequent of
    backwards round tripping, but it turns out to be false:
    \begin{align*}
            & \lstinline{parse (return y) ("" ++ s')} \\
      \equiv\;\; & \lstinline{(\\s -> (y, s)) s'} \\
      \equiv\;\; & \lstinline{(y, s')} \\
      \not\equiv\;\; & \lstinline{(x, s')}
    \end{align*}
    Thus, \lstinline{return} is not backwards round tripping.

    \item (Proof that it is weak backwards tripping)
    \begin{align*}
             & \lstinline{print (return y) x} \\
      \equiv\;\; & \lstinline{(\\_ -> Just (y, "")) x} \\
      \equiv\;\; & \lstinline{Just (y, "")}
    \end{align*}
   Thus \lstinline{s == ""}. Now we must prove the consequent of
   weak backwards round tripping:
    \begin{align*}
            & \lstinline{parse (return y) ("" ++ s')} \\
      \equiv\;\; & \lstinline{(\\s -> (y, s)) s'} \\
      \equiv\;\; & \lstinline{(y, s')} \qquad \qed
    \end{align*}
  \end{itemize}
\end{proof}

\biparserwellbehaved*
%\begin{proposition}
%Weak backward round tripping of biparsers is a compositional property.
%\end{proposition}

\begin{proof}
Case \lstinline{return}. Shown above.

Case \lstinline{(>>=)}.
\begin{edoc}
let (sp, v) = print p u
    (sk, w) = print (k v) u
print (p >>= k) u = (sp ++ sk, w)         -- by definition
parse p (sp ++ sk ++ s') = (v, sk ++ s')  -- by weak round tripping of p
parse (k v) sk = (w, s')                  -- by weak round tripping of k
\end{edoc}

Case \lstinline{comap}: trivial.
\end{proof}

\weakToBackward*

\begin{proof}
  The
  definition of \lstinline{purify p x = fmap fst (print p x)}
  when combined with the property \lstinline{purify p x = Just x},
  and the antecedent of backward round
  tripping (\lstinline{fmap snd (print p x) = Just s}), imply that
  \lstinline{print p x = Just (x, s)}.
  This satisfies the antecedent of weak backward round tripping, thus
  we can conclude \lstinline{parse p (s ++ s') = (x, s')},
  and thus backward round tripping holds for \lstinline{p}.
\end{proof}

In Section~\ref{subsec:compositionality} in the proof of
Corollary~\ref{cor:string-backward-rt}, we used three intermediate results about
\lstinline{char}, \lstinline{int} and \lstinline{replicateP}, namely:
\begin{align}
    \lstinline{proj char n} & \equiv \lstinline{Just n} \tag{\ref{eq:proj-char}} \\[-0.15em]
    \lstinline{proj int n} & \equiv \lstinline{Just n} \tag{\ref{eq:proj-int}} \\[-0.15em]
    \lstinline{proj (replicateBiparser (length xs) p) xs} & \equiv \lstinline{mapM (proj p) xs}
\tag{\ref{eq:proj-replicate}}
\end{align}
The first two are straightforward from their definitions.
%
%\begin{proof}
%  \begin{align*}
%    \lstinline{proj char n} = \lstinline{fmap fst (print char n)}
%    = \lstinline{fmap fst (Just (n, [n]))}
%    = \lstinline{Just n}
%  \end{align*}
%\end{proof}
 Let us take a closer look at the latter, equation~\ref{eq:proj-replicate}.

As a printer, \lstinline{replicateBiparser (length xs)} applies the printer
\lstinline{p} to every element of the input list \lstinline{xs}, and if
we ignore the output string with \lstinline{proj}, that yields
\lstinline{mapM (proj p) xs}.
When \lstinline{p} is aligned and has \lstinline{proj p = Just},
as was the case in the proof of Corollary~\ref{cor:string-backward-rt} then
all applications in the list succeed and return a \lstinline{Just}
value, so \lstinline{mapM (proj p) xs} as a whole succeeds and returns the
whole list of results. Therefore,
\lstinline{replicateBiparser (length xs) p xs = Just xs}.

\wfquasi*

\begin{proof}
  (Provides some additional details from that shown in the main body
  of the paper).
  We sketch the \ref{eqn:qcomp-bind} case, where \lstinline{p = (m >>= k)}
  for some \lstinline{m} and \lstinline{k} that are weak forward
  roundtripping.

  From \lstinline{parse (m >>= k) s01 = (y, s1)}, it follows that there exists
  \lstinline{z}, \lstinline{s} such that \lstinline{parse m s01 = (z, s)} and
  \lstinline{parse (k z) s = (y, s1)}. Similarly, the second
  conjunct \lstinline{print (m >>= k) x = Just (y, s0)}
  implies there exists \lstinline {z'}, \lstinline{s0'} such that
  \lstinline{print m x} \lstinline{=} \lstinline{Just (z', s0')} and
  \lstinline{print (k z') x} \lstinline{=} \lstinline{Just (y, s1')}
  and \lstinline{s0} \lstinline{=} \lstinline{s0' ++ s1'}.

  Because \lstinline{k} is an injective arrow,
  we have \lstinline{z = z'}. This comes from the following:
  Recall we have:
  \lstinline{parse (k z) s = (y, s1)}
  and \lstinline{print (k z') x = Just (y, s1')}
  Let \lstinline{k'} be the left arrow inverse of \lstinline{k}, then
  (from Definition~\ref{def:injective-arrow}):
  \begin{equation*}
    \lstinline{(k z >>= \y -> return (z, y)) = (k z >>= \y -> return (k' y, y))}
  \end{equation*}
(and similarly with \texttt{z'}). Plugging both sides into
``\lstinline{parse _ s}'' and ``\lstinline{print _ x}'' respectively,
then using the fact that print and parse are monad morphisms, followed
by the above two equations, we get:
\begin{align*}
& \lstinline{parse (...) s = ((z, y), s1) = ((k' y, y), s1)} \\
& \lstinline{print (...) x = Just ((z', y), s1') = Just ((k' y, y), s1')}
\end{align*}
So \lstinline{z = k' y = z'}.

  We then use the assumption that
  \lstinline{m} and \lstinline{k} are weak forward roundtripping on \lstinline{m}
  and on \lstinline{k a},
  and deduce that \lstinline{s01 = s0' ++ s} and \lstinline{s = s1' ++ s1}.
  Therefore \lstinline{s01 = s0' ++ (s1' ++ s1)} which reassociates to
  \lstinline{s01 = (s0' ++ s1') ++ s1} which equals \lstinline{s01 = s0 ++ s1}
 (since \lstinline{s0 = s0' ++ s1'}, shown above).
\end{proof}

\section{For comparison: separately defined parser and printer and round-tripping proof}
\label{app:manual-parser-printer}

The following provides a comparison between the monadic profunctor
biparser of Section~\ref{sec:biparsers} and the alternative without
our approach: having to write two separate definitions of a parser and
a printer. With these examples we also compare our reasoning approach of
Section~\ref{sec:compositionality} with having to manually prove
roundtripping on the separated definitions.

The main points of the comparison are:
\begin{itemize}
  \item This parser has the same structure as the biparser in Section~\ref{sec:biparsers},
    only without any \lstinline{upon} annotations.
  \item The standalone printer for this example is extremely simple.
  \item The auxiliary lemmas for
    \lstinline{parseInt}, \lstinline{parseDigits}, and \lstinline{replicateM}
    correspond to the round tripping properties of their bidirectional
    counterparts (\lstinline{int}, \lstinline{digits}, and
    \lstinline{replicateBiparser}).
  \item These lemmas must all manipulate source strings explicitly,
    whereas our framework uses compositionality to handle those details.
\end{itemize}

\subsection{Parser monad}

\begin{code}
newtype Parser a = Parser { runParser :: String -> (a, String) }

instance Monad Parser where
  return x = Parser (\s -> (x, s))
  p >>= k = Parser (\s ->
    let (y, s') = runParser p s in
    runParser (k y) s')

parseChar :: Parser Char
parseChar = Parser (\(c : s) -> (c, s))
\end{code}

\subsection{String parser}

\begin{code}
parseDigits :: Parser String
parseDigits = do
  d <- parseChar
  if isDigit d then do
    igits <- parseDigits
    return (d : igits)
  else if d == ' ' then
    return ""
  else
    error "Expected digit or space."

parseInt :: Parser Int
parseInt = do
  ds <- parseDigits
  return (readInt ds)

parseString :: Parser String
parseString = do
  n <- parseInt
  replicateM n parseChar

replicateM :: Int -> Parser a -> Parser [a]
replicateM 0 _ = return []
replicateM n p = do
  x <- p
  xs <- replicateM (n - 1) p
  return (x : xs)
\end{code}

\subsection{String printer}

\begin{code}
type Printer a = a -> String

printString :: Printer String
printString s = showInt (length s) ++ " " ++ s
\end{code}

\subsection{Backwards round tripping}

\begin{theorem}
  For any string \lstinline{s}:
  \lstinline{runParser parseString (printString s) = (s, "")}.
\end{theorem}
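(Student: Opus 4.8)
The plan is to peel \lstinline{parseString} apart along its monadic structure. It runs \lstinline{parseInt} and then \lstinline{replicateM n parseChar}, so on the printed string \lstinline{printString s = showInt (length s) ++ " " ++ s} the first stage should consume \lstinline{showInt (length s) ++ " "} and return \lstinline{length s}, after which the second stage consumes the remaining \lstinline{s}. I would prove one lemma per stage --- these are exactly the round-tripping facts about \lstinline{int}, \lstinline{digits}, and \lstinline{replicateBiparser} alluded to above, now stated for the standalone parser --- and then combine them using the \lstinline{Parser} bind equation \lstinline{runParser (p >>= k) s = runParser (k y) s'} where \lstinline{(y, s') = runParser p s}.

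\emph{First lemma (the integer prefix).} By induction on the length of a string \lstinline{ds} consisting only of digit characters, for every \lstinline{rest}: \lstinline{runParser parseDigits (ds ++ " " ++ rest) = (ds, rest)}. Indeed \lstinline{parseChar} removes the head; if it is a digit the \lstinline{isDigit} branch recurses on the strictly shorter tail and re-prepends the digit, and once \lstinline{ds} is exhausted the head is \lstinline{' '} so the space branch returns \lstinline{""} with remainder \lstinline{rest}. Using that \lstinline{showInt n} is a nonempty string of digit characters for nonnegative \lstinline{n}, and that \lstinline{readInt (showInt n) = n}, this yields \lstinline{runParser parseInt (showInt n ++ " " ++ rest) = (n, rest)}.

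\emph{Second lemma (the payload).} By induction on \lstinline{xs} --- equivalently on \lstinline{length xs} --- for every \lstinline{rest}: \lstinline{runParser (replicateM (length xs) parseChar) (xs ++ rest) = (xs, rest)}; the base case is \lstinline{replicateM 0 _ = return []} and the step uses only that \lstinline{parseChar} removes and returns the head character. Now instantiate the first lemma with \lstinline{n := length s} and \lstinline{rest := s}: \lstinline{runParser parseInt (printString s) = (length s, s)}; by the bind equation \lstinline{parseString} then runs \lstinline{replicateM (length s) parseChar} on the leftover \lstinline{s}, and the second lemma with \lstinline{rest := ""} yields \lstinline{(s, "")}, which is the claim.

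The main obstacle is the first lemma. It is the only place where the proof must branch on the shape of the consumed input (digit / space / neither, together with the partiality of \lstinline{parseChar} on the empty string) and it depends on the arithmetic facts relating \lstinline{showInt} and \lstinline{readInt}. Note, however, that the leading-zeros subtlety that obstructs the forward direction (Corollary~\ref{prop:wf-quasi-string}) does not arise here, since only \lstinline{readInt (showInt n) = n} --- which holds unconditionally --- is needed for this backward round-tripping statement; everything else is routine unfolding and induction, which is precisely the bookkeeping the compositional framework of Section~\ref{sec:compositionality} lets us avoid in the biparser version.
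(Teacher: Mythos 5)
Your proposal is correct and follows essentially the same route as the paper: the same decomposition via the \lstinline{Parser} bind equation, the same auxiliary lemmas for \lstinline{parseDigits} (by induction on the digit string), \lstinline{parseInt} (using that \lstinline{showInt n} is a digit string and \lstinline{readInt (showInt n) = n}), and \lstinline{replicateM} (by induction on the payload). Your slight generalisation of the \lstinline{replicateM} lemma to an arbitrary trailing \lstinline{rest}, and your remark that leading zeros are irrelevant in this backward direction, are harmless refinements of the paper's argument rather than a different proof.
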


\begin{proof} By equational reasoning.
\begin{edoc}
runParser (parseInt >>= \n -> replicateM n parseChar)
          (showInt (length s) ++ " " ++ s)
    {- Lemma (parseInt) -}
  = runParser (replicateM (length s) parseChar) s
    {- Lemma (replicateM) -}
  = (s, "")
\end{edoc}
\end{proof}

\subsubsection{Auxiliary lemmas}

\begin{nlem}[{\texttt{\small parseInt}}] \normalfont
  For any nonnegative integer \lstinline{n}, and any string \lstinline{s},
  \begin{equation*}
    \lstinline{runParser parseInt (showInt n ++ " " ++ s) = (n, s)}
  \end{equation*}
\end{nlem}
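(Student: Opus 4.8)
The plan is to unfold \lstinline{parseInt} and reduce the claim to two independent facts. Desugaring the \lstinline{do}-notation gives \lstinline{parseInt = parseDigits >>= \ds -> return (readInt ds)}, so by the definition of \lstinline{>>=} for \lstinline{Parser}, \lstinline{runParser parseInt (showInt n ++ " " ++ s)} equals \lstinline{(readInt ds, s')} where \lstinline{(ds, s') = runParser parseDigits (showInt n ++ " " ++ s)}. Hence it suffices to establish (i) \lstinline{runParser parseDigits (showInt n ++ " " ++ s) = (showInt n, s)}, and (ii) \lstinline{readInt (showInt n) = n} for every nonnegative \lstinline{n}.

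For (i) I would first prove a more general auxiliary lemma by induction on the list \lstinline{ds}: for every string \lstinline{ds} all of whose characters are digits and every string \lstinline{s}, \lstinline{runParser parseDigits (ds ++ " " ++ s) = (ds, s)}. In the base case \lstinline{ds = ""} the input is \lstinline{' ' : s}; \lstinline{parseChar} yields \lstinline{(' ', s)}, the \lstinline{isDigit} guard fails, the \lstinline{d == ' '} guard succeeds, and \lstinline{parseDigits} returns \lstinline{("", s)}. In the step \lstinline{ds = d : ds'} with \lstinline{d} a digit, \lstinline{parseChar} yields \lstinline{(d, ds' ++ " " ++ s)}; since \lstinline{isDigit d} holds we recurse on \lstinline{ds' ++ " " ++ s}, which is again a string of digits followed by \lstinline{" " ++ s}, so the induction hypothesis gives \lstinline{(ds', s)}, and then \lstinline{return (d : ds')} produces \lstinline{(ds, s)}. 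Claim (i) then follows by instantiating \lstinline{ds} with \lstinline{showInt n}, using that the decimal representation of a nonnegative integer consists only of digit characters. Claim (ii) is the standard round-tripping property of decimal notation: for nonnegative \lstinline{n}, \lstinline{showInt n} carries no sign and no redundant leading zeros, so \lstinline{readInt (showInt n) = n}; I would cite this as a known fact about \lstinline{showInt}/\lstinline{readInt}. Combining, \lstinline{runParser parseInt (showInt n ++ " " ++ s) = (readInt (showInt n), s) = (n, s)}.

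The main obstacle is the auxiliary induction for (i): one has to desugar the nested \lstinline{do}-blocks of \lstinline{parseDigits}, unfold the \lstinline{Parser} \lstinline{>>=}, and carry the "all digits" side condition so that the \lstinline{isDigit} branch is taken at every recursive step (never short-circuiting early at an interior space) while still terminating at the delimiting space in the base case; checking that \lstinline{parseChar} is only ever applied to non-empty strings (always true here, since every relevant input has the form \lstinline{_ : _}) is part of this bookkeeping. Everything else — the desugaring of \lstinline{parseInt} and fact (ii) — is routine. Note that, unlike in the biparser development of Section~\ref{subsec:quasicompositionality}, no ``no leading zeros'' hypothesis is needed here, because \lstinline{showInt} never emits them.
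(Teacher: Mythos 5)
Your proposal is correct and matches the paper's own argument: the paper likewise unfolds \lstinline{parseInt} to \lstinline{parseDigits >>= \ds -> return (readInt ds)}, appeals to a separate Lemma (\texttt{parseDigits}) proved by exactly the induction you describe, and closes with the assumption that \lstinline{readInt} is a one-sided inverse of \lstinline{showInt} on nonnegative integers. The only difference is organisational --- you inline the auxiliary induction rather than stating it as a standalone lemma --- and your remark that no ``no leading zeros'' hypothesis is needed here is a correct observation consistent with the paper's treatment.
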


\begin{proof} By equational reasoning.
\begin{edoc}
runParser (parseDigits >>= \ds -> return (readInt ds))
          (showInt n ++ " " ++ s)
    {- Lemma (parseDigits), assuming (showInt n) is a string of digits -}
  = (readInt (showInt n), s)
    {- Assuming readInt is a one-sided inverse of showInt -}
  = (n, s)
\end{edoc}
\end{proof}

\begin{nlem}[{\texttt{\small parseDigits}}] \normalfont
    For any string of digits \lstinline{ds}, and any string \lstinline{s},
  \begin{equation*}
    \lstinline{runParser parseDigits (ds ++ " " ++ s) = (ds, s)}
  \end{equation*}
\end{nlem}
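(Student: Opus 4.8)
The plan is to proceed by structural induction on the string of digits \lstinline{ds}, mirroring the recursive structure of \lstinline{parseDigits}. Concretely I will prove that for every \lstinline{ds} all of whose characters satisfy \lstinline{isDigit}, and every \lstinline{s}, we have \lstinline{runParser parseDigits (ds ++ " " ++ s) = (ds, s)}. Throughout I will use the fact that \lstinline{" "} denotes the singleton list containing the space character, so that \lstinline{parseChar} applied to \lstinline{ds ++ " " ++ s} simply deconstructs this nonempty list into its head and tail, and that \lstinline{(>>=)} for \lstinline{Parser} threads the leftover string from the first component into the second.

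In the base case \lstinline{ds = ""}, the input is \lstinline{" " ++ s}, i.e. the list \lstinline{' ' : s}. Unfolding the \lstinline{do}-block of \lstinline{parseDigits}: \lstinline{parseChar} returns \lstinline{(' ', s)}, binding \lstinline{d = ' '}; since \lstinline{isDigit ' '} is false while \lstinline{d == ' '} is true, the parser returns \lstinline{""} without consuming any further input, yielding \lstinline{("", s)}, which is \lstinline{(ds, s)} as required.

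In the inductive case \lstinline{ds = c : ds'} with \lstinline{isDigit c} true and the induction hypothesis available for \lstinline{ds'}, the input is \lstinline{c : (ds' ++ " " ++ s)}. Unfolding the \lstinline{do}-block: \lstinline{parseChar} returns \lstinline{(c, ds' ++ " " ++ s)}, binding \lstinline{d = c}; since \lstinline{isDigit c} is true, the parser recurses, running \lstinline{parseDigits} on \lstinline{ds' ++ " " ++ s}, which by the induction hypothesis evaluates to \lstinline{(ds', s)}, binding \lstinline{igits = ds'}; it then returns \lstinline{(c : ds', s)}, which is \lstinline{(ds, s)}. Each step here is a routine unfolding of \lstinline{parseChar}, the \lstinline{Parser} bind, the case analysis on \lstinline{d}, and \lstinline{return}.

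I expect the main obstacle to be bookkeeping around the hypothesis "string of digits": one must ensure that the space character is not itself a digit (so the base-case branch is hit exactly at the trailing space) and that \lstinline{ds} contains no space and no non-digit character (so the \lstinline{else if d == ' '} branch is never taken prematurely and the \lstinline{error} branch is never reached). Both facts follow from the characterisation of \lstinline{ds} as a string of digits together with the disjointness of digit characters from the space character, but they have to be spelled out explicitly since the type \lstinline{String} does not enforce them. Once these side conditions are fixed, the two induction cases are mechanical, and this lemma is exactly what is needed to discharge the \lstinline{parseInt} lemma and hence the backwards round-tripping theorem for \lstinline{parseString}.
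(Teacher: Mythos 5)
Your proof is correct and follows essentially the same route as the paper's: induction on \texttt{ds}, with the base case hitting the space branch via \texttt{parseChar} and the inductive case unfolding \texttt{parseChar}, taking the digit branch, and applying the induction hypothesis on the tail. The extra care you take in spelling out that digits and the space character are disjoint is a reasonable explicit statement of a side condition the paper leaves implicit.
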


\begin{proof} By equational reasoning, and by induction on \lstinline{ds}.
\begin{itemize}
  \item Case \lstinline{ds = ""}:
\begin{edoc}
runParser (parseChar >>= \d -> if d ...) (" " ++ s)
    {- Definition of parseChar, d = ' ' -}
  = runParser (return "") s
  = ("", s)
\end{edoc}
  \item Case \lstinline{ds = (d1:ds1)}, \lstinline{d1} is a digit:
\begin{edoc}
runParser (parseChar >>= \d -> if d ...) (d1 : ds1 ++ " " ++ s)
    {- Definition of parseChar, d = d1 -}
  = runParser (parseDigits >>= \igits ->
                 return (d1 : igits)) (ds1 ++ " " ++ s)
    {- Induction hypothesis on ds1 -}
  = (d1 : ds1, s)
\end{edoc}
\end{itemize}
\end{proof}

\begin{nlem}[{\texttt{\small replicateM}}] \normalfont
  For any string \lstinline{s},
  \begin{equation*}
    \lstinline{runParser (replicateM (length s) parseChar) s = (s, "")}
  \end{equation*}
\end{nlem}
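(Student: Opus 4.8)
The plan is to prove the identity by induction on the string \lstinline{s}, using the key observation that the numeric recursion of \lstinline{replicateM} is driven by the list structure of \lstinline{s} once the count is taken to be \lstinline{length s}: \lstinline{length s} equals \lstinline{0} exactly when \lstinline{s} is empty, and otherwise \lstinline{s} has the form \lstinline{c : s'} with \lstinline{length s = 1 + length s'}, so the two clauses of \lstinline{replicateM} line up with the two cases of \lstinline{s}. This lemma is the last ingredient needed for the backwards round-tripping theorem above it.

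For the base case, take \lstinline{s = ""}. Then \lstinline{length s = 0}, so by the first clause \lstinline{replicateM 0 parseChar = return []}, and \lstinline{runParser (return []) "" = ([], "")} by the definition of \lstinline{return} for \lstinline{Parser}. Since the empty list \emph{is} the empty string, this is \lstinline{("", "")}, as required.

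For the inductive step, take \lstinline{s = c : s'} and assume the statement for \lstinline{s'}. Here \lstinline{length s = 1 + length s'}, which is nonzero, so \lstinline{replicateM (length s) parseChar} unfolds by its second clause and the \lstinline{Monad Parser} instance to a computation that first runs \lstinline{parseChar}, then \lstinline{replicateM (length s - 1) parseChar = replicateM (length s') parseChar}, then returns the cons of the two results. Running this on \lstinline{c : s'}: by the definition of \lstinline{parseChar} we get \lstinline{(c, s')}, so the head is bound to \lstinline{c} and the remaining input is \lstinline{s'}; by the induction hypothesis, running \lstinline{replicateM (length s') parseChar} on \lstinline{s'} gives \lstinline{(s', "")}, so the tail is bound to \lstinline{s'} and the leftover input is \lstinline{""}; finally \lstinline{return (c : s')} on input \lstinline{""} yields \lstinline{(c : s', "")}, i.e.\ \lstinline{(s, "")}.

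The computation itself is routine; the only point needing a moment's care is precisely the alignment mentioned at the outset — that the case split of \lstinline{replicateM} on whether its count is \lstinline{0} coincides with the case split on whether \lstinline{s} is empty — so that induction on the list correctly tracks the decreasing count. There is no genuine obstacle here: \lstinline{parseChar}, \lstinline{return}, and \lstinline{(>>=)} all behave deterministically in this setting, and the empty-string partiality of \lstinline{parseChar} never arises because the remaining input always has exactly the length demanded by the remaining count.
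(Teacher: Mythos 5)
Your proof is correct and follows essentially the same route as the paper's: induction on the string \lstinline{s} (with the count \lstinline{length s} tracking the list structure), unfolding \lstinline{replicateM}, \lstinline{parseChar}, and the \lstinline{Parser} monad operations, and closing the inductive case with the induction hypothesis. Nothing further is needed.
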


\begin{proof} By equational reasoning, and by induction on \lstinline{ds}.
\begin{itemize}
  \item Case \lstinline{s = ""}:
\begin{edoc}
runParser (return "") "" = ("", "")
\end{edoc}
  \item Case \lstinline{s = (c1:s1)}:
\begin{edoc}
runParser (char >>= \c -> replicateM (length s1) parseChar >>= \s ->
             return (c : s))
          (c1 : s1)
    {- Definition of parseChar -}
  = runParser (replicateM (length s1) parseChar >>= \s ->
                 return (c1 : s))
    {- Induction hypothesis on s1 -}
  = runParser (return (c1 : s1)) ""
  = (c1 : s1, "")
\end{edoc}
\end{itemize}
\end{proof}

\section{Lenses}

\thmWbRt*

\begin{proof}
Assume the antecedent of backward roundtripping:
\begin{equation*}
 \lstinline{put l x s = Just ((y, s'), p')} \wedge \lstinline{p' s' = True}
\end{equation*}
The goal is to prove \lstinline{get l s' = Just x}.

By the identity projection premise we have that
\lstinline{proj s l x = Just x}
for all \lstinline{s}. Recall the definition of \lstinline{proj} for lenses:
\begin{equation*}
\lstinline{proj s l = \u -> fmap (fst . fst) (put l u s)}
\end{equation*}
Combining this with assumption on \lstinline{put} and identity project
we see that:
$$
\lstinline{put l x s = Just ((x, s'), p')}
$$
We can thus instantiate weak backward round tripping to get the desired goal:
\begin{equation*}
\lstinline{get l s' = Just x}
\end{equation*}
\end{proof}

\atkeyWb*

\begin{proof}
  Recall \lstinline{atKey :: L Src Value Value}.
  Assuming the antecedent of backward round-tripping, we get the following information:
\begin{gather*}
  \lstinline{put l x (atKey k) m}
   = \lstinline{Just ((x, insert k x m), \\m' -> lookup k m' == Just
     x)} \\
\lstinline{(\\m' -> lookup k m' == Just x) s'} = \lstinline{True}
\end{gather*}
We then need to prove \lstinline{get l s' = Just x}. By the
definition of \lstinline{get}:
\begin{edoc}
  get (atKey k) s' = lookup k s'
\end{edoc}
By the second conjunct of the antecedent we know \lstinline{lookup k s' = Just x},
giving the required consequent.
\end{proof}

\atkeyPp*

\begin{proof}
For all $\lstinline{s :: s}$, following the definition we get:
\begin{align*}
& \lstinline{proj s (atKey k)} \\
\equiv \; & \lstinline{\\u -> fmap (fst . fst) (put (atKey k) u s)} \\
\equiv \; & \lstinline{\\u -> fmap (fst . fst) (Just ((u, ...), ...))} \\
\equiv \; & \lstinline{\\u -> Just u}
\end{align*}
\end{proof}

\subsection{Further example: Lenses Over Trees}

\noindent
Our lens structuring provides the following two smart deconstructors
and one smart constructor:
\begin{edoc}
get :: L s u v -> (s -> Maybe v)
put :: L s u v -> (u -> s -> Maybe ((v, s), s -> Bool))
mkLens :: (s -> Maybe v) -> (u -> s -> Maybe ((v, s), s -> Bool)) -> L s u v
\end{edoc}

As an example of programming with monadic lenses, we consider lenses over
the following data type of binary trees labeled by integers.
\begin{code}
data Tree = Leaf | Node Tree Int Tree deriving Eq
\end{code}
%
\iffalse
\begin{code}
deriving instance Show Tree
\end{code}
\fi
%
In this example, our aim is to build a lens whose forward direction
gets the right spine of the tree as a list of integers. The backwards
direction will then allow a tree to be updated with a new right spine
(represent as a list of integers), which may produce a larger source
tree.

We start by defining the classical lens combinator. Given a lens \lstinline{lt} to view \lstinline{s} as \lstinline{t},
and a lens \lstinline{ly} to view \lstinline{t} as \lstinline{u},
the combinator \lstinline{(>>>)} creates a lens to view
\lstinline{s} as \lstinline{u}. We illustrate and explain
the composition on the right.
%\lnote{This new version is difficult to explain.
%The checks are necessary to preserve well-behavedness (L-GetPut/L-PutGet).
%I think Maybe should wrap the whole tuple, and Bwd should use a MonadPartial
%constraint.}
%%
%\dnote{If this is a category, then what are the source and target
%  objects of the morphisms here?}
%\lnote{Objects are types and a morphism from s to t is a lens (L s t t).
%However I'm actually struggling with associativity so let's skip
%categories here.}

\vspace{0.5em}
\hspace{-1.2em}
\begin{minipage}{0.53\linewidth}
\begin{code}
(>>>) :: L s t t -> L t u u -> L s u u
lt >>> ly = mkLens get' put' where
  -- get' :: s -> Maybe u
  get' s = get lt s >>= get ly

  -- put' :: u -> s ->
  --        Maybe ((u, s), s -> Bool)
  put' xu s =
   case get lt s of
     Nothing -> Nothing
     Just t -> do
        ((y, xt), q') <- put ly xu t
        ((_, s'), p') <- put lt xt s
        if q' xt
          then Just ((y, s'), p')
          else Nothing

\end{code}
\end{minipage}
\vline{}\vspace{0.5em}
\hspace{1em}
\begin{minipage}{0.39\linewidth}
Illustration of the composition of lenses in
\lstinline{(>>>)}:
\vspace{0.5em}
%
%          -- put lt :: t -> s -> Maybe ((t, s), s -> Bool)
%          -- put ly :: u -> t -> Maybe ((u, t), t -> Bool)
\begin{align*}
\scalebox{1.1}{
\xymatrix@C=4em{
\texttt{s} \ar@/^/[r]^{\texttt{get lt}}
& \texttt{t} \ar@/^/[l]^{\texttt{put lt xt}}_{\texttt{s}} \ar@/^/[r]^{\texttt{get ly}}
& \texttt{u} \ar@/^/[l]^{\texttt{put ly xu}}_{\texttt{t}}
%\ar@/^2pc/[ll]^{\texttt{putter}}
}
}
\vspace{0.5em}
\end{align*}
For individual lenses,
the \lstinline{put} action takes the source
as its last parameter (shown above the lower
arrows here). In the case of the composite
lens, \lstinline{put'} has a source of type
\lstinline{s}, thus we need to create an intermediate
source of type \lstinline{t} in order to use
\texttt{put ly}. This intermediate source is provided
by first using \texttt{get lt s}.
\end{minipage}\vspace{0.5em}

\noindent
In the last three lines of \lstinline{putter} in
\lstinline{(>>>)}, in order for the composite backwards direction to
succeed, the returned intermediate store \lstinline{xt} must be
consistent (free of conflict) as checked by \lstinline{q' xt}.

We define two primitive lenses:
\lstinline{rootL} and \lstinline{rightL} for the root and
right child of a tree:
%
%\dnote{We used to say here ``The second lens is partial and assumes
%the root constructor is a \lstinline{Node}.'' but I think that is out of data.}
%\mnote{So it does not satisfy the lens laws?}
%\lnote{It satisfies the laws in this section above, that account for partiality}

\noindent
\hspace{-1em}\begin{minipage}[t]{0.54\linewidth}
\begin{code}
rootL :: L Tree (Maybe Int) (Maybe Int)
rootL = mkLens getter putter
  where
   getter (Node _ n _) = Just (Just n)
   getter Leaf         = Just Nothing

   putter n' t = Just ((n', t'), p)
    where
      t' = case (t, n') of
        (_,    Nothing) -> Leaf
        (Leaf, Just n)  -> Node Leaf n Leaf
        (Node l _ r, Just n)  -> Node l n r
      p t'' = getter t' == getter t''
\end{code}%'
\end{minipage}
%\vline\vspace{0.5em}
\begin{minipage}[t]{0.4\linewidth}
\begin{code}
rightL :: L Tree Tree Tree
rightL = mkLens getter putter
  where
   getter (Node _ _ r) = Just r
   getter  _           = Nothing

   putter r Leaf = Nothing
   putter r (Node l n _) = Just
     ((r, Node l n r),
       \t' -> Just r == getter t')
\end{code}
\end{minipage}
\vspace{0.25em}

\noindent
The \lstinline{rootL} lens accesses the label at the root if it is a \lstinline{Node},
otherwise returning \lstinline{Nothing}. Note that \lstinline{Maybe}
type used here is different to use the of \lstinline{Maybe} inside the
definition of \lstinline{L}: internally \lstinline{L} uses
\lstinline{Maybe} to represent failure, here at the top-level we are
using it to merely indicate presence of absence of a label.

The second lens \lstinline{rightL}
accesses the right child of a tree which can however fail if the source tree
is a \lstinline{Leaf} rather than a \lstinline{Node}.

Both lenses provide \lstinline{put} operations which return
predicates that check that the view of a store is equal to the view
of the store updated by the put.

We compose these two primitive lenses monadically to define the
\lstinline{spineL} lens to view and update the right spine
of a tree:

%\hspace{-1em}
%\begin{minipage}{0.58\linewidth}
\begin{code}
spineL :: L Tree [Int] [Int]
spineL = do
  hd <- comap (Just . safeHead) rootL
  case hd of
    Nothing -> return []
    Just n -> do
      tl <- comap safeTail (rightL >>> spineL)
      return (n : tl)
\end{code}
%\end{minipage}
%\hspace{1em}
%\vline{}\vspace{0.5em}\hspace{1em}
%\begin{minipage}{0.34\linewidth}
\noindent Auxiliary functions \lstinline{safeHead} and \lstinline{safeTail} are defined:
\begin{code}
safeHead :: [a] -> Maybe a
safeHead (a : _) = Just a
safeHead [] = Nothing

safeTail :: [a] -> Maybe [a]
safeTail (_ : as) = Just as
safeTail [] = Nothing
\end{code}
%\end{minipage}

\noindent
As a \lstinline{get}, it first views the root of the source tree
through \lstinline{rootL} as \lstinline{hd}, and whether it recurse or
not depends on whether it is a node (with label \lstinline{n}) or a
leaf, using \lstinline{rightL} to shift the context.  As a
\lstinline{put}, it updates the root using the head of the list, which
is returned as the view \lstinline{hd}, and continues with the same
logic. To illustrate the action of this lens, consider a tree:
\begin{code}
t0 = Node (Node Leaf 0 Leaf) 1 (Node Leaf 2 Leaf)
\end{code}
Getting the right spine \,(\lstinline{get spineL t0})\, yields the list
\lstinline{[1, 2]}. The tree spine can be updated to
\lstinline{[3, 4, 5]} yielding the following tree:
%s0 :: [Int]
%s0 = get spineL t0  -- returns [1, 2]
\begin{edoc}
fmap fst (put spineL [3, 4, 5] t0)
  = Just ([3, 4, 5], Node (Node Leaf 0 Leaf) 3 (Node Leaf 4 (Node Leaf 5 Leaf)))
\end{edoc}

%\section{Exploring the gap between printers and monadic biparsers}
%\input{mbx-design}

\section{Generators}\label{sec:generator_example}
\iffalse
\begin{code}
{-# LANGUAGE InstanceSigs #-}
{-# LANGUAGE TypeSynonymInstances #-}
{-# LANGUAGE TypeOperators #-}

module Codec.Generator where

import Profunctors
import Test.QuickCheck hiding (generate, classify)
import Control.Monad ((>=>))
import Data.Functor.Identity
\end{code}
\fi

This appendix section provides additional code examples for our notion
of \emph{bigenerators} (\emph{bidirectional generators}) extending
random generators in property-based testing frameworks like
\emph{QuickCheck}~\citep{Claessen2000} to a bidirectional setting.

We assume given a \lstinline{Gen} monad of random generators (e.g. as
defined in the QuickCheck library for Haskell) and two
primitive generators: \lstinline{genBool ::} \lstinline{Double -> Gen Bool}
generates a random boolean according to a Bernoulli distribution with a given
parameter $p\in[0,1]$; \lstinline{choose :: (Int, Int) -> Gen Int} generates a
random integer uniformly in a given inclusive range $[\mathrm{min},
\mathrm{max}]$.

\paragraph{Generators for binary search trees}
We consider again the type of trees from the previous section. A \emph{binary
search tree} (BST) is a \lstinline{Tree} whose nodes are in sorted order.
Inductively, a BST is either a \lstinline{Leaf}, or some \lstinline{Node l n r}
where \lstinline{l} and \lstinline{r} are both binary search trees, nodes in
\lstinline{l} have smaller values than \lstinline{n}, and nodes in
\lstinline{r} have greater values than \lstinline{n}.

As a working example, we are given some function
\lstinline{insert :: Tree -> Int -> Tree} which inserts an integer in a BST. We
want to test the invariant that BSTs are mapped to BSTs, by \emph{generating} a
BST and an integer to apply the \lstinline{insert} function, and \emph{check}
that the output is also a BST.

With the \lstinline{Gen} monad, we can write a simple generator of BSTs
recursively: given some bounds on the values of the nodes,
if the bounds describe a nonempty interval, we flip a coin to decide whether to
generate a leaf or a node, and if it is a node, we recursively generate binary
search trees, following the inductive definition above.
We can similarly write a \emph{checker} for binary search trees as
a predicate.
\begin{minipage}{0.5\linewidth}
\begin{edoc}
genBST :: Int -> Int -> Gen Tree
genBST min max | max < min = return Leaf
genBST min max = do
  isLeaf' <- genBool 0.5
  if isLeaf' then return Leaf
  else do n <- choose (min, max)
          l <- genBST min (n-1)
          r <- genBST (n+1) max
          return (Node l n r)
\end{edoc}
\end{minipage}
\hspace{1em}
\begin{minipage}{0.5\linewidth}
\begin{edoc}
checkBST :: Int -> Int -> Tree -> Bool
checkBST min max Leaf = True
checkBST min max (Node l n r) =
     min <= n
  && n <= max
  && checkBST l
  && checkBST r
\end{edoc}
\vspace{2.7em}
\end{minipage}

\iffalse
\begin{code}
newtype G u v = G { unG :: (:*:) (Fwd Gen) (Bwd Maybe) u v }

generate :: G u v -> Gen v
generate = unFwd . pfst . unG

check :: G u v -> u -> Maybe v
check = unBwd . psnd . unG

mkG :: Gen v -> (u -> Maybe v) -> G u v
mkG generate check = G ((Fwd generate) :*: (Bwd check))
\end{code}
\fi
% OLD
% data G u v = G { generate :: Gen v, check :: u -> Maybe v }
%

\paragraph{Bigenerator}

A generator of values \texttt{v} and a predicate on \texttt{v}
(modelled by \lstinline{v -> Bool}) together define a bidirectional generator
with the same pre-view and view type, provided here by a smart
constructor: \lstinline{mkAlignedG}:
\begin{code}
mkAlignedG :: Gen v -> (v -> Bool) -> G v v
mkAlignedG gen check = mkG gen (\y -> if check y then Just y else Nothing)
\end{code}
Recall from Section~\ref{sec:generators} that
a bigenerator can be mapped to a predicate via \lstinline{toPredicate}:
\begin{code}
toPredicate :: G u v -> u -> Bool
toPredicate g x = isJust (check g x) where
  isJust (Just _) = True
  isJust Nothing  = False
\end{code}

\iffalse
\begin{code}
instance Functor (G u) where
  fmap f x = x >>= (return . f)

instance Monad (G u) where
  return x = G (return x)
  (G m) >>= k = G (m >>= (unG . k))

instance Profunctor G where
  comap :: (u -> Maybe u') -> G u' v -> G u v
  comap f (G py) = G (comap f py)
\end{code}
\fi

\noindent
We wrap two generator primitives as \lstinline{bool} and \lstinline{inRange}.
As predicates, \lstinline{bool} makes no assertion, \lstinline{inRange}
checks that the input integer is within the given range.

\iffalse
\begin{code}
genBool = undefined
\end{code}
\fi
\begin{minipage}{0.45\linewidth}
\begin{code}
bool :: Double -> G Bool Bool
bool p = mkAlignedG
  (genBool p)
  (\_ -> True)
\end{code}
\end{minipage}
\begin{minipage}{0.45\linewidth}
\begin{code}
inRange :: (Int, Int) -> G Int Int
inRange (min, max) = mkAlignedG
  (choose (min, max))
  (\x -> min <= x && x <= max)
\end{code}
\end{minipage}
\vspace{0.5em}

\noindent
We consider again a type of labelled trees,
with some field accessors. On the bottom right,
\lstinline{leaf} is a simple bigenerator for leaves.
\begin{code}
data Tree = Leaf | Node Tree Int Tree

nodeValue :: Tree -> Maybe Int
nodeValue (Node _ n _) = Just n
nodeValue _            = Nothing
\end{code}

\noindent
\begin{minipage}[b]{0.5\linewidth}
\begin{code}
nodeLeft, nodeRight :: Tree -> Maybe Tree
nodeLeft (Node l _ _) = Just l
nodeLeft _            = Nothing

nodeRight (Node _ _ r) = Just r
nodeRight _            = Nothing
\end{code}
\end{minipage}
\hspace{0.5em}
\begin{minipage}[b]{0.4\linewidth}
\begin{code}
isLeaf :: Tree -> Bool
isLeaf Leaf = True
isLeaf (Node _ _ _) = False

leaf :: G Tree Tree
leaf = mkAlignedG (return Leaf) isLeaf
\end{code}
\end{minipage}

\noindent
We then define a specification of binary search trees (\lstinline{bst} below).
A corresponding generator and predicate are extracted on the right from this
bigenerator:

\hspace{-2em}
\begin{minipage}[t]{0.6\linewidth}
\begin{code}
bst :: (Int, Int) -> G Tree Tree
bst (min, max) | min > max = leaf
bst (min, max) = do
  isLeaf' <- comap (Just . isLeaf) (bool 0.5)
  if isLeaf' then return Leaf
  else do
    n <- comap nodeValue (inRange (min, max))
    l <- comap nodeLeft  (bst (min, n - 1))
    r <- comap nodeRight (bst (n + 1, max))
    return (Node l n r)
\end{code}
\end{minipage}
%
%\vline{}\vspace{0.5em}
\hspace{0.5em}
\begin{minipage}[t]{0.45\linewidth}
\begin{code}
genBST :: Gen Tree
genBST =
  generate (bst (0, 20))

checkBST :: Tree -> Bool
checkBST =
  toPredicate (bst (0, 20))
\end{code}
\end{minipage}

\noindent
As a predicate, \lstinline{bst} first checks whether the root is a leaf (\lstinline{isLeaf});
returning a boolean allows us to reuse the same case expression as for the
generator. If it is a node, we check that the value is within the given
range and then recursively check the subtrees.

%
\iffalse
\paragraph{Shrinking}
\lnote{Actually, not sure...}
When a counterexample of a property is found during testing, it can be quite
large, making it difficult to trace the source of the error.
To address this issue, the counterexample can be \emph{shrinked} to a
minimal one. To ensure that shrinked values still belong to the set
represented by the generator, a possible approach is to augment the generator
so that it generates a value and its shrinkings at the same time.%
\footnote{For example, https://hackage.haskell.org/package/hedgehog}
Using our technique, we can also create a shrinker which is separate
from the generator, and which can be useful to simplify a counterexample found
via external means.
\fi

\fi

\end{document}